\tikzstyle{T}=[fill=black, draw=black, shape=circle, inner sep=0pt, minimum size=1mm]
\tikzstyle{VB}=[fill={rgb,255: red,236; green,255; blue,255}, draw=black, shape=circle, thick, minimum size=6mm, inner sep=0pt, font={\footnotesize}, shading=ball, ball color={{rgb,255: red,236; green,255; blue,255}}]
\tikzstyle{VG}=[fill=green, draw=black, shape=circle, thick, minimum size=6mm, inner sep=0pt, font={\footnotesize}, ball color=green]
\tikzstyle{VR}=[fill=red, draw=black, shape=circle, thick, minimum size=6mm, inner sep=0pt, font={\footnotesize}, ball color=red]
\tikzstyle{VY}=[fill=yellow, draw=black, shape=circle, thick, minimum size=6mm, inner sep=0pt, font={\footnotesize}, ball color=yellow]
\tikzstyle{VP}=[fill=pink, draw=black, shape=circle, thick, minimum size=6mm, inner sep=0pt, font={\footnotesize}, ball color=pink]
\tikzstyle{VS}=[fill=black, draw=black, shape=circle, thick, minimum size=6mm, inner sep=0pt, font={\footnotesize}, ball color=black]
\tikzstyle{VW}=[fill=white, draw=black, shape=circle, thick, minimum size=6mm, inner sep=0pt, font={\footnotesize}, ball color=white]
\tikzstyle{VBsmall}=[fill={rgb,255: red,236; green,255; blue,255}, draw=black, shape=circle, thick, minimum size=2mm, inner sep=0pt, font={\footnotesize}, shading=ball, ball color={{rgb,255: red,236; green,255; blue,255}}]
\tikzstyle{VGsmall}=[fill=green, draw=black, shape=circle, thick, minimum size=2mm, inner sep=0pt, font={\footnotesize}, ball color=green]
\tikzstyle{VRsmall}=[fill=red, draw=black, shape=circle, thick, minimum size=2mm, inner sep=0pt, font={\footnotesize}, ball color=red]
\tikzstyle{VYsmall}=[fill=yellow, draw=black, shape=circle, thick, minimum size=2mm, inner sep=0pt, font={\footnotesize}, ball color=yellow]
\tikzstyle{VPsmall}=[fill=pink, draw=black, shape=circle, thick, minimum size=2mm, inner sep=0pt, font={\footnotesize}, ball color=pink]
\tikzstyle{VSsmall}=[fill=black, draw=black, shape=circle, thick, minimum size=2mm, inner sep=0pt, font={\footnotesize}, ball color=black]
\tikzstyle{VWsmall}=[fill=white, draw=black, shape=circle, thick, minimum size=2mm, inner sep=0pt, font={\footnotesize}, ball color=white]
\tikzstyle{VBfaded}=[fill={rgb,255: red,236; green,255; blue,255}, draw=black, shape=circle, thick, minimum size=2mm, inner sep=0pt, font={\footnotesize}, opacity=0.4]
\tikzstyle{VGfaded}=[fill=green, draw=black, shape=circle, thick, minimum size=2mm, inner sep=0pt, font={\footnotesize}, opacity=0.4]
\tikzstyle{VRfaded}=[fill=red, draw=black, shape=circle, thick, minimum size=2mm, inner sep=0pt, font={\footnotesize}, opacity=0.4]
\tikzstyle{VYfaded}=[fill=yellow, draw=black, shape=circle, thick, minimum size=2mm, inner sep=0pt, font={\footnotesize}, opacity=0.4]
\tikzstyle{VPfaded}=[fill=pink, draw=black, shape=circle, thick, minimum size=2mm, inner sep=0pt, font={\footnotesize}, opacity=0.4]
\tikzstyle{VRBGsquare}=[top color={rgb,255: red,236; green,255; blue,255},  bottom color=green, middle color=red,draw=black, shape=rectangle, thick, minimum size=3mm, inner sep=0pt, font={\footnotesize}]
\tikzstyle{VBGsquare}=[top color={rgb,255: red,236; green,255; blue,255}, bottom color=green, draw=black, shape=rectangle, thick, minimum size=3mm, inner sep=0pt, font={\footnotesize}]
\tikzstyle{VGRsquare}=[top color=green, bottom color=red,draw=black, shape=rectangle, thick, minimum size=3mm, inner sep=0pt, font={\footnotesize}]
\tikzstyle{VRBsquare}=[top color=red, bottom color={rgb,255: red,236; green,255; blue,255}, draw=black, shape=rectangle, thick, minimum size=3mm, inner sep=0pt, font={\footnotesize}]
\tikzstyle{VBYsquare}=[top color={rgb,255: red,236; green,255; blue,255}, bottom color=yellow, draw=black, shape=rectangle, thick, minimum size=3mm, inner sep=0pt, font={\footnotesize}]
\tikzstyle{VGYsquare}=[top color=green, bottom color=yellow,draw=black, shape=rectangle, thick, minimum size=3mm, inner sep=0pt, font={\footnotesize}]
\tikzstyle{VRYsquare}=[top color=red, bottom color=yellow, draw=black, shape=rectangle, thick, minimum size=3mm, inner sep=0pt, font={\footnotesize}]
\tikzstyle{VBsquare}=[fill={rgb,255: red,236; green,255; blue,255}, draw=black, shape=rectangle, thick, minimum size=3mm, inner sep=0pt, font={\footnotesize}]
\tikzstyle{VGsquare}=[fill=green, draw=black, shape=rectangle, thick, minimum size=3mm, inner sep=0pt, font={\footnotesize}]
\tikzstyle{VRsquare}=[fill=red, draw=black, shape=rectangle, thick, minimum size=3mm, inner sep=0pt, font={\footnotesize}]
\tikzstyle{VYsquare}=[fill=yellow, draw=black, shape=rectangle, thick, minimum size=3mm, inner sep=0pt, font={\footnotesize}]
\tikzstyle{VBGfsquare}=[top color={rgb,255: red,236; green,255; blue,255}, bottom color=green, draw=black, shape=rectangle, ultra thick, minimum size=3mm, inner sep=0pt, font={\footnotesize}]
\tikzstyle{VGRfsquare}=[top color=green, bottom color=red,draw=black, shape=rectangle, ultra thick, minimum size=3mm, inner sep=0pt, font={\footnotesize}]
\tikzstyle{VRBfsquare}=[top color=red, bottom color={rgb,255: red,236; green,255; blue,255}, draw=black, shape=rectangle, ultra thick, minimum size=3mm, inner sep=0pt, font={\footnotesize}]
\tikzstyle{VBYfsquare}=[top color={rgb,255: red,236; green,255; blue,255}, bottom color=yellow, draw=black, shape=rectangle, ultra thick, minimum size=3mm, inner sep=0pt, font={\footnotesize}]
\tikzstyle{VGYfsquare}=[top color=green, bottom color=yellow,draw=black, shape=rectangle, ultra thick, minimum size=3mm, inner sep=0pt, font={\footnotesize}]
\tikzstyle{VRYfsquare}=[top color=red, bottom color=yellow, draw=black, shape=rectangle, ultra thick, minimum size=3mm, inner sep=0pt, font={\footnotesize}]
\tikzstyle{VBfsquare}=[fill={rgb,255: red,236; green,255; blue,255}, draw=black, shape=rectangle, ultra thick, minimum size=3mm, inner sep=0pt, font={\footnotesize}]
\tikzstyle{VGfsquare}=[fill=green, draw=black, shape=rectangle, ultra thick, minimum size=3mm, inner sep=0pt, font={\footnotesize}]
\tikzstyle{VRfsquare}=[fill=red, draw=black, shape=rectangle, ultra thick, minimum size=3mm, inner sep=0pt, font={\footnotesize}]
\tikzstyle{VYfsquare}=[fill=yellow, draw=black, shape=rectangle, ultra thick, minimum size=3mm, inner sep=0pt, font={\footnotesize}]
\tikzstyle{VRBGssquare}=[top color={rgb,255: red,236; green,255; blue,255},  bottom color=green, middle color=red,draw=black, shape=rectangle, thick, minimum size=2mm, inner sep=0pt, font={\footnotesize}]
\tikzstyle{VBGssquare}=[top color={rgb,255: red,236; green,255; blue,255}, bottom color=green, draw=black, shape=rectangle, thick, minimum size=2mm, inner sep=0pt, font={\footnotesize}]
\tikzstyle{VGRssquare}=[top color=green, bottom color=red,draw=black, shape=rectangle, thick, minimum size=2mm, inner sep=0pt, font={\footnotesize}]
\tikzstyle{VRBssquare}=[top color=red, bottom color={rgb,255: red,236; green,255; blue,255}, draw=black, shape=rectangle, thick, minimum size=2mm, inner sep=0pt, font={\footnotesize}]
\tikzstyle{VBYssquare}=[top color={rgb,255: red,236; green,255; blue,255}, bottom color=yellow, draw=black, shape=rectangle, thick, minimum size=2mm, inner sep=0pt, font={\footnotesize}]
\tikzstyle{VGYssquare}=[top color=green, bottom color=yellow,draw=black, shape=rectangle, thick, minimum size=2mm, inner sep=0pt, font={\footnotesize}]
\tikzstyle{VRYssquare}=[top color=red, bottom color=yellow, draw=black, shape=rectangle, thick, minimum size=2mm, inner sep=0pt, font={\footnotesize}]
\tikzstyle{VBssquare}=[fill={rgb,255: red,236; green,255; blue,255}, draw=black, shape=rectangle, thick, minimum size=2mm, inner sep=0pt, font={\footnotesize}]
\tikzstyle{VGssquare}=[fill=green, draw=black, shape=rectangle, thick, minimum size=2mm, inner sep=0pt, font={\footnotesize}]
\tikzstyle{VRssquare}=[fill=red, draw=black, shape=rectangle, thick, minimum size=2mm, inner sep=0pt, font={\footnotesize}]
\tikzstyle{VYssquare}=[fill=yellow, draw=black, shape=rectangle, thick, minimum size=2mm, inner sep=0pt, font={\footnotesize}]
\tikzstyle{VBcross}=[shape=cross out, draw=blue, thick, minimum size=1.5mm, inner sep=0pt]
\tikzstyle{VRcross}=[shape=cross out, draw=red, thick, minimum size=1.5mm, inner sep=0pt]
\tikzstyle{VYfadedcircle}=[fill=yellow, shape=cloud, minimum width=14mm, minimum height=14mm, inner sep=0pt, opacity=0.6]
\tikzstyle{VGfadedcircle}=[fill=green, shape=cloud, minimum width=10mm, minimum height=14mm, inner sep=0pt, opacity=0.4]
\tikzstyle{VBfadedcircle}=[fill=blue, shape=cloud, minimum width=14mm, minimum height=14mm, inner sep=0pt, opacity=0.4]
\tikzstyle{VRfadedcircle}=[fill=red, shape=cloud, minimum width=30mm, minimum height=40mm, inner sep=0pt, opacity=0.2]
\tikzstyle{VBpoint}=[shape=circle, fill=blue, draw=blue, thick, minimum size=1mm, inner sep=0pt]
\tikzstyle{VRpoint}=[shape=circle, fill=red, draw=red, thick, minimum size=1mm, inner sep=0pt]
\tikzstyle{Eout}=[->, >=latex]
\tikzstyle{Ein}=[<-, >=latex]
\tikzstyle{Enone}=[-]
\tikzstyle{Einout}=[<->, >=latex]
\tikzstyle{Eoutthick}=[->, >=latex, very thick]
\tikzstyle{Einthick}=[<-, >=latex, very thick]
\tikzstyle{Enonethick}=[-, very thick]
\tikzstyle{Einoutthick}=[<->, >=latex, very thick]
\tikzstyle{Enonedotted}=[-, dash dot]
\tikzstyle{Eoutd}=[->, >=latex, dotted, thin]
\tikzstyle{Eind}=[<-, >=latex, dotted, thin]
\tikzstyle{Enoned}=[-, dotted, thin]
\tikzstyle{Einoutd}=[<->, >=latex, dotted, thin]
\tikzstyle{ERnone}=[-, red]
\tikzstyle{EBnone}=[-, blue]
\tikzstyle{ERnonedotted}=[-, dotted, red, thin]
\tikzstyle{EBnonedotted}=[-, dotted, blue, thin]
\newtheorem{theorem}{Theorem}[section]
\newtheorem{lemma}[theorem]{Lemma}
\newtheorem{corollary}[theorem]{Corollary}
\newtheorem{definition}[theorem]{Definition}
\newcommand{\R}{\mathbb{R}}
\newcommand{\IN}{\mathbb{N}}
\newcommand{\IR}{\mathbb{R}}
\newcommand{\T}{\mathcal{T}}
\newcommand\G{\mathcal{G}}
\newcommand\LL{LL}
\newcommand\LLomega{\LL^{\omega}}
\newcommand\A{\mathcal{A}}
\newcommand\B{\mathcal{B}}
\newcommand\C{\mathcal{C}}
\newcommand\Comega{\mathcal{C}^{\omega}}
\newcommand\V{\mathcal{V}}
\DeclareMathOperator{\In}{In}
\newcommand{\cl}[1]{\overline{#1}}
\newcommand{\subcl}[1]{\widehat{#1}}
\newcommand{\bdtwo}[1]{\bd #1}
\DeclareMathOperator{\bd}{\partial}
\DeclareMathOperator{\bdin}{\partial in}
\DeclareMathOperator{\inte}{Int}
\DeclareMathOperator{\ob}{Ob}
\newcommand{\ho}{HO}
\newcommand{\dunif}{d_{\mathrm{u}}}
\newcommand{\dnonunif}{d_{\mathrm{nu}}}
\newcommand{\noarrow}{\text{---}}
\begin{document}

\title[Topological Characterization of Stabilizing Consensus]{A Topological Characterization of Stabilizing Consensus}

%%=============================================================%%
%% GivenName	-> \fnm{Joergen W.}
%% Particle	-> \spfx{van der} -> surname prefix
%% FamilyName	-> \sur{Ploeg}
%% Suffix	-> \sfx{IV}
%% \author*[1,2]{\fnm{Joergen W.} \spfx{van der} \sur{Ploeg} 
%%  \sfx{IV}}\email{iauthor@gmail.com}
%%=============================================================%%

\author*[1]{\fnm{Ulrich} \sur{Schmid}}\email{s@ecs.tuwien.ac.at}

\author[1]{\fnm{Stephan} \sur{Felber}}\email{stephan.felber@tuwien.ac.at}

\author[1,2]{\fnm{Hugo} \sur{Rincon Galeana}}\email{hugorincongaleana@gmail.com}

\affil*[1]{\orgdiv{Embedded Computing Systems Group E191-02}, \orgname{TU Wien}, \orgaddress{\street{Treitlstrasse 1--3}, \city{Vienna}, \postcode{1040}, \country{Austria}}}

%\affil*[1]{\orgname{TU Wien}}

\affil[2]{\orgdiv{Internet Architecture and Management}, \orgname{TU Berlin}, \orgaddress{\street{Einsteinufer 17}, \city{Berlin}, \postcode{10587}, \country{Germany}}}

%\affil[2]{\orgname{TU Berlin}}

%%==================================%%
%% Sample for unstructured abstract %%
%%==================================%%

\abstract{
We provide a complete characterization of the solvability/impossibility of 
deterministic stabilizing consensus in any computing model with benign
process and communication faults using point-set topology. Relying on the 
topologies for infinite executions introduced by Nowak, Schmid and Winkler 
(JACM, 2024) for terminating consensus, we prove that semi-open
decision sets and semi-continuous decision functions as introduced 
by Levin (AMM, 1963) are the appropriate means for this characterization:
Unlike the decision functions for terminating consensus, which are
continuous, semi-continuous functions do not require the inverse image of an open set to be open
and hence allow to map a connected space to a disconnected one. 
We also show that multi-valued stabilizing consensus with weak and strong validity 
are equivalent, as is the case for terminating consensus. By applying our
results to (variants of) all the known possibilities/impossibilities for stabilizing 
consensus, we easily provide a topological explanation of these results.
}

\keywords{Distributed computing, point-set topology, stabilizing consensus, impossibility results}

%%\pacs[JEL Classification]{D8, H51}

%%\pacs[MSC Classification]{35A01, 65L10, 65L12, 65L20, 65L70}

\maketitle

\section{Introduction}
\label{sec:intro}

A substantial share of distributed computing research has been devoted to terminating tasks like 
consensus, where every process is given some input value, and must locally
compute some output value and then terminate. Still, there are
also distributed computing problems that cannot be described by such terminating
tasks. Apart from self-stabilizing algorithms \cite{Dol00}, which can recover from arbitrarily
corrupted states, there are also tasks where the processes are allowed to continuously update 
their output values.
Canonical examples are asymptotic consensus \cite{BT89,FNS21:JACM}, stabilizing consensus \cite{AFJ06,CM19:DC}, and
approximate agreement \cite{DLPSW86,CFN15:ICALP}. 
A number of services in practical distributed systems, like
clock synchronization \cite{LL88}, \cite{WS07:DC} and sensor fusion
\cite{BS92}, can be built atop of such non-terminating tasks.

Unlike asymptotic consensus \cite{BT89,Mor05:TAC,HB05,HOT09,HOT11,FNS21:JACM}
and approximate agreement \cite{DLPSW86,Fek90,BT89,Fek94,AAD04,AK00,AK02,CFN15:ICALP,MHVG15,FN18:DISC,FNS21:JACM}, which have been studied in various
computing models and are hence fairly well-understood, not much is known
about stabilizing consensus \cite{AFJ06,CM19:DC,SS21:SSS,FR24:arxiv}.
In stabilizing consensus, processes
need to agree on a common decision value only eventually and do not have
to decide irrevocably, i.e., may change their decisions arbitrarily
before eventually stabilizing to the common value. 
As a straightforward relaxation of terminating consensus,
stabilizing consensus is of particular interest also for
the theory of distributed computing, namely, for studying the
solution power of distributed computing models. For example,
whereas terminating deterministic consensus is impossible to solve in the
synchronous lossy-link model \cite{SW89,SWK09,CGP15}
or in asynchronous systems in the presence of just a single crash fault
\cite{FLP85}, for example, deterministic stabilizing consensus can be
solved in those
models.

More specifically, stabilizing consensus algorithms
for asynchronous systems with
fair-lossy links were given in \cite{AFJ06}, both for crash faults
and byzantine process faults \cite{AFJ06}. 
Solution algorithms for synchronous dynamic
networks controlled by message adversaries \cite{AG13}
were proposed in \cite{CM19:DC}, where Charron-Bost and Moran introduced
the strikingly simple anonymous MinMax algorithms. In particular, as
argued in \cite{FR24:arxiv}, a simple MinMax algorithm can be used to solve stabilizing consensus
in the \emph{lossy-link model} (LL) \cite{SW89,SWK09,CGP15}, where two processes
are connected by a a pair of directed links that may lose at most one message
in every round. 
In \cite{SS21:SSS}, Schmid and Schwarz developed a stabilizing consensus
algorithm for processes with unique ids for the vertex-stable root message
adversary studied in \cite{WSS19:DC}, by stripping-off the termination 
code from a terminating consensus algorithm.

Even less is known regarding impossibility results for
stabilizing consensus.
In \cite{AFJ06}, it was shown that stabilizing consensus is impossibe to
solve in any computing model with byzantine faults and anonymous processes.
In \cite{CM19:DC}, a partitioning argument was used to prove that 
the problem cannot be solved deterministically
in synchronous systems controlled by
a message adversary if the latter does not guarantee a non-empty
kernel, i.e., at least one process that can reach all other processes
(possibly via multiple hops) infinitely often. The, to the best of
our knowledge, first non-obvious impossibility result for deterministic
stabilizing consensus has been established recently by Felber and Rincon Galeana
in \cite{FR24:arxiv}. The authors showed that the problem cannot
be solved in the \emph{delayed lossy-link model} (DLL), where the links
between the two processes may also lose both messages in a round,
provided this happens at most \emph{finitely} often. Note that this
result negatively answered the question raised in \cite{CM19:DC}, namely,
whether a non-empty kernel alone is also \emph{sufficient} for solving stabilizing consensus.

\medskip

In this paper, we provide a complete characterization of the
solvability/impossibility of deterministic stabilizing consensus, in
any model of computation with benign process and communication faults,
using point-set topology as introduced by Alpern and Schneider in \cite{AS84}.
In more detail:
\begin{enumerate}
\item[(1)] Relying on the topologies for infinite
executions introduced by Nowak, Schmid and Winkler 
\cite{NSW24:JACM,NSW19:PODC}, we prove that semi-open
sets and semi-continuous functions as proposed by Levine in \cite{Lev63}
(already\footnote{Since we are not aware of any earlier usage of
Levine's results
in the distributed computing context, our findings constitute another
instance of the utility of purely theoretical (here: mathematical)
research for eventually advancing science in more applied areas.} in 1963) are the appropriate means for this characterization. Unlike the decision 
functions for terminating consensus, which are
continuous, they do not require the inverse image of an open set to be open
and hence allow to map a connected space to a disconnected one.
Since ``offending'' limit points do not need to be excluded from the set of
admissible executions here, this explains why stabilizing consensus is
solvable in models where terminating consensus is impossible.
\item[(2)] We show that, as in the case of terminating consensus, 
stabilizing consensus with weak and
strong validity are equivalent. 
\item[(3)] We demonstrate the power of
our novel topological characterization by applying it to (variants of) all the 
possibility and impossibility results for stabilizing consensus known so far,
also including a new one.
\end{enumerate}

Our paper is organized as follows: In \cref{sec:general:model},
we provide the cornerstones of the generic system model introduced
in \cite{NSW24:JACM} and the variants of stabilizing consensus
considered in this paper. \cref{sec:basictopology} briefly recalls some point-set topology basic
terms and restates the most relevant results from \cite{Lev63}, \cref{sec:terminatingconsensus} 
provides the cornerstones of the topological characterization of terminating 
consensus \cite{NSW24:JACM} needed for our results.
In \cref{sec:characterization}, we provide our topological characterization
for stabilizing consensus, \cref{sec:broadcastability} is devoted to the
equivalence of multi-valued stabilizing consensus with weak and strong validity.
Finally, in \cref{sec:applications}, we apply our topological characterization
to various possibility/impossibility results. Some conclusions in
\cref{sec:conclusions} round-off our paper. 

\section{System Model}\label{sec:general:model}

We use the system model introduced in \cite{NSW24:JACM}, which
targets distributed message passing or shared memory
systems made up of a set of $n$ deterministic processes $\Pi$,
%We don't need UIDs, I think ...
%with unique identifiers, 
taken 
from $[n]=\{1,\dots,n\}$ for simplicity. 
%For simplicity, we use $\Pi$ and $[n]$ synonymously.
Individual processes are denoted by letters $p$, $q$, etc.
We assume that all processes start their execution at
the same time $t=0$, in some initial configuration that also
contains some initial values. In order to be able to also model a
process $p$ that becomes \emph{active} at some later time,
we allow $p$ to remain \emph{passive} until some time $t^a_p$.
While being passive, all computing steps $p$ might have to execute (e.g.,
in a lock-step synchronous computing model) are no-operation
steps (that do not change $p$'s state at all),
and all messages possibly arriving at $p$ are lost without a trace.

We restrict our attention to \emph{full-information executions}, in which
processes continuously relay all the information they gathered to
all other processes, and eventually apply some local decision function.
The exchanged information includes the process's initial value, but also, more
importantly, a record of all events (message receptions, shared memory
readings, object invocations, \dots) witnessed by
the process.
Our general system model is hence applicable whenever no constraints are
placed on the size of the local memory and the size of values to be
communicated (e.g., message/shared-register size).
In particular, it is applicable to classical synchronous and asynchronous
message-passing and shared-memory models with benign\footnote{Whereas our
topological modeling would also allow byzantine processes, we deliberately
restricted our attention to benign ones, where processes can at most commit
send and/or receive omissions or crashes: Byzantine processes would require
different validity conditions, the detailed treatment of which 
would blow up our paper considerably, without adding anything substantially new.} 
process and communication faults, see \cite[App.~A]{NSW24:JACM} for details.

Formally, an execution is a sequence of (full-information) configurations that
represent global system states.
For every process $p\in\Pi$, there is an equivalence relation~$\sim_p$ on the
set~$\C$ of configurations---the
{$p$-indistinguishability relation}---indicating whether process~$p$ can
locally distinguish two configurations, i.e., if it
has the same \emph{view} $V_p(C)=V_p(D)$ in~$C$ and~$D$.
In this case we write $C\sim_p D$.
Note that two configurations
that are indistinguishable for all processes need not be equal.
In fact, configurations usually include some state of the communication media
that is not accessible to any process.

In addition to the indistinguishability relations, we assume the existence of a
function $\ob:\C\to2^\Pi$ that specifies the set of \emph{obedient} processes in a
given configuration. Obedient processes must follow the algorithm and satisfy
the task specification. Typically, $\ob(C)$ is the set of non-faulty processes,
%(which is inevitable in the case of byzantine faults, as one cannot constrain such processes at all)
but could be extended to also include certain faulty processes, like send omission faulty ones
\cite{BSW11:hyb,WS07:DC}.
Again, the information who is obedient in a given configuration
is usually not accessible to the processes.
We make the restriction that disobedient processes cannot
recover and become obedient again, 
i.e., that $\ob(C) \supseteq \ob(C')$ if~$C'$ is reachable from~$C$.
We extend the obedience function to the set $\Sigma\subseteq\C^\omega$ of
\emph{admissible executions} in a given model by
setting 
$\ob:\Sigma \to 2^\Pi$,
$\ob(\gamma) = \bigcap_{t\geq 0} \ob(C^t)$
where $\gamma = (C^t)_{t\geq 0}$.
Here, $t \in \IN_0 = \IN \cup \{0\}$ denotes a notion
of \emph{global} time that is not accessible to the processes.
Consequently, a process is obedient in an 
%\footnote{Since we will only consider admissible executions, we will just use the 
%term execution in the sequel.}
execution if it is obedient in all of its
configurations.
We further make the restriction that there is at least one obedient process
in every execution, i.e., that $\ob(\gamma)\neq \emptyset$ for all
$\gamma\in\Sigma$.

We also assume that every process has the possibility to weakly count the steps it
has taken.
Formally, we assume the existence of weak clock functions
$\chi_p:\C\to\IN_0$
such that for every execution 
$\delta = (D^t)_{t\geq0}\in\Sigma$
and every configuration $C\in\C$,
the relation $C\sim_p D^t$ implies $t\geq \chi_p(C)$. 
Additionally, we assume that $\chi_p(D^t)\to\infty$ as $t\to\infty$ for every
execution $\delta\in\Sigma$ and every obedient process $p\in\ob(\delta)$.
$\chi_p$ hence ensures
that a configuration $D^t$ where $p$ has some specific view $V_p(D^t)=V_p(C)$ 
cannot occur before time $t=\chi_p(C)$ in any execution $\delta$.
Our weak clock functions hence allow to model lockstep synchronous
rounds by choosing
$\chi(D^t)=t$ for any execution $\delta = (D^t)_{t\geq0}\in\Sigma$,
but are also suitable for modeling non-lockstep, even asynchronous, executions,
see \cite[App.~A]{NSW24:JACM} for the implementation details.

We assume that the initial configuration $C^0$ of any execution $\gamma=(C^t)_{t\geq0}$ 
contains, for every process $p \in \Pi$,
a local \emph{input value} $I_p=I_p(\gamma) \in \V$ (also called \emph{initial value})
taken from some finite set $\V$, collectively termed \emph{input assignment} $I(\gamma)$. Moreover, every configuration
also contains a local \emph{output value}
$O_p \in \V \cup\{\perp\}$ (also called \emph{decision value}). 
Note that $I_p$ and $O_p$
are locally accessible for $p$ only, i.e., each process only knows
its own initial value (and those it has heard from in the execution),
and $O_p=\perp\not\in \V$ is used to represent the fact that $p$ has not decided 
yet. We make the additional assumption that all input assignments with 
values taken from $\V$ are possible, as formalized in 
the \emph{independent arbitrary input assignment condition} in 
\cref{def:independentinputassumption}.
A \emph{decision algorithm} is a collection of functions 
$\Delta_p:\C \to \V\cup\{\perp\}$ for $p \in \Pi$,
such that $\Delta_p(C) = \Delta_p(D)$ if $C \sim_p D$, i.e., decisions 
depend on local information only and are deterministic. 

We start with the specification of non-uniform and uniform consensus 
\cite{FLP85,CBS04} with weak or strong validity.
Decision functions for terminating consensus have the additional property
$\Delta_p(C') = \Delta_p(C)$ if~$C'$ is reachable from~$C$ and $\Delta_p(C)\neq \perp$, 
i.e., decisions are irrevocable. Since process~$p$ thus has at most one decision value 
in an execution, one can extend the domain of decision functions from configurations to executions by setting
$\Delta_p:\Sigma\to\V\cup\{\perp\}$, $\Delta_p(\gamma) = \lim_{t\to\infty} \Delta_p(C^t)$ 
where $\gamma = (C^t)_{t\geq0}$.
We say that~$p$ has decided value $v\neq\perp$ in configuration~$C$ resp.\
execution~$\gamma$ if $O_p(C)=\Delta_p(C)=v$ resp.\ $O_p(\gamma)=\Delta_p(\gamma)=v$.

\begin{definition}[Non-uniform and uniform terminating consensus]\label{def:consensus}
A \emph{non-uniform terminating consensus} algorithm $\A$ is a decision
algorithm that ensures
the following properties in all of its admissible executions: 
\begin{enumerate}
\item[(T)] Eventually, every obedient process must irrevocably decide.
(Termination)
\item[(A)] If two obedient processes have decided, then their decision values
are equal. (Agreement)
\item[(V)] If the initial values of the processes are all equal to~$v$, then~$v$ is the only possible decision value. (Validity)
\end{enumerate}
In a \emph{strong terminating consensus} algorithm $\A$, weak validity (V) is replaced by
\begin{enumerate}
\item[(SV)] The decision value must be the input value of some process. (Strong Validity)
\end{enumerate}
A \emph{uniform terminating consensus} algorithm $\A$ must ensure (T), (V) or (SV), and
\begin{enumerate}
\item[(UA)] If two processes have decided, then their decision values
are equal. (Uniform Agreement)
\end{enumerate}
\end{definition}

The specification of the corresponding variants of stabilizing 
consensus is given in \cref{def:sconsensus}. Irrevocability and the 
non-decided output value $\perp$ are dropped here, i.e., a process
$p$ may change $O_p \in \V$ finitely many times, albeit it needs to stabilize to 
some common decision value $O_p = \Delta_p(\gamma)=v \in \V$ eventually. 
Again, we will say that~$p$ has decided value $v\neq\perp$ in configuration~$C$ 
resp.\ in execution $\gamma$ if $O_p(C)=\Delta_p(C)=v$ resp.\ $O_p(\gamma)=\Delta_p(\gamma)=v$.

\begin{definition}[Stabilizing consensus]\label{def:sconsensus}
A \emph{stabilizing consensus} algorithm $\A$ is a decision
algorithm that ensures the following properties in all of its admissible executions: 
\begin{enumerate}
\item[(SA)] Eventually, all obedient processes must output the same decision 
value $v \in \V$. (Stabilizing Agreement)
\item[(V)] If the initial values of processes are all equal to~$v$, then~$v$ is the only possible decision value. (Validity)
\end{enumerate}
In a \emph{strong stabilizing consensus} algorithm $\A$, weak validity (V) is replaced by
\begin{enumerate}
\item[(SV)] The decision value must be the input value of some process. (Strong Validity)
\end{enumerate}
\end{definition}

Note that, for terminating consensus, considering Uniform Agreement (UA)
makes sense, since it forces a process who manages to set 
$O_p\neq \perp$ to assign the common decision value when doing so
in that operation---even when $p$ is non-obedient
and thus may crash later on, for example. For stabilizing consensus,
however, there is no meaningful uniform variant of Stabilizing Agreement
(SA): $\perp$ is not used, and it would not make sense to restrict the
choices for $O_p$ for a non-obedient process at any finite time.

For both terminating and stabilizing consensus, (A) resp.\ (SA) 
and the fact that every execution has at least one obedient process allow us to 
define a \emph{global decision function} $\Delta : \Sigma \to \V$, by setting 
$\Delta(\gamma) = \Delta_p(\gamma)$ for an arbitrary process $p$ that is obedient in 
execution~$\gamma$, i.e., $p\in \ob(\gamma)$.

\section{Basics Point-Set-Topology}
\label{sec:basictopology}

We first recall briefly the basic topological notions that are needed for our
exposition; an in-depth treatment can be found in~\cite{Munkres}, for example.

A topology on a set $X$ is a family $\T$ of subsets of $X$
such that $\emptyset \in \T$, $X \in \T$, and $\T$ contains all arbitrary
unions as well as all finite intersections of its members.
We call $X$ endowed with $\T$, often written as $(X, \T)$, a topological
space and the members of $\T$ open sets. For example, if $X$ is endowed
with the discrete topology, every subset $A \subseteq X$ is open.
The complement of an open set is called closed and
sets that are both open and closed, such as $\emptyset$ and $X$ itself, are called clopen.
A topological space is disconnected,
if it contains a nontrivial clopen set, which means that it it can
be partitioned into two disjoint open sets. It is connected
if it is not disconnected.
A space $X$ is called compact if every family of open sets that covers~$X$
contains a finite sub-family that covers~$X$.
A point $x \in X$ in a topological space $X$ is an isolated point if the singleton
set $\{x\}$ is open.

Given $A \subseteq X$, the closure $\cl{A}$ of $A$ is the intersection of all closed
sets containing $A$. The interior $\inte(A)$ of $A$ is the union of all open
sets contained in $A$. The boundary of $A$ is defined as $\bd A=\cl{A}-\inte(A)$.
A set $A$ is called nowhere dense in $X$, if $\inte(\cl{A})=\emptyset$. The closure
of a nowhere dense set, every subset of a nowhere dense set, and
a finite union of nowhere dense sets can be shown to be nowhere dense. 
Nowhere dense sets can also be characterized by the following property:

\begin{lemma}\label{lem:isolationdefinition} A set $A$ of a topological space $X$
is nowhere dense in $X$ if and only if every open set $U$ contains a non-empty open
subset $V\subseteq U$ satisfying $V\cap A = \emptyset$.
\end{lemma}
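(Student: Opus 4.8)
The plan is to unfold the definition of nowhere dense, namely $\inte(\cl{A}) = \emptyset$, and to relate it to the stated covering-type property through the standard neighborhood characterization of closure points: a point $x$ lies in $\cl{A}$ precisely when every open neighborhood of $x$ meets $A$. Throughout I read the open set $U$ in the statement as \emph{non-empty}, since $U = \emptyset$ contains no non-empty open subset and the claim is vacuous there. I would then establish the two implications separately.

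For the direction ($\Rightarrow$), I would assume $\inte(\cl{A}) = \emptyset$ and take an arbitrary non-empty open set $U$. The first step is to show $U \not\subseteq \cl{A}$: otherwise $U$ would be an open subset of $\cl{A}$, hence $U \subseteq \inte(\cl{A}) = \emptyset$, contradicting $U \neq \emptyset$. Thus there is a point $x \in U \setminus \cl{A}$. Setting $V := U \cap (X \setminus \cl{A})$, which is open because $\cl{A}$ is closed, yields a non-empty (it contains $x$) open subset of $U$ with $V \cap \cl{A} = \emptyset$, and therefore $V \cap A = \emptyset$ since $A \subseteq \cl{A}$. This is exactly the asserted property.

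For the direction ($\Leftarrow$), I would argue by contradiction: suppose the covering property holds but $W := \inte(\cl{A}) \neq \emptyset$. Then $W$ is a non-empty open set, so by hypothesis it contains a non-empty open subset $V \subseteq W$ with $V \cap A = \emptyset$. Picking any $x \in V$, we have $x \in V \subseteq W \subseteq \cl{A}$, so $x \in \cl{A}$, meaning every open neighborhood of $x$ meets $A$; but $V$ is such a neighborhood and $V \cap A = \emptyset$, a contradiction. Hence $\inte(\cl{A}) = \emptyset$, i.e.\ $A$ is nowhere dense.

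I do not expect a genuine obstacle here, as the argument is a routine unwinding of the definitions of interior, closure, and nowhere density. The only points needing care are the empty-set convention for $U$ noted above, and ensuring that the neighborhood characterization in the reverse direction is applied to the closure of $A$ (not of $V$), so that the disjointness $V \cap A = \emptyset$ genuinely clashes with $x \in \cl{A}$.
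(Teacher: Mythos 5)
Your proof is correct, and both directions are carried out cleanly; the observation that $U=\emptyset$ must be excluded (or the claim read as ranging over non-empty open sets) is the right reading of the statement. The paper itself states this lemma without proof, as a standard characterization of nowhere dense sets, so there is no argument to compare against; what you give is the routine unwinding of $\inte(\cl{A})=\emptyset$ that the authors evidently had in mind, with the key step in the forward direction being that $U\subseteq\cl{A}$ would force $U\subseteq\inte(\cl{A})$, and in the reverse direction that a non-empty open $V\subseteq\inte(\cl{A})$ disjoint from $A$ contradicts $V\subseteq\cl{A}$.
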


%US: The following lemma is WRONG - I took it from stackexchange (posted by velut
% luna May 2016):
%\begin{lemma}\label{lem:isolationdefinition} A set $A$ of a topological space $X$
%is nowhere dense in $X$ if and only if for every $x\in X$ there is some open
%set $U$ with $x\in U$ such that $(U -\{x\})\cap A = \emptyset$.
%\end{lemma}

For a space $X$, if $A \subseteq X$, we call $x$ a limit point of~$A$, if
any open set $U$ containing $x$ intersects $A$ in a point different from 
$x$; this is equivalent to requiring that $x$ belongs to the closure 
of $A \setminus \{ x \}$. The set of all
limit points of $A$ is denoted $A'$, which is also called the derived
set of $A$. It can be shown that 
$\cl{A}=A \cup A' = A \cup \bd A $. Note that $\bd A$ may also contain
points in $A$, namely, ones that are neither interior nor limit points
(in particular, isolated points).

Given a space $(X, \T)$, $Y \subseteq X$ is called a
subspace of $X$ if $Y$ is equipped with the subspace topology
$\{ Y \cap U \mid U \in \T \}$. For a set $A\subseteq Y$, the
closure $\subcl{A}$ of $A$ in $Y$ satisfies $\subcl{A}=
Y \cap \cl{A}$.

A function from space $X$ to space $Y$ is continuous if the pre-image of every
open set in $Y$ is open in $X$.

If~$X$ is a nonempty set, then
we call any function $d:X\times X\to \R_+$ a \emph{distance function} on~$X$.
Many topological spaces are defined by metrics, i.e., symmetric definite
distance functions for which the triangle inequality holds.
For a distance function to define a (potentially non-metrizable) topology
though, no additional assumptions are necessary:
One can define $\T_d \subseteq 2^X$ by setting
$U\in\T$ if and only if
for all $x\in U$ there exists some $\varepsilon > 0$ such that
$B_\varepsilon(x) = 
\{ y \in X \mid d(x,y) < \varepsilon \}
\subseteq
U$.

We recall that the product topology on
a product space $\Pi_{\iota\in I}
X_\iota$ of topological spaces is defined as the coarsest topology such that
all projections $\pi_i:\Pi_{\iota\in I}X_\iota\to X_i$ are continuous.
It turns out that the product topology on the space~$\C^\omega$, i.e., 
infinite sequences of configurations making up executions, is induced
by a distance function whose form is known in a special case that covers
our needs:

\begin{lemma}\label{lem:pseudosemi:product}
Let~$d$ be a distance function on~$X$ that only takes the values~$0$ or~$1$.
Then the product topology of $X^\omega$, where every copy of~$X$ is endowed
with the topology induced by~$d$, is induced by the distance function $d_d$
defined as
\begin{equation}
d_d: X^\omega \times X^\omega \to \R
\quad,\quad
d_d(\gamma,\delta)=
2^{-\inf\{t\geq0\mid d(C^t,D^t) > 0\}}
\label{eq:defminmetric}
\end{equation}
where $\gamma = (C^t)_{t\geq0}$ and $\delta = (D^t)_{t\geq0}$.
\end{lemma}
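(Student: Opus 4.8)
The plan is to show that the topology $\T_{d_d}$ induced by $d_d$ and the product topology, which I denote $\T_{\mathrm{prod}}$, coincide, by proving both inclusions. The first step is to read off the shape of the $d_d$-balls. Writing $N(x)=\{y\in X\mid d(x,y)=0\}$ for the $d$-zero-set of $x$, the defining formula \eqref{eq:defminmetric} gives, for $\gamma=(C^t)_{t\geq0}$ and $\tau\geq0$,
\[
B_{2^{-\tau}}(\gamma)=\{\delta=(D^t)_{t\geq0}\mid d(C^t,D^t)=0 \text{ for } 0\leq t\leq\tau\}=\bigcap_{t=0}^{\tau}\pi_t^{-1}\bigl(N(C^t)\bigr),
\]
since $d_d(\gamma,\delta)<2^{-\tau}$ is equivalent to the first index at which $C$ and $D$ are $d$-positive being larger than $\tau$. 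Thus a $d_d$-ball constrains the first $\tau+1$ coordinates up to $d$-distance $0$ and leaves the tail entirely free.

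For the inclusion $\T_{\mathrm{prod}}\subseteq\T_{d_d}$ I would show that every subbasic open set $\pi_t^{-1}(U)$, with $U$ open in $(X,\T_d)$, is $d_d$-open. If $\gamma\in\pi_t^{-1}(U)$ then $C^t\in U$, and openness of $U$ forces $N(C^t)\subseteq U$; hence $B_{2^{-t}}(\gamma)\subseteq\pi_t^{-1}(U)$ by the formula above. Since the $d_d$-open sets form a topology that contains a subbasis of $\T_{\mathrm{prod}}$, this direction follows at once.

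The reverse inclusion $\T_{d_d}\subseteq\T_{\mathrm{prod}}$ is where I expect the real work. Given a $d_d$-open $U$ and $\gamma\in U$, fix $\tau$ with $B_{2^{-\tau}}(\gamma)\subseteq U$; I want a product-basic open neighbourhood of $\gamma$ inside $U$. The naive candidate is the ball itself, but the main obstacle is that $N(C^t)$ need \emph{not} be open in $\T_d$: without a triangle inequality, $d(\cdot,\cdot)=0$ need not be transitive, so a point of $N(C^t)$ may carry its own zero-set outside $N(C^t)$. I would circumvent this with a single-coordinate re-routing observation that follows directly from $d_d$-openness: if $\delta\in U$ and $\delta'$ agrees with $\delta$ in all coordinates except one index $t_0$ where $d(\delta_{t_0},\delta'_{t_0})=0$, then $d(\delta_t,\delta'_t)=0$ for \emph{all} $t$, so $d_d(\delta,\delta')=0$ and $\delta'$ lies in every $d_d$-ball about $\delta$, in particular in one contained in $U$; hence $\delta'\in U$. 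Iterating this move along finite $d$-zero chains $C^t=y_0,y_1,\dots,y_k$ shows that $U$ is stable under replacing each early coordinate by any element of the minimal open neighbourhood $N^*(C^t)$, i.e.\ the smallest open set of $\T_d$ containing $C^t$, which is exactly the transitive closure $\bigcup_k N^k(C^t)$ and is open. Concretely, starting from the tail-adjusted point $(C^0,\dots,C^\tau,D^{\tau+1},D^{\tau+2},\dots)\in B_{2^{-\tau}}(\gamma)\subseteq U$ and rewriting coordinates $0,\dots,\tau$ one at a time proves $\bigcap_{t=0}^{\tau}\pi_t^{-1}(N^*(C^t))\subseteq U$. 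This set is a product-basic open neighbourhood of $\gamma$, so $U$ is a union of product-open sets and hence product-open.

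Finally, I would record that the re-routing is only needed in full generality: whenever $d(\cdot,\cdot)=0$ is an equivalence relation --- the case relevant to us, where $d(C,D)=0$ encodes global indistinguishability of configurations --- each $N(C^t)=N^*(C^t)$ is already clopen, the balls are genuine basic cylinder sets, and both inclusions collapse to the standard identification of the ``first-difference'' metric topology with the product topology on a sequence space.
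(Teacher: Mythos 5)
The paper states this lemma without proof (it is imported from the topological framework of Nowak--Schmid--Winkler), so there is no in-paper argument to compare against. Judged on its own, your proof is essentially correct and is in fact more careful than the standard ``first-difference distance induces the product topology'' argument, because it confronts the genuine difficulty that, for a mere distance function, the sets $N(x)=\{y\mid d(x,y)=0\}$ need not be open in $\T_d$, so the $d_d$-balls need not be basic product-open cylinders. Both directions are sound: the easy inclusion via $N(C^t)\subseteq U$ for every $\T_d$-open $U\ni C^t$, and the reverse inclusion via the single-coordinate re-routing argument, which correctly identifies $N^*(x)=\bigcup_k N^k(x)$ as the minimal $\T_d$-open neighbourhood of $x$ (the topology induced by a $\{0,1\}$-valued distance function is Alexandrov) and concludes $\bigcap_{t=0}^{\tau}\pi_t^{-1}\bigl(N^*(C^t)\bigr)\subseteq U$.

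One caveat you should make explicit: both the membership $(C^0,\dots,C^\tau,D^{\tau+1},\dots)\in B_{2^{-\tau}}(\gamma)$ and the re-routing step ``$\delta'$ agrees with $\delta$ off $t_0$, hence $d(\delta_t,\delta'_t)=0$ for all $t$'' tacitly assume $d(x,x)=0$. The paper's notion of distance function does not guarantee this (indeed, the configuration-level $d_p$ has $d_p(C,C)=1$ whenever $p\notin\ob(C)$), and without reflexivity the lemma itself fails: if $N(C^0)=\emptyset$, then every $d_d$-ball around $\gamma$ of radius at most $1$ is empty, so the singleton $\{\gamma\}$ is $d_d$-open, while singletons are in general not product-open. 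So reflexivity is not a gap you could close by a cleverer argument but a hypothesis that both you and the lemma need; once it is stated, your argument is complete.
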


Equipped with these prerequisites, we can re-state some key results
of Levine's study \cite{Lev63} of semi-open sets and semi-continuous functions,
which will turn out to be a perfect fit for characterizing stabilizing
consensus.

\begin{definition}[Semi-open sets {\cite[Def.~1]{Lev63}}]\label{def:semiopenset} A set A in a topological space $X$ will be termed
semi-open (written s.o.) if and only if there exists an open set $O$ such that $O \subseteq A 
\subseteq \cl{O}$.
\end{definition}

\begin{theorem}[{\cite[Thm.~2]{Lev63}}]\label{thm:arbunionso} Let $\{A_\alpha\}_{\alpha \in \Delta}$ be a
collection of s.o.\  sets in a topological space $X$. Then $\bigcup_{\alpha_\in\Delta} A_\alpha$
is s.o.
\end{theorem}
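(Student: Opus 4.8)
The plan is to produce the witnessing open set for the union directly from the witnesses supplied by the definition of semi-openness for each member. By \cref{def:semiopenset}, for each $\alpha \in \Delta$ there exists an open set $O_\alpha$ with $O_\alpha \subseteq A_\alpha \subseteq \cl{O_\alpha}$. The natural candidate witness for $\bigcup_{\alpha\in\Delta} A_\alpha$ is $O := \bigcup_{\alpha\in\Delta} O_\alpha$, which is open because a topology is closed under arbitrary unions. It then remains to verify the two inclusions $O \subseteq \bigcup_{\alpha\in\Delta} A_\alpha \subseteq \cl{O}$ required by \cref{def:semiopenset}.

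First I would establish the lower inclusion. Since $O_\alpha \subseteq A_\alpha$ for every $\alpha$, taking unions on both sides immediately yields $O = \bigcup_\alpha O_\alpha \subseteq \bigcup_\alpha A_\alpha$. For the upper inclusion I would use monotonicity of the closure operator: because $O_\alpha \subseteq O$ for each $\alpha$, and closure is monotone (as the intersection of all closed supersets), we get $\cl{O_\alpha} \subseteq \cl{O}$ for every $\alpha$. Combining this with the defining inclusion $A_\alpha \subseteq \cl{O_\alpha}$ gives $A_\alpha \subseteq \cl{O}$ for each $\alpha$, and hence $\bigcup_\alpha A_\alpha \subseteq \cl{O}$. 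Chaining the two inclusions shows $O \subseteq \bigcup_\alpha A_\alpha \subseteq \cl{O}$, so $\bigcup_\alpha A_\alpha$ is semi-open with witness $O$.

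There is essentially no hard obstacle here, but the one point worth flagging is that I deliberately avoid claiming $\bigcup_\alpha \cl{O_\alpha} = \cl{\bigcup_\alpha O_\alpha}$: for infinite families only the inclusion $\bigcup_\alpha \cl{O_\alpha} \subseteq \cl{\bigcup_\alpha O_\alpha}$ holds in general, and fortunately that is exactly the direction the argument needs. The proof works for an arbitrary index set $\Delta$ precisely because the union $O$ of the open witnesses remains open (no finiteness is required), which is the asymmetry between open and closed sets that drives the result; an analogous statement for intersections would fail.
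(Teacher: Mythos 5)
Your proof is correct and is essentially the standard argument (the one Levine himself gives for his Theorem~2): take the union of the open witnesses, which stays open under arbitrary unions, and use monotonicity of closure for the upper inclusion. The paper merely cites this result from \cite{Lev63} without reproving it, and your argument matches the original; the remark about only needing $\bigcup_\alpha \cl{O_\alpha} \subseteq \cl{\bigcup_\alpha O_\alpha}$ is a correct and sensible clarification.
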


\begin{theorem}[{\cite[Thm.~6]{Lev63}}] Let $A \subseteq Y \subseteq X$ where $X$ is a topological space and
$Y$ is a subspace. Let $A$ be s.o.\  in $X$. Then $A$ is s.o.\  in $Y$.
\end{theorem}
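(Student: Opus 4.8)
The plan is to produce a witnessing open set for semi-openness in the subspace $Y$ directly from the witness provided in $X$. By \cref{def:semiopenset}, the hypothesis that $A$ is s.o.\ in $X$ gives an open set $O$ of $X$ with $O \subseteq A \subseteq \cl{O}$. My candidate witness in $Y$ is simply $O$ itself: I claim $O$ is open in $Y$, that $O \subseteq A$, and that $A \subseteq \subcl{O}$ (closure taken in $Y$), which together say exactly that $A$ is s.o.\ in $Y$.

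First I would verify that $O$ is open in the subspace topology on $Y$. Since $O \subseteq A \subseteq Y$, we have $O = O \cap Y$, and as $O$ is open in $X$, the set $O \cap Y$ is by definition open in $Y$. The containment $O \subseteq A$ is immediate from the hypothesis. The only genuine point requiring care is the right-hand containment $A \subseteq \subcl{O}$, because the closure operators in $X$ and in $Y$ differ in general, and the hypothesis only gives $A \subseteq \cl{O}$ with closure in $X$. Here I would invoke the subspace-closure identity recalled earlier in \cref{sec:basictopology}, namely $\subcl{O} = Y \cap \cl{O}$: combining $A \subseteq \cl{O}$ with $A \subseteq Y$ yields $A \subseteq Y \cap \cl{O} = \subcl{O}$, as desired.

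Putting the three facts together gives $O \subseteq A \subseteq \subcl{O}$ with $O$ open in $Y$, so $A$ is s.o.\ in $Y$ by \cref{def:semiopenset}. I expect no serious obstacle: the entire argument is a short chain of set-theoretic inclusions, and the single place where the two topologies could clash---reconciling $\cl{O}$ with $\subcl{O}$---is resolved cleanly by the stated formula $\subcl{O}=Y\cap\cl{O}$. The mildly delicate observation worth spelling out is that $O\cap Y=O$ precisely because the witness $O$ already sits inside $A\subseteq Y$, which is what lets the \emph{same} set serve as witness in both spaces.
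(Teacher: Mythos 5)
Your proof is correct and is essentially the standard argument (the paper itself gives no proof, merely citing Levine's Theorem~6): reusing the witness $O$, noting $O=O\cap Y$ is open in $Y$ because $O\subseteq A\subseteq Y$, and passing from $\cl{O}$ to $\subcl{O}=Y\cap\cl{O}$ via the subspace-closure identity recalled in \cref{sec:basictopology}. No gaps.
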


\begin{lemma}[{\cite[Lem.~1]{Lev63}}]\label{Lem1:Lev} Let $O$ be open in $X$. Then $\bd O = \cl{O}-O$
is nowhere dense in $X$.
\end{lemma}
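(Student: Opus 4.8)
The plan is to show directly that $\bd O = \cl{O}-O$ is \emph{closed} and has \emph{empty interior}, from which nowhere-density is immediate: for any closed set $A$ we have $\cl{A}=A$, hence $\inte(\cl{A})=\inte(A)$, so the defining condition $\inte(\cl{\bd O})=\emptyset$ reduces to $\inte(\bd O)=\emptyset$.

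First I would record that $\bd O$ is closed. Because $O$ is open, its complement $X-O$ is closed, and $\cl{O}$ is closed by definition; therefore $\bd O = \cl{O}-O = \cl{O}\cap(X-O)$ is the intersection of two closed sets and thus closed. Consequently $\cl{\bd O}=\bd O$, and it suffices to verify $\inte(\bd O)=\emptyset$.

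For the interior I would argue by contradiction. Suppose $\inte(\bd O)\neq\emptyset$; then $U:=\inte(\bd O)$ is itself a nonempty open set with $U\subseteq\bd O=\cl{O}-O$, so in particular $U\subseteq\cl{O}$ while $U\cap O=\emptyset$. Pick any $x\in U$. Since $U\cap O=\emptyset$ we have $x\notin O$, and since $x\in\cl{O}=O\cup O'$ this forces $x\in O'$, i.e.\ $x$ is a limit point of $O$. But then the open neighborhood $U$ of $x$ must meet $O$, contradicting $U\cap O=\emptyset$. Hence $\inte(\bd O)=\emptyset$, completing the proof. As an alternative I could instead invoke \cref{lem:isolationdefinition}: given any open $U$, if $U\cap O\neq\emptyset$ take $V=U\cap O$, which is a nonempty open subset with $V\subseteq O$ and hence $V\cap\bd O=\emptyset$; and if $U\cap O=\emptyset$ then (by the same limit-point argument) no point of $U$ lies in $\cl{O}$, so $V=U$ already satisfies $V\cap\bd O=\emptyset$.

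The only real content — such as it is — is the single observation that a point of $U\subseteq\cl{O}$ cannot have $U$ itself as a neighborhood avoiding $O$; everything else is bookkeeping with the definitions of closure, interior, and boundary recalled in \cref{sec:basictopology}. No compactness, metrizability, or the product structure of $\C^\omega$ is needed, so I do not anticipate any genuine obstacle here.
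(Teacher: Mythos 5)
Your proof is correct. The paper states this result as Lemma~1 of Levine (1963) and does not reproduce a proof, so there is nothing to compare against; your argument --- $\bd O=\cl{O}\cap(X-O)$ is closed, and it has empty interior because any nonempty open $U\subseteq\cl{O}-O$ would be a neighborhood of a limit point of $O$ that avoids $O$ --- is the standard one and matches Levine's original reasoning.
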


We introduce the abbreviation $\bdin A$ for boundary points included in $A$,
which allows us to write $S=O \cup \bdin S$ for the open set $O=\inte{S}$ 
guaranteeing $O \subseteq S \subseteq \cl{O}$ in the case of a semi-open set $S$:

\begin{definition}[Included boundary points]\label{def:bdincluded} The set of included boundary
points of $A \subseteq X$ of a topological space $X$ is $\bdin A = A \cap \bd A$.
\end{definition}

\begin{theorem}[{\cite[Thm.~7]{Lev63}}]\label{thm:representationso} Let $A$ be s.o.\  in $X$. Then $A=O \cup B$ where (1) $O$ is open in $X$, (2) $O \cap B = \emptyset$ and (3) $B$ is nowhere dense.
\end{theorem}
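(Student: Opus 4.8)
The plan is to produce the decomposition explicitly by separating $A$ into its interior and the part of its boundary that it contains. Concretely, I would set $O = \inte(A)$ and $B = A \setminus O$. With these choices, property (1) holds because the interior of any set is open, and (2) holds trivially, since $B$ is defined as the set-difference $A \setminus O$, so that $O \cap B = \emptyset$. Observe moreover that $B = A \setminus \inte(A) = A \cap \bd A = \bdin A$ in the notation of \cref{def:bdincluded}, so that $A = O \cup \bdin A$ is exactly the representation $S = O \cup \bdin S$ anticipated in the text. All the content of the theorem is therefore concentrated in establishing property (3), namely that $B$ is nowhere dense.

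To control $B$, I would invoke semi-openness. By \cref{def:semiopenset} there is an open set $O'$ with $O' \subseteq A \subseteq \cl{O'}$. Since $O'$ is an open subset of $A$ and $\inte(A)$ is the union of all such open subsets, we get $O' \subseteq \inte(A) = O$. This inclusion is the one technical point worth flagging: the witnessing open set $O'$ need not equal $\inte(A)$, but the containment $O' \subseteq O$ is all that is needed. Combining it with $A \subseteq \cl{O'}$ yields the chain
\[
B = A \setminus O \;\subseteq\; A \setminus O' \;\subseteq\; \cl{O'} \setminus O' \;=\; \bd O',
\]
where the last equality uses that $O'$ is open, so that $\inte(O') = O'$ and hence $\bd O' = \cl{O'} \setminus O'$.

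Finally, \cref{Lem1:Lev} applied to the open set $O'$ shows that $\bd O' = \cl{O'} \setminus O'$ is nowhere dense in $X$. Since every subset of a nowhere dense set is again nowhere dense (as recalled in \cref{sec:basictopology}), the inclusion $B \subseteq \bd O'$ forces $B$ to be nowhere dense, which is property (3). This completes the decomposition $A = O \cup B$ with the three required properties. I do not expect any genuine obstacle here: the argument is essentially a single containment chain, and the only place demanding a moment's care is the observation $O' \subseteq \inte(A)$, which lets one replace the possibly awkward witnessing set $O'$ by the canonical choice $O = \inte(A)$ while still trapping $B$ inside the nowhere-dense boundary $\bd O'$.
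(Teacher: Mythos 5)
Your proof is correct, and it follows essentially the original argument of Levine's Theorem~7: the whole content is trapping $B$ inside $\cl{O'}\setminus O'=\bd O'$ and invoking \cref{Lem1:Lev} together with the fact that subsets of nowhere dense sets are nowhere dense. The paper states this result without proof (citing Levine); your only deviation is taking $O=\inte(A)$ rather than the witnessing open set $O'$ itself, which is harmless --- and in fact matches the canonical decomposition $S=\inte(S)\cup\bdin S$ that the surrounding text uses.
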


In general, the converse of \cref{thm:representationso} is false, i.e., not
every nowhere dense set is an acceptable $B=\bdin A$ for a semi-open set $A$.
The following choice is feasible, though:

\begin{theorem}[{\cite[Thm.~8]{Lev63}}]\label{thm:representationsoother} Let $X$ be a topological space and $A=O \cup B$ where (1) $O\neq\emptyset$ is open in $X$, (2) $A$ is connected and (3) $B' = \emptyset$ where $B'$ is the derived
set [i.e., the set of all limit points] of $B$. Then $A$ is s.o.
\end{theorem}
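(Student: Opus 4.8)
The plan is to verify semi-openness directly from \cref{def:semiopenset} by taking the witnessing open set to be $O$ itself. Since $A = O \cup B$, the inclusion $O \subseteq A$ is immediate, so the entire content lies in establishing $A \subseteq \cl{O}$; as $O \subseteq \cl{O}$ trivially, this reduces to proving $B \subseteq \cl{O}$. I would argue by contradiction: suppose some $b \in B$ satisfies $b \notin \cl{O}$ (note this already forces $b \notin O$), and derive that $A$ is disconnected, contradicting hypothesis (2).

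To expose a nontrivial clopen subset of the subspace $A$, I would first isolate $b$. Because $b \notin \cl{O}$, there is an open set $U \ni b$ with $U \cap O = \emptyset$. Because $B' = \emptyset$ by hypothesis (3), the point $b$ is not a limit point of $B$, so there is an open set $W \ni b$ with $W \cap B = \{b\}$. Setting $V = U \cap W$, which is open and contains $b$, I obtain $V \cap A = (V \cap O) \cup (V \cap B) = \emptyset \cup \{b\} = \{b\}$. Hence $\{b\}$ is open in the subspace $A$, and it is a proper, nonempty subset, since $\emptyset \neq O \subseteq A \setminus \{b\}$ by hypothesis (1) together with $b \notin O$.

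The remaining---and main---step is to show that $\{b\}$ is also \emph{closed} in $A$, equivalently that $A \setminus \{b\}$ is open in $A$; this is where I would again lean on $B' = \emptyset$. I would cover $A \setminus \{b\} = O \cup (B \setminus \{b\})$ by $A$-open sets avoiding $b$: the set $O$ is open in $A$ and misses $b$; and for each $x \in B \setminus \{b\}$, the condition $B' = \emptyset$ yields an open $W_x \ni x$ with $W_x \cap B = \{x\}$, whence $W_x \cap A = (W_x \cap O) \cup \{x\} \subseteq A \setminus \{b\}$ (using $b \notin O$ and $x \neq b$). The union of $O$ with all these sets $W_x \cap A$ is exactly $A \setminus \{b\}$ and is open in $A$, so $\{b\}$ is clopen in $A$. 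This contradicts connectedness of $A$, so no such $b$ exists; therefore $B \subseteq \cl{O}$, giving $O \subseteq A \subseteq \cl{O}$ and proving that $A$ is semi-open. I expect the closedness of $\{b\}$ in $A$ to be the delicate point, since in a non-$T_1$ space singletons need not be closed, and it is precisely the absence of limit points of $B$ that rescues the argument.
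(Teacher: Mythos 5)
Your proof is correct. Note that the paper itself states this result only as a citation of \cite[Thm.~8]{Lev63} and supplies no proof, so there is nothing in the text to compare against; your argument is a valid self-contained reconstruction. Taking $O$ as the witnessing open set and reducing everything to $B\subseteq\cl{O}$ is the natural route, and your contradiction step is sound: for $b\in B\setminus\cl{O}$, the hypothesis $B'=\emptyset$ gives both that $\{b\}=V\cap A$ is open in $A$ and—via the cover of $A\setminus\{b\}$ by $O$ and the sets $W_x\cap A$—that it is closed in $A$, which contradicts connectedness. You are also right to flag the closedness of $\{b\}$ as the genuinely delicate point in a non-$T_1$ space; your use of $B'=\emptyset$ there is exactly what makes the argument go through.
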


The following definition introduces semi-continuity as a weaker form of continuity, 
in the sense that a continuous map is also semi-continuous but not necessarily
vice versa:

\begin{definition}[Semi-continuity {\cite[Def.~4]{Lev63}}]\label{DefSemiCont:Lev}  Let $f: X \to X^*$ be
single-valued (continuity not assumed) where $X$ and $X^*$ are topological spaces. Then $f: X \to X^*$
is termed semi-continuous (written s.c.) if and only if, for $O^*$ open in $X^*$, then $f^{-1}(O^*)$
is semi-open in $X$.
\end{definition}

The following theorem is crucial for characterizing semi-continuous functions:

\begin{theorem}[{\cite[Thm.~12]{Lev63}}]\label{Thm12:Lev} 
Let $f: X \to X^*$ be a single-valued function, $X$ and $X^*$ being topological
spaces. Then $f: X \to X^*$ is s.c.\ if and only if for [some open set $O^*$ with]
$f(p)\in O^*$, there exists an $A$ that is semi-open in $X$ such that $p\in A$ and 
$f(A) \subseteq O^*$.
\end{theorem}

For the following theorem, we note that a topological space is a 2nd axiom space 
if it has a countable basis, whereas
a subset of a topological space is of first category if it is a meager set, i.e.,
a countable union of nowhere dense sets.

\begin{theorem}[{\cite[Thm.~13]{Lev63}}]\label{Thm13:Lev} 
Let $f: X \to X^*$ be s.c.\ and $X^*$ a 2nd axiom space. Let $P$ be the
set of points of discontinuity of $f$. Then $P$ is of first category.
\end{theorem}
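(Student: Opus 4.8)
The plan is to exploit the countable basis of $X^*$ to cover the discontinuity set $P$ by countably many nowhere dense sets, each arising from the semi-open preimage of a single basis element. Fix a countable basis $\{U_i^*\}_{i\in\IN}$ of $X^*$. Since $f$ is semi-continuous, \cref{DefSemiCont:Lev} tells us that each preimage $f^{-1}(U_i^*)$ is semi-open in $X$. By \cref{thm:representationso} we may therefore write $f^{-1}(U_i^*) = O_i \cup B_i$, where $O_i = \inte(f^{-1}(U_i^*))$ is open, $O_i \cap B_i = \emptyset$, and $B_i = \bdin f^{-1}(U_i^*)$ is nowhere dense (cf.\ \cref{def:bdincluded}).

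The heart of the argument is the inclusion $P \subseteq \bigcup_{i\in\IN} B_i$, which I would prove by contraposition: if $p \notin B_i$ for every $i$, then $f$ is continuous at $p$. So let $O^*$ be an arbitrary open set with $f(p) \in O^*$. Because $\{U_i^*\}$ is a basis, there is an index $i$ with $f(p) \in U_i^* \subseteq O^*$, whence $p \in f^{-1}(U_i^*) = O_i \cup B_i$. Since by assumption $p \notin B_i$, we get $p \in O_i$. Now $O_i$ is an open neighborhood of $p$ and $f(O_i) \subseteq f(f^{-1}(U_i^*)) \subseteq U_i^* \subseteq O^*$, so $O_i$ witnesses continuity of $f$ at $p$ relative to the neighborhood $O^*$. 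As $O^*$ was arbitrary, $p$ is a point of continuity, i.e., $p \notin P$; this establishes $P \subseteq \bigcup_i B_i$.

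It then remains to conclude that $P$ is of first category. Each $B_i$ is nowhere dense, so $\bigcup_i B_i$ is a countable union of nowhere dense sets, i.e., meager. Writing $P = \bigcup_i (P \cap B_i)$ and using that every subset of a nowhere dense set is again nowhere dense, $P$ is itself a countable union of nowhere dense sets, hence of first category.

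The main obstacle---though once seen it is short---is the middle step, where one must correctly identify the open interior part $O_i$ of the semi-open preimage as the neighborhood witnessing continuity. The subtlety is that semi-continuity only guarantees the whole preimage $f^{-1}(U_i^*)$ to be semi-open rather than open, so continuity at $p$ can fail precisely at the included boundary points $B_i$; the decomposition of \cref{thm:representationso} is exactly what isolates those offending points into a nowhere dense remainder. The reduction to basis elements---legitimate because continuity at a point need only be tested against a neighborhood basis of $f(p)$---is what keeps the bad set countable, and hence the union meager.
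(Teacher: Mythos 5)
Your proof is correct. The paper states this result purely as a citation of Levine's Theorem~13 and gives no proof of its own, so there is no in-paper argument to compare against; your argument --- cover $P$ by the nowhere dense included-boundary parts $B_i$ of the semi-open preimages of countably many basis elements, using that a point lying outside every $B_i$ falls into the open interior part $O_i$ and is therefore a point of continuity --- is the standard one and is essentially Levine's original proof. It is also the same mechanism the paper uses for its specialization \cref{lem:nowheredense}, where the finite discrete codomain $\V$ makes the relevant union of boundary parts finite and upgrades ``first category'' to ``nowhere dense'' (and, incidentally, your basis-indexed bookkeeping is cleaner than the per-point indexing used there).
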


\section{Topological Characterization of Terminating Consensus}
\label{sec:terminatingconsensus}

In this section, we endow the set of admissible executions $\Sigma$ introduced in Section~\ref{sec:general:model}
with the uniform and non-uniform topologies according to \cite{NSW24:JACM}. Moreover, in order to be able to explain 
the relation between terminating and stabilizing consensus, we also re-state a few core results of the topological characterization of terminating consensus.

%\subsection{Uniform Topology}
%\label{sec:uniftopology}

Let the \emph{$p$-view distance function}~$d_p$
on the set~$\C$ of configurations for every process $p\in\Pi$ be defined by
\begin{equation}
d_p(C,D)
=
\begin{cases}
0 & \text{if } C \sim_p D \text{ and } p\in\ob(C)\cap\ob(D),\\
1 & \text{otherwise}.
\end{cases}
\end{equation}

Extending this distance function from configurations to executions
according to \cref{lem:pseudosemi:product},
we define the \emph{$p$-view distance function} by
\begin{equation}
d_p:\Sigma \times \Sigma \to \IR_+,\quad
d_p(\gamma,\delta)
=
2^{-\inf\{t\geq0\mid d_p(C^t,D^t) > 0\}}\label{eq:Pviewpseudometric},
\end{equation}
where $\gamma = (C^t)_{t\geq0}$ and $\delta = (D^t)_{t\geq0}$.
%Note that two executions, where $p$ has the same local view in all
%configurations in $\gamma$ and $\delta$ before $p$ fails in some
%round $t\geq 1$, satisfy $d_p(\gamma,\delta) < 2^{-t}$.

\cref{lem:Pseudometricproperties} reveals that it defines a pseudometric,
which differs from a metric by lacking definiteness: There may be executions
$\gamma\neq\delta$ with $d_p(\gamma,\delta) =0$.

\begin{lemma}[Pseudometric $d_p$ {\cite[Lem.~4.3]{NSW24:JACM}}]\label{lem:Pseudometricproperties}
The $p$-view distance function $d_p$ is a pseudometric, i.e., it satisfies:
\begin{align*}
d_p(\gamma,\gamma)&=0 \\
d_p(\gamma,\delta) &=  d_p(\delta,\gamma) \qquad\mbox{(symmetry)}\\
d_p(\beta,\delta) &\leq d_p(\beta,\gamma) + d_p(\gamma,\delta) \qquad\mbox{(triangle inequality)}
\end{align*}
\end{lemma}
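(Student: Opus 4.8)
The plan is to reduce all three pseudometric properties to elementary properties of the \emph{configuration-level} function $d_p$, exploiting the uniform structure of the $2^{-\inf}$ construction from \cref{lem:pseudosemi:product}. For executions $\gamma=(C^t)_{t\geq0}$ and $\delta=(D^t)_{t\geq0}$, write $T_p(\gamma,\delta)=\inf\{t\geq0\mid d_p(C^t,D^t)>0\}$ (with $\inf\emptyset=\infty$), so that $d_p(\gamma,\delta)=2^{-T_p(\gamma,\delta)}$ under the convention $2^{-\infty}=0$. Each of the three properties then follows from a corresponding statement about the first-disagreement time $T_p$.

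Reflexivity and symmetry are immediate. For reflexivity, when $p\in\ob(\gamma)$ we have $p\in\ob(C^t)$ for every $t$ and $C^t\sim_p C^t$, so the configuration-level value $d_p(C^t,C^t)=0$ for all $t$; the infimum is then taken over the empty set, giving $T_p(\gamma,\gamma)=\infty$ and hence $d_p(\gamma,\gamma)=0$. Symmetry reduces to symmetry of the configuration-level function: since $\sim_p$ is an equivalence relation (hence symmetric) and $\ob(C)\cap\ob(D)=\ob(D)\cap\ob(C)$, we have $d_p(C^t,D^t)=d_p(D^t,C^t)$ for every $t$, so $T_p(\gamma,\delta)=T_p(\delta,\gamma)$ and the two execution distances coincide.

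The triangle inequality is the crux, and the natural route is to prove the stronger \emph{ultrametric} bound $d_p(\beta,\delta)\leq\max\{d_p(\beta,\gamma),d_p(\gamma,\delta)\}$, from which the claim follows since $\max\{a,b\}\leq a+b$ for $a,b\geq0$. The key observation is that the zero-distance relation at the configuration level is \emph{transitive}: if $d_p(C,D)=0$ and $d_p(D,E)=0$, then $C\sim_p D$ and $D\sim_p E$ yield $C\sim_p E$ by transitivity of $\sim_p$, while $p\in\ob(C)\cap\ob(D)$ and $p\in\ob(D)\cap\ob(E)$ give $p\in\ob(C)\cap\ob(E)$, so $d_p(C,E)=0$. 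Writing $s=T_p(\beta,\gamma)$ and $u=T_p(\gamma,\delta)$, for every $t<\min\{s,u\}$ we have $d_p(B^t,C^t)=0$ and $d_p(C^t,D^t)=0$, whence $d_p(B^t,D^t)=0$ by transitivity. Thus no disagreement between $\beta$ and $\delta$ can occur before time $\min\{s,u\}$, i.e., $T_p(\beta,\delta)\geq\min\{s,u\}$, and applying the order-reversing map $t\mapsto 2^{-t}$ gives $d_p(\beta,\delta)\leq 2^{-\min\{s,u\}}=\max\{2^{-s},2^{-u}\}$, as required.

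The one genuinely substantive step is this configuration-level transitivity of the zero-distance relation; it is precisely what upgrades the $2^{-\inf}$ construction to an ultrametric, so that the triangle inequality comes essentially for free. Everything else is routine bookkeeping on the first-disagreement times $T_p$, and I expect no obstacle beyond carefully tracking the obedience conditions along the two agreeing prefixes.
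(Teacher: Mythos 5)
Your proof is correct in substance and follows the standard (and, as far as one can tell, the intended) route: the paper itself gives no proof of this lemma, importing it verbatim from \cite[Lem.~4.3]{NSW24:JACM}, and the argument there is exactly the reduction you give --- symmetry and reflexivity from the corresponding properties of the configuration-level $d_p$, and the triangle inequality via the stronger ultrametric bound $d_p(\beta,\delta)\leq\max\{d_p(\beta,\gamma),d_p(\gamma,\delta)\}$, which rests on the transitivity of the configuration-level zero-distance relation (transitivity of $\sim_p$ together with the intersection of the obedience conditions). That is the one substantive step, and you have it right.

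One caveat worth flagging: your reflexivity argument is conditioned on $p\in\ob(\gamma)$, and this restriction is not cosmetic. If $p\notin\ob(\gamma)$, then $p\notin\ob(C^{t_0})$ for some $t_0$, so $d_p(C^{t_0},C^{t_0})=1$ and hence $d_p(\gamma,\gamma)=2^{-t_0}>0$; the identity $d_p(\gamma,\gamma)=0$ as literally stated in the lemma then fails. So either the lemma must be read as implicitly quantified over executions in which $p$ is obedient (which is how it is used throughout the paper, e.g., in the definition of $\dnonunif$, where the minimum ranges only over $\ob(\gamma)\cap\ob(\delta)$), or the reflexivity clause needs that hypothesis stated. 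Your symmetry and triangle-inequality arguments, by contrast, hold unconditionally, since they never need the zero-distance relation to be reflexive, only transitive and symmetric where it holds.
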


The \emph{uniform minimum topology} (abbreviated \emph{uniform topology}) on the set~$\Sigma$ of executions is induced by the distance function
\begin{equation}
\dunif(\gamma, \delta)
=
\min_{p\in\Pi} d_p(\gamma,\delta).\label{eq:dunif}
\end{equation}

Note that $\dunif$ does not necessarily satisfy the triangle inequality (nor definiteness): There may
be executions with $d_{p}(\beta,\gamma)=0$ and $d_{q}(\gamma,\delta)=0$
but $d_{r}(\beta,\delta)>0$ for all $r\in \Pi$. Hence, the topology
on $\Sigma$ induced by $\dunif$ lacks some of the convenient (separation)
properties of metric spaces.

The next lemma shows that the decision function of an algorithm that solves
uniform consensus is always continuous with respect to
the uniform topology.

\begin{lemma}[{\cite[Lem.~4.4]{NSW24:JACM}}]\label{lem:cont:unif:consensus}
Let $\Delta:\Sigma\to\V$ be the decision function of a
uniform consensus algorithm.
Then, $\Delta$ is continuous with respect to 
the uniform distance function~$\dunif$.
\end{lemma}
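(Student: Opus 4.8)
The plan is to show directly that $\Delta$ is continuous by proving that the preimage $\Delta^{-1}(v)$ of each decision value is open in the uniform topology. Since the value set $\V$ is finite and carries the discrete topology, every subset of $\V$ is open, so continuity of $\Delta$ is equivalent to openness of each of the finitely many classes $\Delta^{-1}(v)$, $v\in\V$. Fixing an execution $\gamma=(C^t)_{t\ge0}$ with $\Delta(\gamma)=v$, it then suffices to exhibit a radius $\varepsilon>0$ with $B_\varepsilon(\gamma)\subseteq\Delta^{-1}(v)$, where the ball is taken with respect to $\dunif$.

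First I would choose the radius. Because $\Pi$ is finite, I can pick a single finite time $T$ that simultaneously dominates (i) the decision time of every obedient process of $\gamma$---which exists by Termination (T) applied to the finitely many processes in $\ob(\gamma)$---and (ii) the first time at which each non-obedient process of $\gamma$ leaves its obedient set---which exists since $q\notin\ob(\gamma)=\bigcap_{t\ge0}\ob(C^t)$ means $q\notin\ob(C^{t})$ for some finite $t$. Setting $\varepsilon=2^{-T}$, the key observation is that $\dunif=\min_{q}d_q$ makes $B_\varepsilon(\gamma)$ a union of the per-process balls $\{\delta\mid d_q(\gamma,\delta)<\varepsilon\}$, so it is enough to verify, for each single process $q$, that $d_q(\gamma,\delta)<\varepsilon$ forces $\Delta(\delta)=v$.

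Next I would split on whether $q$ is obedient in $\gamma$. If $q\notin\ob(\gamma)$, then $q$ is already non-obedient in $\gamma$ at some configuration $C^{t^*}$ with $t^*\le T$, hence $d_q(C^{t^*},D^{t^*})=1$ by definition of the configuration-level distance, which forces $d_q(\gamma,\delta)\ge 2^{-t^*}\ge\varepsilon$; thus the implication holds vacuously for such $q$. If $q\in\ob(\gamma)$, then $d_q(\gamma,\delta)<\varepsilon$ means $C^t\sim_q D^t$ and $q\in\ob(C^t)\cap\ob(D^t)$ for all $t\le T$. Since $T$ exceeds $q$'s decision time $t_q$ and decisions are local, irrevocable, and (by Agreement (A), as both $q$ and the process defining $\Delta(\gamma)$ lie in $\ob(\gamma)$) equal to $v$ in $\gamma$, process $q$ already decides $v$ in $\delta$ at time $t_q$. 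The crucial step is then to transfer this single process's decision to the global value $\Delta(\delta)$: although $q$ need not remain obedient in $\delta$, Uniform Agreement (UA) guarantees that the value decided by the obedient process defining $\Delta(\delta)$---which exists by Termination---must coincide with $q$'s decision, so $\Delta(\delta)=v$.

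I expect the main obstacle to be exactly this interplay between the minimum in $\dunif$ and the non-obedient processes: a naive radius based only on the obedient processes' decision times fails, because a process that is obedient in $\gamma$ only up to some finite time could otherwise realize a small $d_q$ while never deciding, leaving $\Delta(\delta)$ unconstrained. The choice of $T$ above is designed precisely so that such processes contribute vacuously, while the genuine work is done through Uniform Agreement, which is what lets a nearby execution inherit the decision of a process that is decided but possibly only transiently obedient. Once $B_\varepsilon(\gamma)\subseteq\Delta^{-1}(v)$ is established for arbitrary $\gamma$, openness of $\Delta^{-1}(v)$ and hence continuity of $\Delta$ with respect to $\dunif$ follow immediately.
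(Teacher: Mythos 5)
Your proof is correct and takes essentially the same route as the argument in \cite{NSW24:JACM} that this paper cites without reproducing: pick the radius $2^{-T}$ with $T$ beyond both the decision times of the processes in $\ob(\gamma)$ and the times at which the non-obedient processes leave the obedient set, so that any process realizing the minimum in $\dunif$ below $2^{-T}$ must lie in $\ob(\gamma)$ and have already decided $v$, and then transfer that decision to $\Delta(\delta)$ via Uniform Agreement. You also correctly isolated the one subtle point, namely that transiently obedient processes must be forced to contribute vacuously to the minimum.
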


\begin{definition}[$v$-valent execution]\label{def:vvalentexec}
We call an execution $\gamma_v \in \Sigma$, for $v\in\V$, $v$-valent, if it
starts from an initial configuration where all processes $p\in\Pi$ have the
same initial value $I_p(\gamma_v)=v$.
\end{definition}

\begin{theorem}[Characterization of uniform consensus {\cite[Thm.~5.2]{NSW24:JACM}}]\label{thm:char:unif}
Uniform consensus is solvable if and only if there exists
a partition of the set~$\Sigma$ of admissible executions
into sets $\Sigma_v$, $v\in\V$, such that
the following holds:
\begin{enumerate}
\item Every $\Sigma_v$ is a clopen set in~$\Sigma$ with respect to the
uniform topology induced by~$\dunif$.
\item If execution $\gamma\in\Sigma$ is $v$-valent, then $\gamma \in \Sigma_v$.
\end{enumerate}
\end{theorem}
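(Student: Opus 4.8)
The plan is to prove the two implications separately, in both cases passing through the global decision function $\Delta:\Sigma\to\V$ and the family $\Sigma_v:=\Delta^{-1}(v)$, $v\in\V$. Throughout I regard the finite value set $\V$ as carrying the discrete topology, so that each singleton $\{v\}$ is clopen in $\V$.

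For the forward direction, suppose a uniform consensus algorithm $\A$ is given. Its global decision function $\Delta$ is well defined by (UA) together with the existence of an obedient process in every execution, and the sets $\Sigma_v=\Delta^{-1}(v)$ partition $\Sigma$ simply because $\Delta$ is single valued. Condition~2 is immediate from weak validity (V): if $\gamma$ is $v$-valent in the sense of \cref{def:vvalentexec}, then $v$ is the only admissible decision value, so $\Delta(\gamma)=v$ and $\gamma\in\Sigma_v$. For condition~1, I invoke \cref{lem:cont:unif:consensus}, which asserts that $\Delta$ is continuous with respect to $\dunif$; since $\{v\}$ is clopen in the discrete space $\V$, and preimages of open (resp.\ closed) sets under a continuous map are open (resp.\ closed), the set $\Sigma_v$ is clopen in the uniform topology.

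For the reverse direction I start from a partition $\{\Sigma_v\}_{v\in\V}$ satisfying~1 and~2 and must exhibit an algorithm. Set the target value by $\Delta(\gamma)=v\iff\gamma\in\Sigma_v$, and let each process decide, at the first time it can, the value certified by its view: in configuration~$C$ process~$p$ outputs the (necessarily unique) value~$v$, if any, such that the nonempty set of admissible executions $\delta=(D^s)_{s\geq0}$ with $p\in\ob(D^s)$ and $D^s\sim_p C^s$ for all~$s$ up to the current time lies entirely in $\Sigma_v$, and retains this value thereafter. Since this rule depends only on $V_p(C)$, it is a legitimate decision algorithm. The key tool is the openness of $\Sigma_v$ unpacked through the $\min$-shape of $\dunif$: for $\gamma\in\Sigma_v$ and an obedient $p\in\ob(\gamma)$ there is a witness time~$t$ with $\{\delta : p\in\ob(D^s)\text{ and }D^s\sim_p C^s\text{ for }s\leq t\}\subseteq\Sigma_v$, which is exactly the certificate allowing~$p$ to decide~$v$; hence every obedient process eventually decides, giving Termination, while condition~2 yields (V) as above.

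Agreement for obedient processes is then immediate: an obedient $p\in\ob(\gamma)$ is obedient in every configuration up to its decision time, so $\gamma$ itself belongs to the consistency ball it certifies, forcing $p$ to decide $\Delta(\gamma)$; thus all obedient processes agree. The delicate point, and the step I expect to be the main obstacle, is Uniform Agreement, which also constrains processes that are faulty in~$\gamma$: I must rule out that a non-obedient process certifies some $v\neq\Delta(\gamma)$ from a view whose consistency ball misses~$\gamma$. The argument has to combine the monotonicity of $\ob$ (a disobedient process never recovers) with the fact that a deciding process is obedient in all configurations up to the very step at which it decides — a crash merely freezes its view, so any decision is taken on an obedient prefix — together with the $\min$-structure of $\dunif$, which ties the certifying process's ball back to the class of~$\gamma$. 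Verifying that the model's obedience function leaves no room for a faulty decider to certify the wrong class is the technical heart of the construction.
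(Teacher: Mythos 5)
First, note that this paper does not prove \cref{thm:char:unif} itself: it is imported verbatim from \cite[Thm.~5.2]{NSW24:JACM}, so the only in-paper point of comparison is the analogous construction in the proof of \cref{thm:Schar:nonunif}, which uses exactly the certificate sets $D_p(\gamma,t_i)$ you describe. Your overall route — forward direction via continuity of $\Delta$ (\cref{lem:cont:unif:consensus}) plus clopenness of singletons in the discrete topology on $\V$, reverse direction via the universal ``decide once your consistency set is contained in a single $\Sigma_v$'' algorithm — is the standard one and matches the cited proof. The forward direction and the Termination/Validity/Agreement-for-obedient-processes arguments are fine.

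Two loose ends. (1) You flag Uniform Agreement as ``the technical heart'' and leave it unverified, but with the semantics this paper intends (a process that decides does so at a step where it is still obedient, and $\ob$ is monotone along executions, so it is obedient in the entire prefix up to its decision) the argument closes in one line: the deciding process $q$ then satisfies $q\in\ob(C^s)$ and $C^s\sim_q C^s$ for all $s$ up to its decision time, so the actual execution $\gamma$ lies in $q$'s own certificate set, which is contained in some $\Sigma_w$; hence $w=\Delta(\gamma)$, and any two deciders output $\Delta(\gamma)$. There is no residual case of a ``faulty decider certifying the wrong class,'' because a certificate set that excludes $\gamma$ can only arise after $q$ has already become disobedient, and decisions taken there are outside the scope of (UA). (2) Your decision rule as written quantifies over ``all $s$ up to the current time,'' i.e., over global time $t$, which is not accessible to the processes in this model; so the rule does not literally depend only on $V_p(C)$ as you claim. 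This is repaired exactly as in the proof of \cref{thm:Schar:nonunif}: define the certificate set by $\exists s\colon C^{t_i}\sim_p D^s$ (which is view-computable) and use the weak clock functions $\chi_p$, with $\chi_p(C^t)\to\infty$, to guarantee that this set is eventually contained in the open ball $B_{2^{-\ell}}(\gamma)$ for $\dunif$ and hence in $\Sigma_v$. With these two repairs your proposal is a correct proof along the same lines as the original.
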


%\subsection{Non-uniform topology}
%\label{sec:mintopology}

Whereas the $p$-view distance function given by \cref{eq:Pviewpseudometric}
is also meaningful for non-uniform consensus, this is not the case for the
uniform distance function as defined in \cref{eq:dunif}.
The appropriate
\emph{non-uniform minimum topology} (abbreviated \emph{non-uniform topology}) 
on the set~$\Sigma$ of executions is induced by the distance function
\begin{equation}
\dnonunif(\gamma, \delta)
=
\begin{cases}
\min_{p\in\ob(\gamma)\cap\ob(\delta)} d_p(\gamma,\delta)\\
\qquad\qquad \text{if } \ob(\gamma)\cap\ob(\delta) \neq \emptyset,\\
1\\
\qquad\qquad \text{if } \ob(\gamma)\cap\ob(\delta) = \emptyset.
\end{cases}\label{eq:dnonunif}
\end{equation}

Like for $\dunif$, neither definiteness nor the triangle inequality need to be
satisfied by $\dnonunif$. The resulting non-uniform topology is finer than the uniform topology, however, since
the minimum is taken over the smaller set
$\ob(\gamma)\cap\ob(\delta)\subseteq \Pi$, which means that
$\dunif(\gamma,\delta) \leq \dnonunif(\gamma,\delta)$.
In particular, this implies that every decision function that is continuous with
respect to the uniform topology is also continuous with respect to the 
non-uniform topology.

\begin{lemma}[{\cite[Lem.~4.5]{NSW24:JACM}}]\label{lem:cont:nonunif:consensus}
Let $\Delta:\Sigma\to\V$ be the decision function of a
non-uniform consensus algorithm.
Then, $\Delta$ is continuous with respect to
the non-uniform distance function~$\dnonunif$.
\end{lemma}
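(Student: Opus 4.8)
The plan is to exploit that $\V$ carries the discrete topology, so that continuity of $\Delta$ with respect to $\dnonunif$ is equivalent to showing that each fiber $\Delta^{-1}(v)$, $v\in\V$, is open in the non-uniform topology. Concretely, for every $\gamma$ with $\Delta(\gamma)=v$ I would exhibit a radius $\varepsilon>0$ such that the ball $B_\varepsilon(\gamma)=\{\delta\in\Sigma\mid \dnonunif(\gamma,\delta)<\varepsilon\}$ is contained in $\Delta^{-1}(v)$. Since the finitely many fibers partition $\Sigma$, openness of each one immediately yields continuity (the fibers are then in fact clopen).

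To fix $\varepsilon$, I would invoke the consensus properties of \cref{def:consensus} applied to $\gamma=(C^t)_{t\geq0}$. By Agreement (A), every obedient process $q\in\ob(\gamma)$ decides the common value $v=\Delta(\gamma)$, and by Termination (T) it does so at some finite time $t_q$, i.e.\ $\Delta_q(C^{t_q})=v$. Because $\Pi$ is finite, $T=\max_{q\in\ob(\gamma)} t_q$ is finite, and I would set $\varepsilon=2^{-T}$.

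Now take any $\delta=(D^t)_{t\geq0}$ with $\dnonunif(\gamma,\delta)<2^{-T}$. Since $2^{-T}\leq 1$, the subcase $\ob(\gamma)\cap\ob(\delta)=\emptyset$ (which would force $\dnonunif(\gamma,\delta)=1$) cannot occur, so the minimum in \cref{eq:dnonunif} is attained and there is a witness $q\in\ob(\gamma)\cap\ob(\delta)$ with $d_q(\gamma,\delta)<2^{-T}$; by the definition of $d_q$ in \cref{eq:Pviewpseudometric} this forces $C^t\sim_q D^t$ for all $t\leq T$. In particular $C^{t_q}\sim_q D^{t_q}$, and since decisions depend only on the local view, $\Delta_q(D^{t_q})=\Delta_q(C^{t_q})=v\neq\perp$. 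As $q$ is obedient in $\delta$ and decisions are irrevocable, the value persists, so $\Delta_q(\delta)=\lim_{t\to\infty}\Delta_q(D^t)=v$, and therefore $\Delta(\delta)=\Delta_q(\delta)=v$. Hence $B_{2^{-T}}(\gamma)\subseteq\Delta^{-1}(v)$, as required.

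The one subtlety I expect to be the crux is that the witnessing process $q$ for $d_q(\gamma,\delta)<\varepsilon$ depends on $\delta$ and need not be any fixed process; this is precisely why I take the maximum $T$ of the decision times over \emph{all} obedient processes of $\gamma$, rather than the decision time of a single one. The finiteness of $\Pi$ (hence of $\ob(\gamma)$) is what makes this maximum well-defined, and Agreement is what guarantees that whichever witness $q$ arises has itself already decided $v$ in $\gamma$ by time $T$.
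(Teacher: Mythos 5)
Your argument is correct, and it is essentially the standard proof of this lemma: the paper itself only imports the statement from \cite[Lem.~4.5]{NSW24:JACM} without reproducing a proof, and the cited proof proceeds exactly as you do, bounding the decision times of all obedient processes of $\gamma$ by a finite $T$ and showing $B_{2^{-T}}(\gamma)\subseteq\Delta^{-1}(v)$ via an obedient witness $q$ whose views agree up to time $T$. Your observation that the witness $q$ depends on $\delta$, which is why one must take the maximum over all of $\ob(\gamma)$, is precisely the right point of care.
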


\begin{theorem}[Characterization of non-uniform consensus {\cite[Thm.~5.3]{NSW24:JACM}}]\label{thm:char:nonunif}
Non-uniform consensus is solvable if and only if there exists
a partition of the set~$\Sigma$ of admissible executions
into sets $\Sigma_v$, $v\in\V$, such that
the following holds:
\begin{enumerate}
\item Every $\Sigma_v$ is a clopen set in~$\Sigma$ with respect to the
non-uniform topology induced by~$\dnonunif$.
\item If execution $\gamma\in\Sigma$ is $v$-valent, then $\gamma \in \Sigma_v$.
\end{enumerate}
\end{theorem}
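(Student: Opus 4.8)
The plan is to prove both implications by closely mirroring the proof of \cref{thm:char:unif} for uniform consensus, substituting the non-uniform distance $\dnonunif$ for $\dunif$ and invoking \cref{lem:cont:nonunif:consensus} in place of \cref{lem:cont:unif:consensus}. A useful preliminary observation is that, since $\V$ is finite, a partition of $\Sigma$ into \emph{open} sets $\{\Sigma_v\}_{v\in\V}$ is automatically a partition into clopen sets: the complement of each $\Sigma_v$ is the finite union $\bigcup_{w\neq v}\Sigma_w$ of open sets, hence open, so $\Sigma_v$ is closed as well. Thus condition~(1) is equivalent to merely requiring each $\Sigma_v$ to be open, which is all I will actually use in the constructive direction.

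For the forward implication, assume a non-uniform consensus algorithm $\A$ exists and let $\Delta:\Sigma\to\V$ be its global decision function. Set $\Sigma_v=\Delta^{-1}(v)$; these sets partition $\Sigma$ because $\Delta$ is total and single-valued. By \cref{lem:cont:nonunif:consensus}, $\Delta$ is continuous with respect to $\dnonunif$, and since $\V$ carries the discrete topology every singleton $\{v\}$ is clopen; hence each $\Sigma_v=\Delta^{-1}(\{v\})$ is clopen, giving~(1). Condition~(2) is immediate from validity~(V): if $\gamma$ is $v$-valent, then $v$ is the only admissible decision value, so $\Delta(\gamma)=v$ and $\gamma\in\Sigma_v$.

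For the converse, assume such a partition is given and construct a decision algorithm as follows. For a configuration $C$ and process $p$, let $W_p(C)\subseteq\Sigma$ be the set of admissible executions in which $p$ is obedient and attains the view $V_p(C)$ at some configuration; set $\Delta_p(C)=v$ if $W_p(C)\subseteq\Sigma_v$ for some (necessarily unique) $v$, and $\Delta_p(C)=\perp$ otherwise. Since $W_p(C)$ depends only on $V_p(C)$, this is a valid deterministic decision function, and it is irrevocable because views refine along reachable configurations, so $W_p(C')\subseteq W_p(C)$ whenever $C'$ is reachable from $C$; once $W_p(C)\subseteq\Sigma_{v_0}$ this persists. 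Agreement and validity are then easy: the actual execution $\gamma$ always lies in $W_p(C^t)$ when $p\in\ob(\gamma)$, and $\gamma$ belongs to exactly one block $\Sigma_{v_0}$, so any value decided by any obedient process equals $v_0$ (Agreement); if $\gamma$ is $v$-valent then $v_0=v$ by~(2) (Validity).

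It remains to establish Termination, which is where the principal difficulty lies. Fix $\gamma\in\Sigma_{v_0}$ and an obedient process $p\in\ob(\gamma)$; openness of $\Sigma_{v_0}$ yields an $\varepsilon=2^{-t_0}$ with $B_\varepsilon(\gamma)\subseteq\Sigma_{v_0}$. The crux is to translate this ball, defined through the pathological distance $\dnonunif$ (which lacks definiteness and the triangle inequality, and whose value is a minimum over the a priori unknown common obedient set $\ob(\gamma)\cap\ob(\delta)$), into a statement about the finite view $V_p(C^t)$ for large $t$. Concretely, I must show that for $t$ sufficiently large every $\delta\in W_p(C^t)$ satisfies $\dnonunif(\gamma,\delta)<\varepsilon$, so that $W_p(C^t)\subseteq B_\varepsilon(\gamma)\subseteq\Sigma_{v_0}$ and $p$ decides $v_0$. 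Here $\delta\in W_p(C^t)$ forces $p\in\ob(\gamma)\cap\ob(\delta)$, and the monotone prefix structure of full-information views together with the weak-clock property $\chi_p(C^t)\to\infty$ forces $p$'s views in $\gamma$ and $\delta$ to agree up to a time exceeding $t_0$, whence $\dnonunif(\gamma,\delta)\le d_p(\gamma,\delta)<2^{-t_0}=\varepsilon$. The main obstacle is precisely this last step: arguing from the definition of $W_p$ and the relation $\sim_p$ that sharing the view $V_p(C^t)$ pins down $p$'s entire history up to $t_0$ in $\gamma$, rather than merely at some shifted time in $\delta$. The weak clock $\chi_p$ is exactly the device that rules out such temporal shifts, so the delicate point will be to deploy it correctly against the missing separation properties of the non-uniform topology.
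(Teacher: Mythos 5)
Your proposal is correct and follows essentially the same route as the source: the paper itself only cites this theorem from \cite[Thm.~5.3]{NSW24:JACM}, but the identical machinery appears in its proof of \cref{thm:Schar:nonunif} --- the forward direction via continuity of $\Delta$ (\cref{lem:cont:nonunif:consensus}) plus validity, and the converse via the universal algorithm that decides $v$ once the set of executions compatible with $p$'s current view (your $W_p(C)$, the paper's $D_p(\gamma,t_i)$) is contained in $\Sigma_v$, with termination secured exactly as you describe by combining openness of $\Sigma_v$, $p\in\ob(\gamma)\cap\ob(\delta)$, and the weak clock bound $\chi_p(C^t)\to\infty$ to get $\dnonunif(\gamma,\delta)\le d_p(\gamma,\delta)<2^{-t_0}$. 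The step you flag as delicate is resolved in the paper by precisely the chain of inclusions $D_p(\gamma,t_i)\subseteq\{\delta\mid p\in\ob(\delta)\wedge d_p(\gamma,\delta)<2^{-\chi_p(C^{t_i})}\}\subseteq B_{2^{-\ell}}(\gamma)$, which is what you sketched.
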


\subsection{The importance of limit points for terminating consensus}
\label{sec:limitpoints}

Additional results established in \cite{NSW24:JACM} reveal that certain 
limit points play a crucial role for consensus
solvability. Indeed, since consensus is a continuous function 
(\cref{lem:cont:unif:consensus} and \cref{lem:cont:nonunif:consensus}),
it is not too surprising that consensus 
is impossible if and only if limit points ``between'' different
decision sets $\Sigma_v$ and $\Sigma_w$ are admissible.

Applying the following well-known topological \cref{lem:separation} 
to the findings of \cref{thm:char:unif} resp.\ \cref{thm:char:unif}
yields the limit-based characterization of consensus solvability given in 
\cref{cor:consensusseparation}.

\begin{lemma}[Separation Lemma {\cite[Lemma~23.12]{Munkres}}]\label{lem:separation}
If $Y$ is a subspace of $X$, a separation of $Y$ is
a pair of disjoint nonempty sets $A$ and $B$ whose union is $Y$, neither of
which contains a limit point of the other. The space $Y$ is connected
if and only if there exists no separation of $Y$. Moreover, $A$ and $B$ of
a separation of $Y$ are clopen in $Y$.
\end{lemma}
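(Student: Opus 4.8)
The plan is to reduce the unfamiliar limit-point formulation of a separation to the clopen-set definition of (dis)connectedness already recorded in \cref{sec:basictopology}. The crux is a single equivalence: for disjoint nonempty sets $A,B$ with $A\cup B=Y$, the condition that neither set contains a limit point of the other is \emph{equivalent} to both $A$ and $B$ being closed---and hence, being mutual complements in $Y$, clopen---in the subspace $Y$. Once this equivalence is in hand, both the ``moreover'' clause and the connectedness characterization follow immediately.

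First I would fix the subspace topology on $Y$ and read all limit points inside $Y$; for points of $Y$ this agrees with the limit points in $X$, and the identity $\subcl{A}=A\cup A'$ holds in $Y$ by the same argument that gives $\cl{A}=A\cup A'$ in $X$. Now suppose $A,B$ form a separation. Since $A\cap B=\emptyset$ and $A\cup B=Y$, we have $Y\setminus B=A$, so the hypothesis that $B$ contains no limit point of $A$ reads $A'\cap B=\emptyset$, i.e.\ $A'\subseteq A$. Then $\subcl{A}=A\cup A'=A$, so $A$ is closed in $Y$; symmetrically, $A$ containing no limit point of $B$ gives $B'\subseteq B$ and $B$ closed. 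As complementary closed sets, $A$ and $B$ are each the complement of the other and therefore both clopen, which settles the ``moreover'' clause. The converse is equally short: if $A$ (and hence $B=Y\setminus A$) is clopen, then $A$ closed forces $A'\subseteq\subcl{A}=A$, whence $A'\cap B=\emptyset$, and symmetrically $B'\cap A=\emptyset$, recovering the separation condition.

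It then remains to connect this with connectedness. A separation of $Y$ yields, by the equivalence above, a partition of $Y$ into two disjoint nonempty clopen---in particular open---sets, which is exactly the partition witnessing that $Y$ is disconnected in the sense recalled earlier; conversely, any partition of $Y$ into two disjoint nonempty open sets has both parts clopen (each is the complement of the other) and hence, by the equivalence, forms a separation. Thus $Y$ admits a separation if and only if it is disconnected, i.e.\ $Y$ is connected if and only if no separation exists. I expect no genuine obstacle here: the only point demanding care is bookkeeping---interpreting ``limit point'' in the subspace $Y$ rather than in $X$, and keeping straight which set's derived set is being constrained by which half of the hypothesis---so that the identity $\subcl{A}=A\cup A'$ can be applied cleanly.
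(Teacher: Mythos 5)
Your proof is correct and is essentially the standard argument; the paper itself gives no proof of this lemma, importing it directly from Munkres. The equivalence you isolate---that neither set containing a limit point of the other is the same as both $A$ and $B$ being closed, hence (as mutual complements) clopen in $Y$---together with your remark that for points of $Y$ the limit points of a subset of $Y$ agree whether computed in $X$ or in the subspace $Y$, is exactly the textbook reasoning, so there is nothing to compare or correct.
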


\begin{theorem}[Separation-based consensus characterization {\cite[Thm.~6.4]{NSW24:JACM}}]\label{cor:consensusseparation}
Uniform resp.\ non-uniform consensus is solvable in a model if and only if
there exists a partition of the set of admissible executions $\Sigma$ into decision sets $\Sigma_v,v\in\V$, such that
the following holds:
\begin{enumerate}
\item No $\Sigma_v$ contains a limit point of any other $\Sigma_w$ w.r.t.\ the uniform resp.\
non-uniform topology in~$\Comega$.
\item Every $v$-valent admissible execution $\gamma_v$ satisfies $\gamma_v\in \Sigma_v$.
\end{enumerate}
If consensus is not solvable, then $\dunif(\Sigma_v,\Sigma_w)=0$ resp.\ $\dnonunif(\Sigma_v,\Sigma_w)=0$ for some $w\neq v$.
\end{theorem}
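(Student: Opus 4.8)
The plan is to obtain \cref{cor:consensusseparation} as a direct corollary of the clopen characterizations in \cref{thm:char:unif} and \cref{thm:char:nonunif} by feeding them into the Separation Lemma (\cref{lem:separation}). Both of those theorems already assert that consensus is solvable if and only if there is a partition $\{\Sigma_v\}_{v\in\V}$ satisfying the validity condition (2) in which every $\Sigma_v$ is \emph{clopen}, in the uniform resp.\ non-uniform topology. Since condition (2) is verbatim the same here, it suffices to prove, once and for all, that for a partition already satisfying (2) the separation condition (1) --- no $\Sigma_v$ contains a limit point of any other $\Sigma_w$ --- is equivalent to every $\Sigma_v$ being clopen. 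The only topology-specific input is \cref{lem:separation}.

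For the implication from clopen to (1): if every $\Sigma_v$ is clopen, then each $\Sigma_w$ is in particular closed, so $\cl{\Sigma_w}=\Sigma_w$ already contains all its limit points. As the blocks are pairwise disjoint, $\Sigma_v\cap\Sigma_w=\emptyset$ for $v\neq w$, so $\Sigma_v$ contains no point of $\Sigma_w$ at all, in particular no limit point; this is exactly (1). For the converse, assume (1). First, each $\Sigma_v$ is closed: a limit point $x$ of $\Sigma_v$ lies in some block $\Sigma_u$, and $u\neq v$ would make $\Sigma_u$ contain a limit point of $\Sigma_v$, contradicting (1), so $u=v$ and $x\in\Sigma_v$. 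Second, each $\Sigma_v$ is open: its complement $\bigcup_{w\neq v}\Sigma_w$ is a \emph{finite} union of closed sets, because $\V$ is finite, and hence closed. Thus every $\Sigma_v$ is clopen. Equivalently, one applies \cref{lem:separation} to the pair $(\Sigma_v,\bigcup_{w\neq v}\Sigma_w)$, using that a limit point of a finite union is a limit point of one of its members to check that neither side contains a limit point of the other. Combining both directions with \cref{thm:char:unif} resp.\ \cref{thm:char:nonunif} yields the stated biconditional.

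It remains to establish the quantitative addendum. I would argue by contraposition, starting from any partition satisfying (2): non-solvability together with the biconditional just proved forces (1) to fail, so some $\Sigma_v$ contains a limit point $x$ of some $\Sigma_w$ with $w\neq v$. The remaining step is to convert this topological fact into $\dunif(x,\Sigma_w)=0$ (whence $\dunif(\Sigma_v,\Sigma_w)=0$), i.e.\ to produce, for every $\varepsilon>0$, an execution of $\Sigma_w$ within $\dunif$-distance $\varepsilon$ of $x$. One half of the correspondence is free: $\dunif(x,\Sigma_w)=0$ always implies $x\in\cl{\Sigma_w}$, since any open neighbourhood of $x$ contains a basic ball $B_\varepsilon(x)$, which must then meet $\Sigma_w$.

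The converse extraction of distance zero from a topological limit point is where I expect the main difficulty, precisely because $\dunif$ (and $\dnonunif$) need not satisfy the triangle inequality, so the balls $B_\varepsilon(x)$ are in general \emph{not} open and cannot be used directly as witnessing neighbourhoods. My approach would be to exploit the explicit form of the distances from \cref{lem:pseudosemi:product}: each per-process $d_p$ takes only the values $2^{-t}$, its balls being the clopen ``cylinder'' sets of executions that agree with $x$ in $p$'s view up to a fixed time, and $\dunif=\min_{p\in\Pi}d_p$ is a minimum over the $n$ processes only. Using the finiteness of $\Pi$ (to handle the finitely many per-process cylinders simultaneously) and of $\V$ (to fix one pair $v\neq w$ and a uniform distance bound), the goal is to show that a limit point at positive $\dunif$-distance from $\Sigma_w$ would already admit a genuinely open neighbourhood avoiding $\Sigma_w$, contradicting that $x$ is a limit point. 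The crux --- and the step most likely to resist a routine argument --- is reconciling the topology generated by $\dunif$ with its own balls in a space lacking metric separation; should positive-distance limit points nonetheless be possible, the addendum would instead have to be derived from the \emph{global} failure of separation (all validity-respecting partitions are non-clopen) via an indistinguishability-chain argument rather than from a single limit point.
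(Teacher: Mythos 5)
Your derivation of the biconditional is exactly the route the paper intends: it states \cref{cor:consensusseparation} with the single remark that it follows by ``applying \cref{lem:separation} to the findings of \cref{thm:char:unif} resp.\ \cref{thm:char:nonunif}'', and your two directions (clopen blocks of a disjoint partition contain no points, hence no limit points, of one another; conversely, condition (1) forces each block to be closed and, since $\V$ is finite, its complement is a finite union of closed sets, so each block is open) are the correct and complete way to cash that remark in. No issues there.

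The gap is in the quantitative addendum, where you correctly identify the danger (the $\dunif$- and $\dnonunif$-balls need not be open because the triangle inequality fails) but then leave the crux unresolved and gesture at a vague fallback. The resolution is simpler than you fear, and it does \emph{not} go through converting ``limit point'' into ``distance zero'' at all. Argue the contrapositive directly from the \emph{definition} of the topology induced by a distance function: a set $U$ is open in $\T_d$ iff every $x\in U$ admits some $\varepsilon>0$ with $B_\varepsilon(x)\subseteq U$ --- and this certificate of openness of $U$ does not require the balls $B_\varepsilon(x)$ themselves to be open. So suppose some validity-respecting partition has $\dunif(\Sigma_v,\Sigma_w)>0$ for all pairs $v\neq w$. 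Fix $v$ and set $\varepsilon_v=\min_{w\neq v}\dunif(\Sigma_v,\Sigma_w)>0$ (a minimum over the finitely many $w\in\V$). For every $x\in\Sigma_v$ the ball $B_{\varepsilon_v}(x)$ meets no $\Sigma_w$ with $w\neq v$, hence $B_{\varepsilon_v}(x)\subseteq\Sigma_v$, hence $\Sigma_v$ is open; its complement is a finite union of open sets' complements' complements --- more directly, every block is open, so every block is also closed as the complement of a finite union of open sets. By \cref{thm:char:unif} resp.\ \cref{thm:char:nonunif}, consensus is then solvable. Contraposing gives the addendum. Your instinct that positive-distance limit points might exist in this non-metrizable setting is not wrong in general, but it is irrelevant here: the addendum quantifies over the infimum of pairwise distances between whole decision sets, and positivity of that infimum already yields clopenness without any mention of limit points.
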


\cref{cor:consensusseparation} can also be expressed via the exclusion of fair/unfair 
executions as defined in \cite{FG11}:

\begin{definition}[Fair and unfair executions {\cite[Def.~6.6]{NSW24:JACM}}]\label{def:fairunfair}
Consider two executions $\rho, \rho' \in \Comega$
of some consensus algorithm with decision sets $\Sigma_v$, $v\in\V$, in any appropriate topology:
\begin{itemize}
\item $\rho$ is called \emph{fair}, if for some $v,w\neq v 
\in \V$ there are convergent sequences $(\alpha_k) \in \Sigma_v$
and $(\beta_k) \in \Sigma_w$ with
$\alpha_k\to \rho$ and $\beta_k\to \rho$.
\item $\rho$, $\rho'$ are called a pair of \emph{unfair} executions,  
if for some $v,w\neq v 
\in \V$ there are convergent sequences $(\alpha_k) \in \Sigma_v$
with $\alpha_k\to \rho$ and $(\beta_k) \in \Sigma_w$ with $\beta_k\to \rho'$
and $\rho$ and $\rho'$ have distance 0.
\end{itemize}
\end{definition}

\begin{corollary}[Fair/unfair consensus characterization {\cite[Cor.~6.7]{NSW24:JACM}}]\label{cor:consensusimpfair}
Condition (1) in \cref{cor:consensusseparation} is equivalent to requiring
that the decision sets $\Sigma_v$, $\Sigma_w$ for $w\neq v$ neither contain
any fair execution nor any pair $\rho,\rho'$ of unfair executions.
\end{corollary}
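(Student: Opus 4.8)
The plan is to prove the contrapositive: Condition~(1) of \cref{cor:consensusseparation} fails precisely when, for some $v\neq w$, a fair execution or an unfair pair in the sense of \cref{def:fairunfair} is present in the decision sets. Since ``$\Sigma_v$ contains a limit point of $\Sigma_w$'' is the same as $\Sigma_v\cap\overline{\Sigma_w}\neq\emptyset$ for the topological closure (the $\Sigma_v$ being disjoint), the statement is an unfolding of what membership in $\overline{\Sigma_w}$ means in the minimum topology. The structural fact I would lean on throughout is that $\dunif=\min_{p\in\Pi}d_p$, so that $\dunif(\gamma,\delta)=0$ holds iff a \emph{single} obedient process $p$ confuses $\gamma$ and $\delta$, and every $\dunif$-ball is the finite union $\bigcup_p B^{d_p}_\varepsilon(\cdot)$ of balls of the genuine pseudometrics $d_p$.

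The easy implication starts from a witness and produces a forbidden limit point. A fair $\rho$ comes with $\beta_k\in\Sigma_w$, $\beta_k\to\rho$, so $\rho$ is a limit point of $\Sigma_w$ lying in another decision set. For an unfair pair $\rho,\rho'$ with $\dunif(\rho,\rho')=0$ I would fix the process $p$ with $d_p(\rho,\rho')=0$ and the process $q$ along which $\Sigma_w\ni\beta_k\to\rho'$: any open $U\ni\rho$ is $d_p$-open, hence contains $\rho'$, and, being $d_q$-open, contains a $d_q$-ball around $\rho'$ and therefore meets $\Sigma_w$. This two-process chain exhibits $\rho$ as a limit point of $\Sigma_w$, so Condition~(1) fails in either case.

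The forward implication is the substantial one. Given $\rho\in\Sigma_v\cap\overline{\Sigma_w}$, I would select $\Sigma_w$-executions $\beta_k$ from shrinking neighbourhoods of $\rho$ and, using (sequential) compactness of the execution space, extract a limit $\rho'$; the heart of the argument is to show $\dunif(\rho,\rho')=0$, for which I would invoke the finiteness of $\Pi$ to pin a single process realising both the convergence $\beta_k\to\rho'$ and the confusion between $\rho$ and $\rho'$. Once this is in place, the clean dichotomy appears: if $\rho'=\rho$ the approach is direct and $\rho$ is fair (pairing the $\beta_k$ with the constant sequence in $\Sigma_v$), while if $\rho'\neq\rho$ the pair $\rho,\rho'$ is unfair. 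In effect I am proving the key closure lemma that $\overline{\Sigma_w}$ is obtained from the ``ball closure'' $\bigcup_p\overline{\Sigma_w}^{\,d_p}$ by adjoining its distance-$0$ companions, i.e.\ that a single distance-$0$ link suffices.

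The main obstacle is exactly the non-metric, non-Hausdorff character of $\dunif$ (and $\dnonunif$): since the triangle inequality fails, $\dunif$-balls need not be open, ``distance $0$'' is not transitive, and the topological closure of $\Sigma_w$ can strictly exceed its metric ball closure. This is precisely why a limit point of $\Sigma_w$ inside $\Sigma_v$ need not be \emph{approached} by $\Sigma_w$ but may be tied to a genuinely \emph{different} distance-$0$ companion, which is what makes the unfair case irreducible to the fair one. Establishing that one such link always suffices---controlling the confusing process through the finiteness of $\Pi$ and extracting the companion by compactness---is the delicate step, and the whole argument must finally be rerun for the non-uniform topology, where the minimum ranges only over $\ob(\gamma)\cap\ob(\delta)$ and so further constrains which processes may witness the required closeness.
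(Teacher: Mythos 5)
First, note that the paper does not actually prove \cref{cor:consensusimpfair}; it is imported verbatim from \cite[Cor.~6.7]{NSW24:JACM}, so there is no in-paper proof to compare your route against. Judged on its own terms, your easy direction is sound: a fair execution, or either member of an unfair pair that lies in some $\Sigma_u$, is exhibited as a limit point of a different decision set by the two-step ``open set contains a $\dunif$-ball around $\rho$, hence contains $\rho'$, hence contains a ball around $\rho'$'' argument (you only need to add the small case split ensuring the approximating sequence you use lives in a decision set different from the one containing $\rho$, so that the intersection point is genuinely distinct from $\rho$).

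The forward direction, however, has a genuine gap, and it sits exactly at the step you yourself flag as delicate. You propose to pick $\beta_k\in\Sigma_w$ ``from shrinking neighbourhoods of $\rho$'' and extract a limit $\rho'$ by sequential compactness. Neither ingredient is available. Because $\dunif$ and $\dnonunif$ violate the triangle inequality, the balls $B_{2^{-k}}(\rho)$ are not open and ``$\rho$ is a limit point of $\Sigma_w$'' does \emph{not} imply that every ball around $\rho$ meets $\Sigma_w$; the open neighbourhoods of $\rho$ are chain-sets built from balls around \emph{all} of their points, so the witnesses $\beta_k$ you can actually extract from open neighbourhoods may be linked to $\rho$ only through arbitrarily long chains of small balls centred at intermediate points. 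Consequently there is no mechanism forcing the putative limit $\rho'$ to satisfy $\dunif(\rho,\rho')=0$ --- this is precisely the failure of the identity $\overline{\Sigma_w}=N(c(\Sigma_w))$ (ball closure plus one distance-$0$ link) that your ``key closure lemma'' asserts but never establishes; a priori one has only $c(c(\Sigma_w))\subseteq\overline{\Sigma_w}$ with no bound on the number of iterations. Moreover, $\Sigma$ (and $\Comega$) is not assumed sequentially compact anywhere in the paper --- non-compact sets of admissible executions are central to \cref{sec:applications} --- and even if a limit $\rho'$ existed there is no argument that it is admissible, which is needed for the decision sets to ``contain'' the unfair pair. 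So the substantive half of the equivalence is a plan resting on an unproved closure lemma and an unavailable compactness hypothesis, not a proof.
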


An illustration of our limit-based characterizations is provided by Figure~\ref{fig:noncompactMA}.

\tikzset{cross/.style={cross out, draw=black, minimum size=2*(#1-\pgflinewidth), inner sep=0pt, outer sep=0pt},
cross/.default={1pt}}

\begin{figure}[ht]
\centering  
\begin{tikzpicture}[scale=0.8]
\pgfmathsetseed{2}
\draw [blue!40!red, dashed, ultra thick, shift={(0,0)}] 
plot [smooth cycle, tension=1, domain=0:320, samples=18] (\x:{4/3+rand/3}) 
node[left=0cm,above=0.2cm,text=black] {$\Sigma_{\gamma_{0}}\qquad$} ;
\draw [green!50!blue, dashed, ultra thick, shift={(3.1,0)}] 
plot [smooth cycle, tension=1, domain=0:320, samples=18] (\x:{4/3+rand/3}) 
node[left=0cm,above=0.2cm,text=black] {$\Sigma_{\gamma_{1}}\qquad$} ;
\draw [blue!40!red, dashed, ultra thick, shift={(0,3.1)}] 
plot [smooth cycle, tension=1, domain=0:320, samples=18] (\x:{4/3+rand/3}) 
node[left=0cm,above=0.2cm,text=black] {$\Sigma_{\gamma_{0}'}\qquad$} ;
\draw [green!50!blue, dashed, ultra thick, shift={(3.1,3.1)}] 
plot [smooth cycle, tension=1, domain=0:320, samples=18] (\x:{4/3+rand/3}) 
node[left=0cm,above=0.2cm,text=black] {$\Sigma_{\gamma_{1}'}\qquad$} ;
\foreach \i in {0,...,5} {
\draw[fill,color=green!50!blue] ({1.39+1.5^(-\i)},0.3) circle (0.05);
\draw[fill,color=blue!40!red] ({1.39+1.5^(-5)-1.5^(-\i)},0.3) circle (0.05);
}
\draw ({1.36+1.5^(-5)},0.3) node[cross=4pt,black, ultra thick]{};
\end{tikzpicture}%
\caption{Examples of two connected components of the decision sets $\Sigma_0=\Sigma_{\gamma_0}\cup \Sigma_{\gamma_0'}$ and
$\Sigma_1=\Sigma_{\gamma_1}\cup \Sigma_{\gamma_1'}$.
% for a non-compact message adversary. They are not closed in $\Comega$ and may have distance 0;
Common limit points (like for $\Sigma_{\gamma_0}$ and $\Sigma_{\gamma_1}$, marked by $\times$) must be excluded by
\cref{cor:consensusseparation}.}
\label{fig:noncompactMA}
\end{figure}
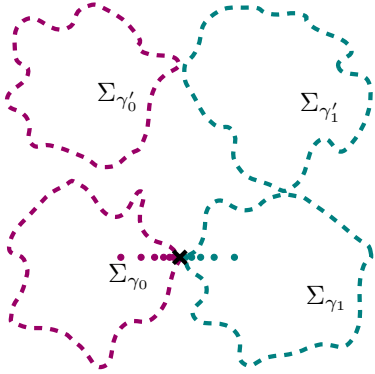

\section{Topological Characterization of Stabilizing Consensus}
\label{sec:characterization}

In this section, we will use the non-uniform topology introduced in \cref{sec:terminatingconsensus} 
in conjunction with the topological results on semi-open sets and semi-continuous functions
developed by Levine in \cite{Lev63} (which we re-stated in \cref{sec:basictopology})
for characterizing the solvability of stabilizing consensus. 
Note that the uniform topology would
not be useful here, since there is no meaningful uniform variant of Stabilizing Agreement (SA)
in \cref{def:sconsensus}.

We already noted at the beginning
of \cref{sec:limitpoints} that the continuity of the decision function of terminating
consensus makes it mandatory that there is no admissible limit point ``between''
two different decision sets, recall the illustration in \cref{fig:noncompactMA}.
After all, a continous function cannot map a connected space to a disconnected
space. For example, since all executions are admissible in the lossy-link model,
resulting in a connected space $\Sigma=\C^\omega$, it is hence impossible to 
solve terminating consensus.

Since stabilizing consensus \emph{can} be solved in the lossy-link model (see \cref{sec:LLmodel}), its decision function
cannot be continuous. Rather, it must be able to map a connected space to a disconnected
one. \cref{lem:Scont:Sconsensus} below will reveal that the decision function of
a \emph{proper} stabilizing consensus algorithm must be \emph{semi-continuous}, in the sense 
of \cref{DefSemiCont:Lev}, and that the decision sets must be \emph{semi-open}, in the sense
of \cref{def:semiopenset}. \emph{Proper} decision functions will 
ensure that no decision set $\Sigma_v$ contains an execution
that is \emph{not} one of its own limit points (but rather a limit point of some other
decision set $\Sigma_w$):

\begin{definition}[Proper stabilizing consensus]\label{def:properSC} A decision
function for solving stabilizing consensus is called \emph{proper},
iff
\begin{enumerate}
\item[(i)] its domain covers independent arbitrary input assignments as defined in 
\cref{def:independentinputassumption}, and
\item[(ii)] none of its induced decision sets $\Sigma_v$, $v\in \V$, contains 
any execution $\gamma \not\in \cl{\inte(\Sigma_v)}$, unless $\gamma$ is an isolated 
point in $\Sigma$ (where $\{\gamma\}$ is open in $\Sigma$).
\end{enumerate}
\end{definition}
Note that an isolated point in $\Sigma$ is also an isolated point
in the subspace $\Sigma_v$.

Whereas \cref{def:properSC} restricts the class of stabilizing consensus algorithms
covered by our topological characterization, we will prove later (\cref{lem:nonproperreduction}) 
that it is easy to turn a decision function
$\Delta'$ of a non-proper stabilizing consensus algorithm that satisfies (i) into 
a decision function $\Delta$ that also satisfies (ii), i.e., is proper.

We first instantiate \cref{Thm12:Lev} in our special context, which exploits
the fact that every singleton set $\{v\}$ for $v\in\V$ is open in the discrete
topology.

\begin{corollary}[Characterization of semi-continuous functions]\label{OurThm12:Lev} 
Let $\Delta: \Sigma \to \V$ (where $\Sigma$ is equipped with the non-uniform topology, 
and $\V$ is equipped with the discrete topology) 
be a single-valued function.
Then $\Delta: \Sigma \to \V$ is semi-continuous if and only if for every execution
$\gamma \in \Sigma$ with $\Delta(\gamma)=v\in\V$, there exists a semi-open set $A$
in $\Sigma$ such that $\gamma\in A$ and $\Delta(A)=v$.
\end{corollary}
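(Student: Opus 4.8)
The plan is to specialize Levine's \cref{Thm12:Lev} (equivalently, to argue directly from \cref{DefSemiCont:Lev}) by exploiting the fact that $\V$ carries the discrete topology. The only structural feature of the target space I will use is that every open set $O^* \subseteq \V$ is the union of the open singletons $\{v\}$, $v \in O^*$, so that every pre-image $\Delta^{-1}(O^*)$ is the union of the fibers $\Delta^{-1}(\{v\})$. Together with \cref{thm:arbunionso}, which guarantees that arbitrary unions of semi-open sets are semi-open, this reduces the whole question of semi-continuity of $\Delta$ to the behaviour of $\Delta$ on individual fibers, which is exactly what the pointwise condition in the corollary records (reading ``$\Delta(A)=v$'' as $A \subseteq \Delta^{-1}(\{v\})$).

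For the forward direction I would assume $\Delta$ semi-continuous and simply test the definition on the open singleton $\{v\}$: since $\{v\}$ is open in the discrete topology, $\Delta^{-1}(\{v\})$ is semi-open in $\Sigma$ by \cref{DefSemiCont:Lev}. Then for any $\gamma$ with $\Delta(\gamma)=v$, the set $A:=\Delta^{-1}(\{v\})$ is a semi-open set containing $\gamma$ on which $\Delta$ is constantly $v$, which is precisely the asserted pointwise property. (Alternatively, this is \cref{Thm12:Lev} applied with the minimal open neighbourhood $O^*=\{v\}$ of $\Delta(\gamma)$.)

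The backward direction carries the actual content, and I would prove it in two steps. First, fix $v \in \V$ and show that the fiber $\Delta^{-1}(\{v\})$ is semi-open: by hypothesis, every $\gamma$ in the fiber admits a semi-open set $A_\gamma$ with $\gamma \in A_\gamma$ and $\Delta(A_\gamma)=v$, i.e.\ $A_\gamma \subseteq \Delta^{-1}(\{v\})$; hence $\Delta^{-1}(\{v\}) = \bigcup_{\gamma} A_\gamma$ (the inclusion $\supseteq$ holds because each $A_\gamma$ lies in the fiber, and $\subseteq$ because each $\gamma$ lies in its own $A_\gamma$), and this union is semi-open by \cref{thm:arbunionso}. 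If the fiber is empty it equals $\emptyset$, which is open and hence semi-open, so that case is harmless. Second, for an arbitrary open $O^* \subseteq \V$ I would write $\Delta^{-1}(O^*) = \bigcup_{v \in O^*} \Delta^{-1}(\{v\})$ and invoke \cref{thm:arbunionso} once more to conclude that $\Delta^{-1}(O^*)$ is semi-open; by \cref{DefSemiCont:Lev} this means $\Delta$ is semi-continuous.

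I do not expect a serious obstacle: the argument is essentially bookkeeping around the two inclusions that identify a fiber with the union of its local semi-open witnesses, together with the observation that the target topology is generated by the open singletons. The one point that must be handled with care --- and the reason the statement works at all --- is that semi-openness is preserved under \emph{arbitrary} (not merely finite) unions; this is supplied by Levine's \cref{thm:arbunionso} and is exactly the feature that would fail if we demanded openness of the fibers rather than semi-openness.
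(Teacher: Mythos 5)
Your proof is correct. The paper itself offers no written proof of this corollary: it simply declares it an instantiation of Levine's \cref{Thm12:Lev}, justified by the one-line observation that every singleton $\{v\}$ is open in the discrete topology on $\V$ (so the general neighbourhood condition of \cref{Thm12:Lev} collapses to the condition on the minimal neighbourhood $O^*=\{v\}$). Your forward direction coincides with this, and you even note the alternative phrasing via \cref{Thm12:Lev}. Where you genuinely diverge is the backward direction: instead of citing \cref{Thm12:Lev} as a black box, you re-derive its nontrivial half in this special case from first principles --- covering each fiber $\Delta^{-1}(\{v\})$ by the local semi-open witnesses $A_\gamma$, checking both inclusions so that the fiber equals the union, invoking \cref{thm:arbunionso} to get semi-openness of the fiber, and then decomposing $\Delta^{-1}(O^*)$ as a union of fibers to verify \cref{DefSemiCont:Lev} directly. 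This buys a self-contained argument that depends only on \cref{def:semiopenset}, \cref{DefSemiCont:Lev}, and \cref{thm:arbunionso}, at the cost of a little bookkeeping that the paper delegates to Levine; your handling of the empty fiber (open, hence semi-open) is the one edge case that a careless version of this route would miss, and you cover it.
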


\begin{lemma}\label{lem:Scont:Sconsensus}
Let $\Delta:\Sigma\to\V$ be the decision function of a proper
stabilizing consensus algorithm.
Then, $\Delta$ is semi-continuous with respect to 
the 
%uniform distance function~$\dunif$ and the 
non-uniform distance function~$\dnonunif$.
\end{lemma}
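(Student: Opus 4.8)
The plan is to reduce semi-continuity of $\Delta$ to the single claim that each decision set $\Sigma_v = \Delta^{-1}(\{v\})$ is semi-open in the non-uniform topology, and then to read that claim off almost verbatim from the properness condition. Since $\V$ carries the discrete topology, every open $O^* \subseteq \V$ satisfies $\Delta^{-1}(O^*) = \bigcup_{v \in O^*} \Sigma_v$; by \cref{thm:arbunionso} an arbitrary union of semi-open sets is again semi-open, so by \cref{DefSemiCont:Lev} it suffices to show that every individual $\Sigma_v$ is semi-open. (Equivalently, one could invoke \cref{OurThm12:Lev} with the witness $A = \Sigma_v$, for which $\Delta(A) = v$ holds by construction; either route reduces the statement to semi-openness of the decision sets.)

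To show $\Sigma_v$ is semi-open I would take the canonical witness $O = \inte(\Sigma_v)$. By definition of the interior this is the largest open set contained in $\Sigma_v$, so $O \subseteq \Sigma_v$ is automatic, and it only remains to verify the reverse containment $\Sigma_v \subseteq \cl{O} = \cl{\inte(\Sigma_v)}$ required by \cref{def:semiopenset}. This is exactly where properness enters: by clause (ii) of \cref{def:properSC}, any $\gamma \in \Sigma_v$ with $\gamma \notin \cl{\inte(\Sigma_v)}$ must be an isolated point of $\Sigma$. The remaining work is a short case split on an arbitrary $\gamma \in \Sigma_v$. If $\gamma$ is not isolated, clause (ii) directly yields $\gamma \in \cl{\inte(\Sigma_v)}$. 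If $\gamma$ is isolated, then $\{\gamma\}$ is open in $\Sigma$ and contained in $\Sigma_v$, so $\gamma$ is an interior point of $\Sigma_v$, i.e.\ $\gamma \in \inte(\Sigma_v) \subseteq \cl{\inte(\Sigma_v)}$. In either case $\gamma \in \cl{O}$, establishing $\Sigma_v \subseteq \cl{O}$ and hence that $\Sigma_v$ is semi-open; the degenerate case $\Sigma_v = \emptyset$ is semi-open via $O = \emptyset$. Combining this with the union argument above gives semi-continuity of $\Delta$.

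I expect the genuine content to be localized entirely in this case split; the reduction to individual decision sets is routine, and properness has been tailored precisely so that $\inte(\Sigma_v)$ serves as the open core demanded by semi-openness. The only point demanding care is the isolated-point clause: one must observe that an isolated point of $\Sigma$ lying in $\Sigma_v$ automatically lands in $\inte(\Sigma_v)$, so that the ``unless'' exception in \cref{def:properSC}(ii) is not a genuine escape from the containment $\Sigma_v \subseteq \cl{\inte(\Sigma_v)}$ but is instead fully consistent with it. Once this is recognized, no further topological input beyond Levine's union theorem and the definitions of interior and closure is needed.
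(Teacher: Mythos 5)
Your proof is correct, but it takes a more direct route than the paper's. The paper argues pointwise: for each $\gamma\in\Sigma_v$ it constructs a semi-open set $A\ni\gamma$ with $\Delta(A)=\{v\}$ via a three-way case split (interior point, isolated point, included boundary point --- the last requiring a union $\{\gamma\}\cup\bigcup_k B_{2^{-\ell_k}}(\beta_k)$ of balls around interior points converging to $\gamma$), and then invokes the characterization of semi-continuity in \cref{OurThm12:Lev}. You instead verify \cref{DefSemiCont:Lev} globally: reduce to semi-openness of each $\Sigma_v$ via \cref{thm:arbunionso}, and exhibit the canonical witness $O=\inte(\Sigma_v)$, observing that \cref{def:properSC}(ii) gives $\Sigma_v\subseteq\cl{\inte(\Sigma_v)}$ once you note that an isolated point of $\Sigma$ lying in $\Sigma_v$ is automatically in $\inte(\Sigma_v)$. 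Your version isolates exactly what properness buys and collapses the paper's case (3) into a one-line appeal to the definition of closure. It also sidesteps a delicacy in the paper's argument: since $\dnonunif$ need not satisfy the triangle inequality, balls $B_{2^{-k}}(\gamma)$ are not automatically open in the induced topology, whereas $\inte(\Sigma_v)$ is open by definition. Both proofs rest on the same properness clause, so the mathematical content is the same; yours is the cleaner packaging.
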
 
\begin{proof}
%The same proof applies for both the uniform and non-uniform distance function and
We use the only-if direction ($\Leftarrow$) of \cref{OurThm12:Lev}. Let $\Sigma_v=\Delta^{-1}(v)$, 
and distinguish 3 cases for $\gamma\in\Sigma_v$ when determining the required semi-open
set $A \subseteq \Sigma_v$ containing $\gamma$:

(1) If $\gamma \in \inte(\Sigma_v)$, there must be some $k\geq 0$ such
that the open ball $B_{2^{-k}}(\gamma)$ with radius $2^{-k}$ (for
distance function $\dnonunif$) satisfies $B_{2^{-k}}(\gamma) \subseteq \Sigma_v$.
We take $A=B_{2^{-k}}(\gamma)$ for \cref{OurThm12:Lev} in this case, which
is obviously open and hence semi-open.

(2) If $\gamma$ is an isolated point in $\Sigma_v$, the singleton set
$\{\gamma\}$ is open in $\Sigma_v$, so we can safely take $A=\{\gamma\}$,
which is again open and hence semi-open.

(3) If $\gamma \in \bdin \Sigma_v$ is an (included) boundary point and hence a
limit point of $\Sigma_v$ (as $\Delta$ is a proper stabilizing consensus decision function),
every open ball $B_{2^{-k}}(\gamma)$, $k\geq 0$, also intersects $\Sigma_v$ in a point 
$\beta_k$ different from $\gamma$. Choosing $\beta_k \in \inte(\Sigma_v)$,
there is some $\ell_k\geq 0$ such that the open ball $B_{2^{-\ell_k}}(\beta_k) \subseteq \inte(\Sigma_v)$.
We thus take $A=\{\gamma\}\cup\bigcup_{k\geq 0}B_{2^{-\ell_k}}(\beta_k)$,
which is semi-open as an arbitrary union of (semi-)open sets with limit point $\gamma$, 
recall \cref{thm:arbunionso}.

Applying the only-if direction of \cref{OurThm12:Lev} proves the asserted semi-continuity
of $\Delta$.
\end{proof}

By applying \cref{Lem1:Lev} to the result of \cref{lem:Scont:Sconsensus} while recalling
\cref{def:properSC}.(ii), we immediately observe that the boundaries of
all decision sets are nowhere dense:

\begin{corollary}[Nowhere dense boundaries]\label{lem:nowheredenselimits} 
Let $\Delta: \Sigma \to \V$ be a proper stabilizing consensus decision function,
and $\Sigma_v$, $v\in\V$ be the corresponding decision sets. Then, 
the boundary $\bd \Sigma_v$, and hence the included boundary $\bdin \Sigma_v$,
of every decision set is nowhere dense.
\end{corollary}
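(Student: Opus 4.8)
The plan is to show that the full boundary $\bd\Sigma_v$ of each decision set coincides with the boundary of its interior $O:=\inte(\Sigma_v)$, at which point \cref{Lem1:Lev} applies verbatim. The starting observation is that $\Sigma_v=\Delta^{-1}(v)$ is semi-open: since $\{v\}$ is open in the discrete topology on $\V$, this is exactly the preimage of an open set under the semi-continuous map $\Delta$ from \cref{lem:Scont:Sconsensus}, so by \cref{OurThm12:Lev} (or directly by \cref{thm:representationso}) we have $O\subseteq\Sigma_v\subseteq\cl{O}$. The only nontrivial inclusion to re-derive in the present language is $\Sigma_v\subseteq\cl{O}$, and this is where properness enters.

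First I would fix $v\in\V$, set $O=\inte(\Sigma_v)$, and note the trivial inclusion $O\subseteq\Sigma_v$. Next I would establish $\Sigma_v\subseteq\cl{O}$ by taking an arbitrary $\gamma\in\Sigma_v$ and invoking \cref{def:properSC}.(ii): either $\gamma\in\cl{\inte(\Sigma_v)}=\cl{O}$, and we are done, or $\gamma$ is an isolated point of $\Sigma$. In the latter case $\{\gamma\}$ is open in $\Sigma$ and contained in $\Sigma_v$, so $\gamma\in\inte(\Sigma_v)=O\subseteq\cl{O}$. Thus in both cases $\gamma\in\cl{O}$, giving $\Sigma_v\subseteq\cl{O}$. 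The key point being exploited here is that the exceptional clause of (ii) is harmless: an isolated point of $\Sigma$ lying in $\Sigma_v$ automatically lies in the \emph{interior} of $\Sigma_v$, so it can never be an offending boundary point.

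With $O\subseteq\Sigma_v\subseteq\cl{O}$ in hand, taking closures yields $\cl{O}\subseteq\cl{\Sigma_v}\subseteq\cl{O}$, hence $\cl{\Sigma_v}=\cl{O}$, and therefore $\bd\Sigma_v=\cl{\Sigma_v}-\inte(\Sigma_v)=\cl{O}-O=\bd O$. Applying \cref{Lem1:Lev} to the open set $O$ shows that $\bd O$, and hence $\bd\Sigma_v$, is nowhere dense in $\Sigma$. For the included boundary I would finish by noting $\bdin\Sigma_v=\Sigma_v\cap\bd\Sigma_v\subseteq\bd\Sigma_v$, and since every subset of a nowhere dense set is nowhere dense (as recalled in \cref{sec:basictopology}), $\bdin\Sigma_v$ is nowhere dense as well. (The degenerate case $O=\emptyset$ forces $\Sigma_v=\emptyset$ by the inclusion just proved, making the claim trivial.)

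The step I expect to be the main obstacle is the inclusion $\Sigma_v\subseteq\cl{O}$, specifically the correct bookkeeping of isolated points: one must argue that the ``unless isolated'' escape in \cref{def:properSC}.(ii) cannot produce a point of $\Sigma_v$ outside $\cl{\inte(\Sigma_v)}$, precisely because isolatedness in $\Sigma$ places the point inside $\inte(\Sigma_v)$. Once this collapse is justified, the identity $\bd\Sigma_v=\bd O$ and the appeal to \cref{Lem1:Lev} are routine.
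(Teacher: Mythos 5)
Your proof is correct and follows essentially the same route the paper intends: the paper states this corollary as an immediate consequence of \cref{lem:Scont:Sconsensus}, \cref{def:properSC}.(ii) and \cref{Lem1:Lev}, and your argument is exactly that chain with the details filled in, including the correct observation that the isolated-point escape clause in properness is harmless because such a point lies in $\inte(\Sigma_v)$. The identification $\bd\Sigma_v=\bd\,\inte(\Sigma_v)$ via $\cl{\Sigma_v}=\cl{\inte(\Sigma_v)}$ and the closing remark on subsets of nowhere dense sets are both sound.
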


Since the union of the boundaries $\bd \Sigma_v$ of all decision sets
contains the points of discontinuity of $\Delta$,
\cref{lem:nowheredenselimits} is consistent with the result of the following
\cref{lem:nowheredense}, which is a specialization of \cref{Thm13:Lev}:

\begin{theorem}[Nowhere dense discontinuities]\label{lem:nowheredense} 
Let $\Delta: \Sigma \to \V$ be a proper stabilizing consensus decision function. Then, the set $P$
of points of discontinuity of $\Delta$ in $\Sigma$ is nowhere dense.
\end{theorem}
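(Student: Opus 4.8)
The plan is to identify the discontinuity set $P$ explicitly as a union of included boundaries of the decision sets, and then to upgrade the meager bound coming from Levine's \cref{Thm13:Lev} to genuine nowhere-density by exploiting the finiteness of $\V$. Note first that \cref{Thm13:Lev}, applied to the semi-continuous $\Delta$ of \cref{lem:Scont:Sconsensus} (legitimate since $\V$ with the discrete topology is trivially a 2nd axiom space), only yields that $P$ is of first category. This is weaker than what we claim; the improvement to nowhere-density will come precisely from the fact that $\V$ is finite, so that the countable union in the general theorem collapses to a finite one.

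First I would pin down $P$ using the discreteness of $\V$. Since every singleton $\{v\}$ is open, $\Delta$ is continuous at $\gamma$ if and only if it is locally constant there, i.e.\ if and only if $\gamma\in\inte(\Sigma_{\Delta(\gamma)})$: the interior supplies the required open neighbourhood mapping into $\{\Delta(\gamma)\}$, and conversely any such neighbourhood witnesses $\gamma$ as an interior point. Consequently $\gamma$ is a point of discontinuity exactly when $\gamma\in\Sigma_{\Delta(\gamma)}\setminus\inte(\Sigma_{\Delta(\gamma)})=\bdin\Sigma_{\Delta(\gamma)}$. Collecting over all values gives
\begin{equation*}
P=\bigcup_{v\in\V}\bdin\Sigma_v\subseteq\bigcup_{v\in\V}\bd\Sigma_v .
\end{equation*}
(Observe in passing that any isolated point of $\Sigma$ lies in the interior of its decision set and is therefore automatically excluded from $P$, consistent with the caveat in \cref{def:properSC}.(ii).)

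Next I would invoke \cref{lem:nowheredenselimits}, which—using properness—guarantees that each boundary $\bd\Sigma_v$ is nowhere dense in $\Sigma$. Because $\V$ is finite, the right-hand union is a \emph{finite} union of nowhere dense sets and hence itself nowhere dense, as recalled in \cref{sec:basictopology}. Finally, $P$ is a subset of this nowhere dense set, and every subset of a nowhere dense set is nowhere dense; therefore $P$ is nowhere dense, as claimed.

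The argument is short, and the only step requiring care is the identification $P=\bigcup_{v}\bdin\Sigma_v$, which rests on the equivalence of continuity-into-a-discrete-space with local constancy; properness, through \cref{lem:nowheredenselimits}, is exactly what rules out the boundary pathologies that would otherwise obstruct nowhere-density. The essential gain over the generic \cref{Thm13:Lev} is the replacement of a countable union by a finite one, licensed solely by the finiteness of the value set $\V$.
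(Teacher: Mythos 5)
Your proof is correct and follows essentially the same route as the paper's: both arguments cover $P$ by the finitely many nowhere dense (included) boundaries $\bdin \Sigma_v$ of the decision sets and conclude via the closure of nowhere dense sets under finite unions and subsets. Your version is in fact slightly cleaner, since the exact identification $P=\bigcup_{v\in\V}\bdin\Sigma_v$ via local constancy into the discrete space $\V$ lets you cite \cref{lem:nowheredenselimits} directly, instead of re-running the characterization of \cref{OurThm12:Lev} pointwise and then arguing that only finitely many distinct nowhere dense sets $B_\gamma$ arise.
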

\begin{proof}
Since $\V$ is equipped with the discrete topology, it has a finite
open basis since the singleton sets $\{v\}$, $v\in \V$, are open.
For every point of discontinuity $\gamma\in P$, there exists an open set $V_\gamma=\{v\}$
for some $v\in \V$, such that for some open set $W \subseteq \Sigma$
it holds that $\gamma \in W$ but $\Delta(W) \not\subseteq V_\gamma$.
By \cref{OurThm12:Lev}, there exists a semi-open set $A_\gamma$ with
$\gamma \in A_\gamma$ and $\Delta(A_\gamma) \subseteq \{v\}=V_\gamma$.
By definition, $A_\gamma = O_\gamma \cup B_\gamma$ with $B_\gamma
\subseteq \bd O_\gamma = \cl{O_\gamma}-O_\gamma$, so $\gamma \not\in O_\gamma$
and hence $\gamma\in B_\gamma$. Since $\bd O_\gamma$ is nowhere dense
by \cref{Lem1:Lev}, the subset $B_\gamma$ is also nowhere dense.
As there are only finitely many different sets $V_\gamma$,
it follows that $\bigcup_{\gamma \in P}B_\gamma$ is a finite
union of nowhere dense sets, which is also nowhere dense. Since
$P \subseteq \bigcup_{\gamma \in P}B_\gamma$, this also holds
for $P$.
\end{proof}

%In our (pseudo-semi-)metric topologies for $\Sigma$, \cref{lem:isolationdefinition} 
%implies the following property of included boundary points:

%\begin{corollary}\label{cor:isolationproperty}
%Let $A$ be a nowhere dense set in $\Sigma$ equipped with distance function $d_p$ or $\dnonunif$.
%Then, for every $x\in A$ and every $\varepsilon>0$, there is an open subset $V$ of the open ball 
%$B_\varepsilon(x)$ that satisfies $V\cap A=\emptyset$.
%\end{corollary}

The following \cref{thm:Schar:nonunif} provides the topological characterization of
stabilizing consensus solvability in the non-uniform topology:

\begin{theorem}[Characterization of stabilizing consensus]\label{thm:Schar:nonunif}
Stabilizing consensus with weak validity is solvable with a proper decision function
if and only if the set~$\Sigma$ of admissible executions can be split
into disjoint sets $\Sigma_v$, $v\in\V$, with $\Sigma=\bigcup_{v\in\V}\Sigma_v$, such that
the following holds:
\begin{enumerate}
\item[(1)] Every $\Sigma_v$ is semi-open in~$\Sigma$ with respect to the
non-uniform topology induced by~$\dnonunif$.
\item[(2)] If execution $\gamma\in\Sigma$ is $v$-valent, then $\gamma \in \Sigma_v$.
\end{enumerate}
\end{theorem}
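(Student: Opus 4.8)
The plan is to prove the equivalence in both directions, following the blueprint of the terminating-consensus characterization \cref{thm:char:nonunif} but with \emph{semi-open} decision sets in place of clopen ones and a \emph{semi-continuous} decision function in place of a continuous one. Throughout I write each candidate set in its semi-open normal form $\Sigma_v=O_v\cup B_v$ with $O_v=\inte(\Sigma_v)$ and $B_v=\bdin\Sigma_v$ nowhere dense, as furnished by \cref{thm:representationso}.

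For the direction ($\Rightarrow$), assume a proper stabilizing consensus algorithm with global decision function $\Delta:\Sigma\to\V$ and set $\Sigma_v=\Delta^{-1}(v)$; these sets are disjoint and exhaust $\Sigma$ because $\Delta$ is single-valued and total. By \cref{lem:Scont:Sconsensus} the function $\Delta$ is semi-continuous for $\dnonunif$, and since each singleton $\{v\}$ is open in the discrete topology on $\V$, \cref{DefSemiCont:Lev} gives at once that $\Sigma_v=\Delta^{-1}(\{v\})$ is semi-open, which is item~(1). Item~(2) is immediate: if $\gamma$ is $v$-valent (\cref{def:vvalentexec}), then weak validity~(V) leaves $v$ as the only admissible decision value, so $\Delta(\gamma)=v$ and $\gamma\in\Sigma_v$.

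For the direction ($\Leftarrow$), I start from a partition into semi-open sets satisfying~(1) and~(2). The induced map $\Delta(\gamma)=v\iff\gamma\in\Sigma_v$ is semi-continuous, since the preimage of any subset of $\V$ is a union of the $\Sigma_v$ and hence semi-open by \cref{thm:arbunionso}, so \cref{OurThm12:Lev} applies; this already shows that a proper stabilizing decision function of the required shape \emph{can} exist, and it remains to realize it by local, view-based decision functions $\Delta_p:\C\to\V$. I would split the realization along the open dense set $O=\bigcup_v O_v$ and its nowhere-dense complement. On $O$ the interiors form a clopen partition---each $O_v$ is open, and its relative complement $\bigcup_{w\neq v}O_w$ is open as well---so I can run exactly the commitment rule behind \cref{thm:char:nonunif}: process $p$ inspects the set $B^p_t(\gamma)=\{\delta\in\Sigma : d_p(\gamma,\delta)<2^{-t}\}$ of executions it cannot yet distinguish from $\gamma$ and outputs $v$ as soon as $B^p_t(\gamma)\subseteq\Sigma_v$. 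Because $\gamma$ itself always lies in $B^p_t(\gamma)$, any firing of this test yields the correct value $\Delta(\gamma)$; and for $\gamma\in O_v$ the $\dnonunif$-ball, and hence every $B^p_t(\gamma)$, is eventually contained in $O_v\subseteq\Sigma_v$, so all obedient processes commit to $v$ and Stabilizing Agreement~(SA) holds on the interior exactly as in the terminating case.

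The main obstacle is Stabilizing Agreement on the nowhere-dense boundary $\Sigma\setminus O=\bigcup_v B_v$, where the commitment test may never fire: a boundary execution $\gamma\in B_v$ sits in the closures of possibly several interiors, so every finite view stays compatible with more than one decision set and no process can ever certify a single value. Here the terminating-consensus argument breaks down and the construction must instead emit a \emph{revisable} best guess that merely stabilizes in the limit. I would close this gap by combining three ingredients. First, properness (\cref{def:properSC}(ii)), which for the induced sets is equivalent to semi-openness, guarantees $\gamma\in\cl{O_v}$, so $\gamma$ is a $\dnonunif$-limit of genuinely interior executions all carrying the value $v$; by the min-structure of $\dnonunif$ some obedient process's refining view approaches $O_v$ along such a sequence. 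Second, the full-information relay forces the views of all obedient processes to become mutually consistent, so their guesses are eventually driven to the single value $\Delta(\gamma)$ determined on the common limit class. Third, since stabilizing consensus permits finitely many revisions, a process that temporarily adopts a neighbouring value while its view still straddles two sets may correct itself once the refinement separates $\gamma$ from the offending nearby executions; these transients are finite and do not affect the limit. Granting these, all obedient processes stabilize to $\Delta(\gamma)=v$, giving~(SA); validity~(V) follows from~(2) because each $v$-valent execution lies in $\Sigma_v$ and is therefore assigned $v$; and the realized decision function is proper because its decision sets are semi-open with interior $O_v$. This establishes the equivalence.
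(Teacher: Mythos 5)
Your ($\Rightarrow$) direction matches the paper's: $\Sigma_v=\Delta^{-1}(v)$, semi-continuity from \cref{lem:Scont:Sconsensus}, semi-openness via \cref{DefSemiCont:Lev}, and weak validity giving item~(2). That half is correct.

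The ($\Leftarrow$) direction has a genuine gap. On the interior $O_v=\inte(\Sigma_v)$ your commitment rule (output $v$ once the set of executions compatible with the current view is contained in $\Sigma_v$) coincides with the paper's Case~(a) and works. But on the included boundary you correctly identify the obstacle---an execution $\gamma\in\bdin\Sigma_v$ that is also a limit point of some $\Sigma_w$ has, at \emph{every} finite time, a compatibility set meeting both $\Sigma_v$ and $\Sigma_w$, so the subset test never fires---and then you do not resolve it. Your three ``ingredients'' do not amount to a decision rule. Ingredient~1 (properness puts $\gamma$ in $\cl{O_v}$) gives the process nothing it can act on, since $\gamma$ may equally well lie in $\cl{O_w}$ forever; the whole question is how the process chooses between $v$ and $w$ when its view never excludes either. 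Ingredient~2 (``the full-information relay forces the views of all obedient processes to become mutually consistent'') is false in general---in the lossy-link model one process may never hear from the other---and agreement in the paper is instead obtained by contradiction: every determinate case of the decision rule has a condition that would force $\gamma\in\Sigma_w$, impossible since the $\Sigma_v$ are disjoint. Ingredient~3 (finitely many revisions are permitted) restates what must be proved, namely that the revisions stop.

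What is actually needed, and what the paper supplies, is a further stratification of the boundary. The process first tests whether its compatibility set has become disjoint from the included boundaries $\bdin\Sigma_w$ of all other decision sets (Case~(b)); when $\gamma$ is a limit point of several $\bdin\Sigma_w$ taken in the subspace $Y=\bigcup_{v\in\V}\bdin\Sigma_v$, the tie is broken by the observation that two executions at non-uniform distance $0$ must receive the same decision value, which forces the common limit points into a unique $\bdin\Sigma_v$ (Case~(c)); only while none of these tests resolves does the process emit an arbitrary compatible guess (Case~(d)). Without such a rule, together with an argument that one of the determinate cases eventually applies for every $\gamma$, your construction does not establish Stabilizing Agreement on the nowhere dense but possibly nonempty boundary---which is precisely the set of executions that separates stabilizing from terminating consensus.
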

\begin{proof}
($\Rightarrow$): Assume stabilizing consensus is solvable with a proper
decision function $\Delta$, and define $\Sigma_v = \Delta^{-1}(v)$. Since
$\Delta$ is semi-continuous by \cref{lem:Scont:Sconsensus}, it follows
from \cref{DefSemiCont:Lev} that $\Sigma_v$ is semi-open, as asserted in (1).
Validity immediately implies property~(2).

($\Leftarrow$): Assume every $\Sigma_v$ is semi-open and contains all $v$-valent
executions. We define a proper stabilizing consensus algorithm with weak
validity by defining the decision functions
$\Delta_p:\C\to\V$, $p\in\Pi$. Assume that the actual run is the admissible execution $\gamma=(C^t)_{t\geq 0}\in\Sigma$, 
which is of course unknown to the process $p \in \ob(\gamma)$, 
albeit $p$ can observe the sequence of local views $(V_p(C^{t_i}))_{t_i\geq 0}$, where $t_0=0$ and $t_i$
denotes the (unknown) global time of the $i$-th computing step of $p$.
Consider the set of executions defined by
{
\begin{align}
D_p(\gamma,{t_i}) &= \bigl\{\delta=(D^t)_{t\geq 0} \in \Sigma \mid (p \in \ob(\delta))\nonumber\\
&\qquad\qquad\qquad\qquad \wedge \exists s: C^{t_i} \sim_p D^s\bigr\} \nonumber\\
&= \bigl\{\delta=(D^t)_{t\geq 0} \in \Sigma \mid (p \in \ob(\delta)) \nonumber\\
&\qquad\qquad\qquad\qquad \wedge \exists s: V_p(C^{t_i})= V_p(D^s)\bigr\}.\nonumber
\end{align}
}
Note that $p$ can compute the set $D_p(\gamma,{t_i})$ at time ${t_i}$, since it
depends only on the current view $V_p(C^{t_i})$ and the sets $\Sigma_v$, $v\in\V$,
which we assume to be a priori known to $p$. Moreover, we will show below that 
$D_p(\gamma,{t_i})$ is open.

\begin{table*}
{
\begin{empheq}[left={\Delta_p(C^{t_i})=\empheqlbrace}]{align}
  v &\quad \mbox{if $D_p(\gamma,{t_i})\subseteq \Sigma_v$} \quad\mbox{(Case (a): $\gamma \in \inte(\Sigma_v)$ or $(\gamma \in \bdin\Sigma_v) \wedge (\gamma\not\in\bd\Sigma_w)$)},\label{eq:SCcase1} \\
  v &\quad \mbox{if, for all $w\neq v$, $D_p(\gamma,t_i)\cap \bdin\Sigma_w=\emptyset$} \quad\mbox{(Case (b): $\gamma \in \bdtwo{\bdin \Sigma_v}$ only)},\label{eq:SCcase2}\\
  v &\quad \mbox{if } D_p(\gamma,{t_i}) \cap \!\!\! \!\!\! \bigcap_{\substack{w \text{ where }\\
    D_p(\gamma,t_i)\cap \bdin \Sigma_w \neq \emptyset}}\!\!\! \!\!\!  \bdtwo{\bdin \Sigma_w} \subseteq \bdin \Sigma_v \quad\mbox{(Case (c): $\gamma \in \bdtwo{\bdin \Sigma_w}$ also)},\label{eq:SCcase3}\\
  w &\quad \mbox{otherwise, for any $w$ with $D_p(\gamma,{t_i}) \cap \Sigma_w \neq \emptyset$} \quad\mbox{(Case (d): $\gamma$ still unknown)},\label{eq:SCcase4}
\end{empheq}
}
\caption{Decision function for the proof of \cref{thm:Schar:nonunif}. Herein, $\bdtwo{\bdin \Sigma_w}$ denotes the boundary of the included boundary $\bdin \Sigma_w$ (taken in the subspace $Y=\bigcup_{v\in\V} \bdin \Sigma_v$).}
\label{tab:decfunc}
\end{table*}

We define $p$'s decision function for configuration $C^{t_i}$ at time ${t_i}$ in $\gamma$ as shown in \cref{tab:decfunc},
where we use $\bdtwo{\bdin \Sigma_w}$ to denote the boundary of the included boundary $\bdin \Sigma_w$ (taken in the subspace $Y=\bigcup_{v\in\V} \bdin \Sigma_v$).
The function~$\Delta_p$ is well defined, since the decision sets~$\Sigma_v$ and hence also $\bdin \Sigma_v$
are pairwise disjoint, $p\in\ob(\gamma)$, and $\gamma \in D_p(\gamma,{t_i})$.

To show Stabilizing Agreement (SA), we will distinguish several cases as illustrated in
\cref{fig:illustration}. Regarding $\gamma$ with $p\in\ob(\gamma)$, there are two
possibilities:

(1) $\gamma\in\inte(\Sigma_v)$ is an interior point. Then, 
there exists some $\ell\geq 0$ such that the open ball
$B_{2^{-\ell}}=\left\{ \delta\in \Sigma \mid \dnonunif(\gamma,\delta) < 2^{-\ell} \right\}
\subseteq
\Sigma_v$.
By definition of~$\dnonunif$, it follows that
$\dnonunif(\gamma,\delta) \leq d_p(\gamma,\delta)$ and hence
$B_{2^{-\ell}}^p(\gamma)=\bigl\{ \delta\in \Sigma \mid (p \in \ob(\delta)) \wedge d_p(\gamma,\delta) < 2^{-\ell} \bigr\}
\subseteq B_{2^{-\ell}}(\gamma) \subseteq \Sigma_v$.

Recalling that $\gamma = (C^t)_{t\geq0}$,
let~$L$ be the smallest integer such that $2^{-\chi_p(C^t)} \leq 2^{-\ell}$ for all $t\geq L$.
Such an~$L$ exists, since $\chi_p(C^t)\to\infty$ as $t\to\infty$.
Then, for every ${t_i}\geq L$, we find
{\small
\begin{align}
D_p(\gamma,{t_i}) &= \left\{ \delta\in \Sigma \mid (p \in \ob(\delta)) \wedge \exists s\colon C^{t_i} \sim_p D^s \right\}\nonumber\\
&\subseteq
\left\{ \delta\in \Sigma \mid (p \in \ob(\delta)) \wedge d_p(\gamma,\delta) < 2^{-\chi_p(C^{t_i})} \right\}\nonumber\\
&\subseteq B_{2^{-\ell}}^p(\gamma) \subseteq B_{2^{-\ell}}(\gamma) \subseteq \Sigma_v\label{eq:openincl}
\end{align}
}
Note that \cref{eq:openincl} also reveals that $D_p(\gamma,{t_i})$ is contained in an open set, hence open.
Anyway, it follows that $\Delta_p(C^{t_i}) = v$ for all ${t_i}\geq L$ according to \cref{eq:SCcase1} (Case (a): interior 
point, see \cref{fig:illustration}), 
i.e., process~$p$ eventually stabilizes on the decision value~$v$ in execution~$\gamma$.

To also show that all obedient processes decide the same value $v$ in $\gamma\in\inte(\Sigma_v)$, 
we assume for a contradiction
that process~$q \in \ob(\gamma)$ decides value $w\neq v$ in configuration $C^{t_i}$ in execution $\gamma$.
But then, by the definition of the function~$\Delta_q$, we must have
$\gamma \in D_q(\gamma,{t_i})=\bigl\{\delta=(D^t)_{t\geq 0} \in \Sigma \mid (q \in \ob(\delta)) \wedge \exists s: C^{t_i} \sim_p D^s\bigr\} \subseteq \Sigma_w$. 
But this is impossible, since $\gamma \in \Sigma_v$ and $\Sigma_v\cap\Sigma_w = \emptyset$.

\medskip

(2) $\gamma \in \bdin \Sigma_v$ is an included boundary point, and hence
a limit point of $\Sigma_v$ by \cref{def:properSC}. If $\gamma \not\in
\bd \Sigma_w$ for any $w\neq v$, then there will be some time ${t_i}$ from which
on $D_p(\gamma,{t_i})\subseteq \Sigma_v$ holds. The argument for case (1) above proves 
(SA) according to \cref{eq:SCcase1} (Case (a): boundary point, 
see \cref{fig:illustration}) in this case. Otherwise, $D_p(\gamma,{t_i})$ as a neighborhood of $\gamma$
intersects not ony $\Sigma_v$ (which is trivially the case since 
$\gamma \in D_p(\gamma,{t_i})$) but also in one or more other decision sets
$\Sigma_w$, since $\gamma$ is a limit point.

In order to determine the (a priori unknown) decision set $\Sigma_v$ containing $\gamma$, 
$\Delta_p$ considers the set $D_p(\gamma,{t_i})\cap \bdin \Sigma_w$ of candidate 
limit points in $\Sigma_w$, for all $w$ where this candidate set is non-empty.
We need to distinguish two subcases here: 

(b) If $\gamma$ is a limit point of the executions in
$\bdin \Sigma_v$, i.e., $\gamma \in  \bdtwo{\bdin \Sigma_v}$, but not also a 
limit point of any other $\bdin \Sigma_w$, i.e., $\gamma \not\in \bdtwo{\bdin \Sigma_w}$, 
there will be 
a time ${t_i}$ after which $D_p(\gamma,{t_i})\cap \bdin \Sigma_w=\emptyset$. Since
of course $\gamma \in D_p(\gamma,{t_i})\cap \bdin \Sigma_v$,
\cref{eq:SCcase2} (Case (b), see \cref{fig:illustration}) secures (SA), 
by invoking the analogous considerations involving $\chi_p$
that led to \cref{eq:openincl}.

(c) If there is at least one other candidate set for $w$, besides the one for $v$,
for which $\gamma$ is also a limit point of the limit executions in $\bdin \Sigma_w$, more specifically,
if there is a sequence of limit executions $\alpha_1,\alpha_2, \dots \in 
\bdin \Sigma_w$ with $\alpha_i \in B_{2^{-i}}(\alpha)$ for some limit
point $\alpha \in \bdtwo{\bdin \Sigma_w}$, in addition
to some sequence $\gamma_1,\gamma_2,\dots \in \bdin \Sigma_v$ with 
$\gamma_i \in  B_{2^{-i}}(\gamma)$ with a limit point $\gamma \in \bdtwo{\bdin \Sigma_v}$ satisfying
$\dnonunif(\alpha,\gamma)=0$, then the fact that $\gamma \in \Sigma_v$
also forces $\alpha \in \Sigma_v$ and hence $\alpha \in \bdin \Sigma_v$:
After all, the decision value 
of $\alpha$ and $\gamma$ must be the same since $\dnonunif(\alpha,\gamma)=0$.
Consequently, deciding $v$ according to \cref{eq:SCcase3} (Case (c), see 
\cref{fig:illustration}) ensures (SA) also in this case.

To show that all obedient processes decide the same value $v$ also here,
we again assume for a contradiction
that process~$q \in \ob(\gamma)$ decides value $w\neq v$ in configuration $C^{t_i}$ 
in execution $\gamma\in\Sigma_v$ by one of the cases in (2).
By the definition of the function~$\Delta_q$, we must have
$\gamma \in D_q(\gamma,{t_i})$. Since any of these cases requires 
$\gamma \in \Sigma_w$, this is impossible since $\gamma \in \Sigma_v$ 
and $\Sigma_v\cap\Sigma_w = \emptyset$.

If none of the above Cases (a)--(c) applies, $D_p(\gamma,{t_i})$ is still
too large to uniquely determine the decision set $\gamma$ belongs to. In
that case, $\Delta_p$ picks some possible decision value according to
\cref{eq:SCcase4} (Case (d), see \cref{fig:illustration}).

\medskip

Finally, Validity (V) immediately follows from property~(2). 
\end{proof}

\begin{figure*}
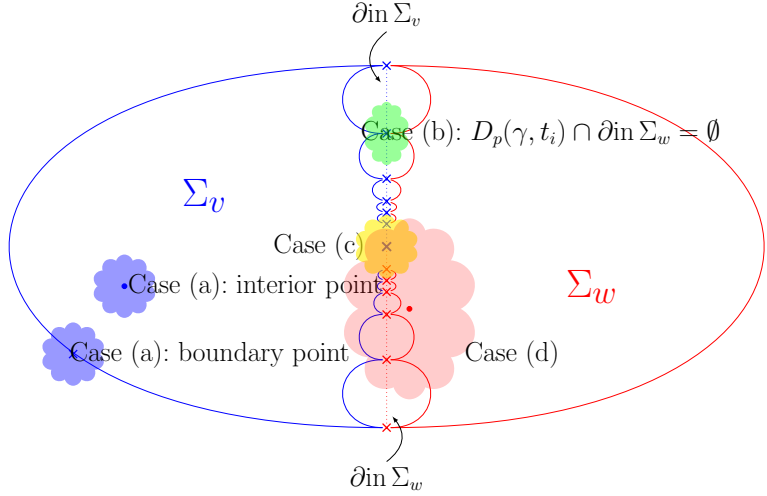

\begin{center}
  \scalebox{0.6}{
\tikzfig{illustration}
  }
\end{center}
\caption{Illustration of the two cases in the proof of \cref{thm:Schar:nonunif}, depicting the four different cases for the decision function $\Delta_p(C^{t_i})$. The blue, green, yellow, and red cloud represents $D_p(\gamma,t^i)$ 
for Case (a), (b), (c), and (d), respectively. 
The included boundaries $\bdin \Sigma_v$ resp.\ $\bdin \Sigma_w$ are formed by the limit points (marked by a
cross) that lie 
on the blue resp.\ red dotted lines. The thick limit
point at the center of the yellow cloud lies in the intersection of $\bdtwo{\bdin \Sigma_v}$ and $\bdtwo{\bdin \Sigma_w}$; we
assume here w.l.o.g. that it belongs to $\bdin \Sigma_v$ (blue).}
\label{fig:illustration}
\end{figure*}

\medskip

We conclude this section with \cref{lem:nonproperreduction}, which shows that non-proper stabilizing 
consensus algorithms can indeed be turned into proper ones: 

\begin{theorem}[Reduction of non-proper to proper]\label{lem:nonproperreduction} 
Any decision function $\Delta'$ of a non-proper stabilizing consensus algorithm that 
satisfies \cref{def:properSC}.(i) can be turned into a decision function $\Delta$ 
that also satisfies \cref{def:properSC}.(ii).
\end{theorem}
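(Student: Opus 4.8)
The plan is to reformulate property~(ii) topologically and then repair the decision sets of $\Delta'$ into semi-open ones, finally invoking \cref{thm:Schar:nonunif} to obtain the proper algorithm $\Delta$. First I would observe that every isolated point $\gamma$ of $\Sigma$ with $\gamma\in\Sigma_v$ lies in $\inte(\Sigma_v)\subseteq\cl{\inte(\Sigma_v)}$, so property~(ii) of \cref{def:properSC} is equivalent to $\Sigma_v\subseteq\cl{\inte(\Sigma_v)}$ for every $v$, i.e.\ to each $\Sigma_v$ being semi-open (take $O=\inte(\Sigma_v)$ in \cref{def:semiopenset}). Hence it suffices to turn the decision sets $\Sigma'_v=(\Delta')^{-1}(v)$ into a semi-open partition that still satisfies validity, since \cref{thm:Schar:nonunif} then delivers a proper decision function $\Delta$ on the same domain (so \cref{def:properSC}.(i) is inherited). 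Two structural properties of the $\Sigma'_v$ come for free and will be used throughout: validity gives $\gamma_v\in\Sigma'_v$ for every $v$-valent $\gamma_v$, and Stabilizing Agreement gives \emph{distance-$0$ consistency}, namely that $\dnonunif(\gamma,\delta)=0$ forces some obedient $p$ to have $C^t\sim_p D^t$ for all $t$, whence $p$ produces the same output sequence and $\Delta'(\gamma)=\Delta'(\delta)$.

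The repair keeps each open core $\inte(\Sigma'_v)$ and redistributes the rest. Writing $G=\bigcup_{v}\inte(\Sigma'_v)$, finiteness of $\V$ yields $\cl G=\bigcup_{v}\cl{\inte(\Sigma'_v)}$, so every $\gamma\in\cl G$ lies in $\cl{\inte(\Sigma'_w)}$ for at least one $w$; I would assign such a $\gamma$ to one such $w$, breaking ties by a fixed priority on $\V$. The closure of an open set is semi-open, and intersecting $\cl{\inte(\Sigma'_w)}$ with the open complements of the higher-priority closures preserves semi-openness, so together with \cref{thm:arbunionso} every resulting piece (unioned with its core) is semi-open. The executions accumulating on \emph{no} interior form an open set, which can be assigned freely so as to become interior to a single part; this disposes of the ``purely boundary'' region at no cost to semi-openness, \emph{provided} it carries no validity obligations.

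The hard part is preserving validity, and it is where the merely pseudometric nature of $\dnonunif$ bites. The enabling observation is that validity constrains only the \emph{valent} executions, and any two of them with all-$v$ resp.\ all-$w$ inputs ($v\neq w$) disagree in every process's own initial value already at time~$0$; hence $d_p=1$ for all $p$, so $\dnonunif(\gamma_v,\gamma_w)=1$ and moreover $\cl{V_v}\cap V_w=\emptyset$. Thus the valence classes $V_v$ are pairwise separated at the maximal distance, while all remaining (mixed-input) executions are free of validity obligations and may be routed wherever the semi-open repair prefers. The plan is therefore to surround each $\gamma_v\in V_v$ with interior-$v$ executions so that $V_v\subseteq\cl{\inte(\Sigma_v)}$ (the independent-input condition~(i) guarantees these neighbourhoods are populated by admissible executions), and to route every other execution to a value on whose interior it already accumulates. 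The obstacle is that, lacking the triangle inequality, ``$V_v$ and $V_w$ are distance~$1$ apart'' does \emph{not} metrically separate the balls assigned around the different classes, so making a $v$-valent execution robustly interior after tie-breaking requires keeping every $\dnonunif$-distance-$0$ class inside a single part. This is exactly where solvability of $\Delta'$ enters: distance-$0$ consistency forbids any chain of distance-$0$ executions from bridging $V_v$ and $V_w$ (it would force one output to equal both $v$ and $w$), so the distance-$0$ classes refine the intended partition and the reassignment is well defined and validity-preserving. With a semi-open, validity-respecting partition in hand, \cref{thm:Schar:nonunif} produces the desired proper decision function $\Delta$, completing the reduction.
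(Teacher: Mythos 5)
Your overall strategy differs from the paper's: you rebuild the entire partition globally (keep the interiors $\inte(\Sigma'_v)$, distribute $\cl{G}\setminus G$ by a priority rule, dump the open remainder $\Sigma\setminus\cl{G}$ into one part) and then invoke \cref{thm:Schar:nonunif}, whereas the paper only touches the offending executions: each $\gamma\in\Sigma_v$ with $\gamma\notin\cl{\inte(\Sigma_v)}$ and $\gamma$ not isolated is shown to lie in $\bd\Sigma_w$ for some $w\neq v$ (otherwise a neighbourhood of $\gamma$ would be contained in $\Sigma_v$, forcing $\gamma\in\inte(\Sigma_v)$), and is simply moved to that $\Sigma_w$; validity survives because the paper argues a $v$-valent execution cannot be such an offending point. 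Your opening reformulation of \cref{def:properSC}.(ii) as semi-openness of every $\Sigma_v$ is correct and clean, as is the observation that valence classes are pairwise at $\dnonunif$-distance $1$ and that $\dnonunif(\gamma,\delta)=0$ forces $\Delta'(\gamma)=\Delta'(\delta)$.

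However, the validity-preservation step --- which you yourself identify as the hard part --- is not actually closed, and this is a genuine gap. Concretely: (a) a $v$-valent execution $\gamma_v\in\Sigma'_v$ may lie in $\cl{\inte(\Sigma'_w)}$ for a higher-priority $w\neq v$ (nothing in SA or (V) rules this out, since the nearby executions in $\inte(\Sigma'_w)$ need not be valent), and your tie-break then assigns $\gamma_v$ to $w$, breaking (V). Your proposed remedy --- ``surround each $\gamma_v$ with interior-$v$ executions'' --- requires stealing an open neighbourhood out of $\inte(\Sigma'_w)$, and removing an open set from a semi-open set does \emph{not} preserve semi-openness (already on $\IR$, $\cl{(0,1)}\setminus(0,1)=\{0,1\}$ is not semi-open), so this repair can destroy the very property you are constructing. (b) The appeal to ``distance-$0$ classes'' does not rescue this: since $\dnonunif$ fails the triangle inequality, distance $0$ is not transitive, so these ``classes'' are not an equivalence partition; and even granting the relation, distance-$0$ consistency constrains $\Delta'$, not your reassigned partition, so ``the distance-$0$ classes refine the intended partition'' is asserting the conclusion. (c) The open remainder $\Sigma\setminus\cl{G}$ may contain valent executions of several different valences (an open set can contain points at mutual distance $1$), so it cannot be ``assigned freely to a single part''; splitting it by open balls around the valent executions reintroduces exactly the tie-breaking and boundary problems of (a). The missing ingredient, which the paper's proof relies on, is an argument that valent executions are never among the points that need to be reassigned; without some form of that claim, your global reshuffle cannot be shown to respect (V).
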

\begin{proof}
To construct $\Delta$, we set $\Delta(\gamma)=\Delta'(\gamma)$ for all executions
$\gamma$ not violating (ii). For every $\gamma$ that violates (ii), the following
change is applied: Since $\gamma\in \bdin \Sigma_v$ is not an isolated point
in $\Sigma$, it must satisfy $\gamma \in \bd \Sigma_w$ for at least one $w\neq v$. 
We pick any such $w$ and define $\Delta(\gamma)=w$, i.e., we change the original
decision value from $v$ to $w$. Note that this amounts to moving $\gamma$ from
$\bdin \Sigma_v$ to $\bdin \Sigma_w$.

Whereas this change cannot invalidate stabilizing agreement (SA) provided by
the original $\Delta'$, we need to make sure that changing the decision value
of $\gamma$ from $v$ to $w$ does not violate validity (V). This is impossible,
however, as this could only happen if $\gamma$ was $v$-valent, in which
case $\gamma \in \bd \Sigma_w$ could not have occured.
\end{proof}

\section{Equivalence of Stabilizing Consensus with Weak and Strong Validity}
\label{sec:broadcastability}

A byproduct of the topological characterization for terminating consensus developed
in \cite{NSW24:JACM} is the equivalence of weak validity (V) and strong validity (SV) (recall
\cref{def:consensus}). For binary consensus, i.e., $|\V|=2$, this is a well-known 
fact \cite[Ex.~5.1]{AW04}, for larger input sets, it was, to the best of 
our knowledge, not known before. In this section, we will show that this is also
true for stabilizing consensus. 
%Note that we have relegated this section to
%the appendix, since it primarily comprises lemmas or proofs taken
%from \cite{NSW24:JACM}.

We start with some definitions needed for formalizing this condition:

\begin{definition}[Heard-of sets {\cite[Def.~7.1]{NSW24:JACM}}]\label{def:HO}
For every process~$p\in\Pi$, there is a function $\ho_p:\C\to2^\Pi$ that maps a configuration~$C\in\C$ to the set of processes~$\ho_p(C)$ that~$p$ has (transitively) heard of in~$C$. Its extension to execution $\gamma=(C^t)_{t\geq 0}$ 
is defined as $\ho_p(\gamma) = \bigcup_{t\geq0} \ho_p(C^t)$.

Heard-of sets have the following obvious properties: For executions $\gamma = (C^t)_{t\geq0}$, $\delta = (D^t)_{t\geq0}$ and all $t\geq 0$,
\begin{enumerate}
\item[(i)] $p\in\ho_p(C^t)$, and $\ho_p(C^t) = \ho_p(D^t)$ if $C^t \sim_p D^t$,
\item[(ii)] $\ho_p(C^t)\subseteq \ho_p(C^{t+1})$,
\item[(iii)] for all $x\in\Pi$, if $x\in \ho_q(C^t) \cap \ho_q(D^t)$ and $C^t \sim_q D^t$, then $I_x(\gamma) = I_x(\delta)$ (where~$I_p(\gamma)$ denotes the initial value of process~$p$ in execution~$\gamma$).
\end{enumerate}
\end{definition}

The independent arbitrary input assignment condition stated in \cref{def:independentinputassumption} secures that, for every execution $\gamma$ with initial value assignment $I(\gamma)$, there is a
an isomorphic execution $\delta$ w.r.t.\ the HO sets of all processes 
that starts from an arbitrary other initial value assignment $I(\delta)$.

\begin{definition}[Independent arbitrary input assignment condition {\cite[Def.~7.2]{NSW24:JACM}}]\label{def:independentinputassumption}
Let $I:\Pi\to\V$ be some assignment of initial values to the processes, and
$\Sigma^{(I)}\subseteq \Sigma$ be the set of admissible executions with that initial value assignment. We say that $\Sigma$ satisfies the \emph{independent input
assignment condition}, if and only if 
for any two assignments~$I$ and~$J$, we have $\Sigma^{(I)} \cong \Sigma^{(J)}$,
that is, there is a bijective mapping $f_{I,J}:\Sigma^{(I)}\to\Sigma^{(J)}$ such that
for all $\gamma = (C^t)_{t\geq 0} \in \Sigma^{(I)}$ and $\delta = (D^t)_{t\geq 0} \in \Sigma^{(I)}$, writing $f_{I,J}(\gamma) = (C_f^t)_{t\geq 0}$ and $f_{I,J}(\delta) = (D_f^t)_{t\geq 0}$, the following holds for all $t\geq 0$ and all $p\in\Pi$:
\begin{enumerate}
\item $\ob(C^t) = \ob(C_f^t)$ 
\item $C^t \sim_p D^t$ if and only if $C_f^t \sim_p D_f^t$
\item $\ho_p(C^t) = \ho_p(C_f^t)$
\item $C^t \sim_p C_f^t$ if $I_q = J_q$ for all $q\in \ho_p(C^t)$
\end{enumerate}
We say that $\Sigma$ satisfies the \emph{independent arbitrary input
assignment condition}, if it satisfies the independent input assignment
condition for every choice of $I:\Pi\to\V$.
\end{definition}

\cref{lem:broadcastableCCs} below will reveal that if proper stabilizing 
consensus (with weak validity) and independent arbitrary inputs 
is solvable, then every connected component in~$\Sigma_v$ needs 
to be broadcastable.

\begin{definition}[Broadcastability {\cite[Def.~7.3]{NSW24:JACM}}]\label{def:broadcastability}
We call a subset $A \subseteq \Sigma$ of admissible executions \emph{broadcastable} 
by the broadcaster $p\in \Pi$, if, in every execution $\gamma \in A$, every obedient process $q\in\ob(\gamma)$ eventually hears from process~$p$, i.e., $p\in\ho_q(\gamma)$, and hence knows $I_p(\gamma)$.
\end{definition}

\begin{lemma}[Broadcastable connected components]\label{lem:broadcastableCCs}
A connected component $\Sigma_\gamma$, containing $\gamma$, of any decision set
$\Sigma_v$ for proper stabilizing consensus (i.e., with independent arbitrary input 
assignments) that is not broadcastable for some process would contain $w$-valent 
executions, for every $w\in \V$.
\end{lemma}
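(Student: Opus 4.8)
The plan is to fix a target value $w\in\V$ and to produce, inside the connected component $\Sigma_\gamma$, an execution all of whose initial values equal $w$; since $w$ is arbitrary, this yields the claim. The starting point is $\gamma\in\Sigma_\gamma$ itself, and I would reshape its input assignment into the constant assignment $w$ by a finite sequence of single-process input changes, each of which is shown to keep us inside the same connected component. The downstream use (deriving a contradiction with validity) is not needed here; I only need the existence of the $w$-valent executions.

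The engine for a single step combines two ingredients. First, two executions at $\dnonunif$-distance $0$ always lie in the same connected component: $\dnonunif(\eta,\eta')=0$ forces $\eta'\in\cl{\{\eta\}}$ (every basic neighbourhood of $\eta'$ contains $\eta$), and $\cl{\{\eta\}}$, being the closure of a connected set, is connected and contains both points. Second, suppose $\eta\in\Sigma_\gamma$ has an obedient process $q\in\ob(\eta)$ that never hears from some process $r$, i.e.\ $r\notin\ho_q(\eta)$. Relabelling only $r$'s input via the isomorphism $f_{I,J}$ of \cref{def:independentinputassumption} yields $\eta'$ with obedience, $q$-indistinguishability, and all heard-of sets preserved (properties~1,3,4). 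Since $r\notin\ho_q(\eta)\supseteq\ho_q(C^t)$ for all $t$, property~4 gives $C^t\sim_q C_f^t$ for every $t$, so $d_q(\eta,\eta')=0$, and as $q\in\ob(\eta)\cap\ob(\eta')$ we get $\dnonunif(\eta,\eta')=0$. Hence $\eta'\in\Sigma_\gamma$, and we have freely reset $r$'s input while staying in the component. Because $\Sigma_\gamma$ is not broadcastable (\cref{def:broadcastability}), for every process $r$ such a witnessing execution $\eta_r\in\Sigma_\gamma$ (with an obedient process not hearing $r$) is guaranteed to exist.

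I would then grow, one process at a time, the set $S\subseteq\Pi$ of processes whose input has already been fixed to $w$ by some execution realized in $\Sigma_\gamma$, starting from $S=\emptyset$ (witnessed by $\gamma$) and aiming at $S=\Pi$, which is a $w$-valent execution. When the process $r$ to be added is already dispensable in the current execution, one application of the engine above does the job. The main obstacle is the complementary case: dispensability of $r$ is a property of a \emph{particular} execution, whereas non-broadcastability only supplies a witness $\eta_r$ realizing some other, possibly unrelated, input assignment. One cannot expect a single execution in which every process is simultaneously dispensable; indeed, such an execution would be fully partitioned, and its mere admissibility would already preclude stabilizing consensus. The crux is therefore to travel within $\Sigma_\gamma$ from the current execution to a witness for $r$ without destroying the progress recorded on $S$. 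Here I would exploit the connectivity of $\Sigma_\gamma$ together with the fact that the relabelling maps $f_{I,J}$ preserve the entire heard-of structure (property~3), so that dispensability witnesses can be transported across input assignments and the single-step moves can be chained, mirroring the corresponding argument in \cite{NSW24:JACM}. Once $S=\Pi$ is reached, the resulting execution lies in $\Sigma_\gamma$ and is $w$-valent; as $w\in\V$ was arbitrary, $\Sigma_\gamma$ contains $w$-valent executions for every $w$.
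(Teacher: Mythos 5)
Your overall strategy coincides with the paper's: flip the initial values to $w$ one process at a time, and use a non-broadcastability witness together with the input-reassignment isomorphism $f_{I,J}$ to keep each intermediate execution inside $\Sigma_\gamma$. Your ``engine'' is also correct as far as it goes: two executions at $\dnonunif$-distance $0$ lie in a common connected component, and if some obedient $q$ never hears from $r$ in $\eta$, then relabelling $r$'s input yields $\eta'$ with $d_q(\eta,\eta')=0$, hence $\eta'\in\Sigma_\gamma$. This is exactly the computation the paper performs on its witness $\eta$.

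The gap sits precisely at the point you yourself flag as the crux, and your proposed resolution does not close it. What the induction step actually requires is that flipping $r$'s input in the \emph{current} execution $\alpha_S$ keeps you in $\Sigma_\gamma$; but in $\alpha_S$ every obedient process may well hear from $r$, so your engine applies only to the witness $\eta_r$, which is a different execution, and flipping $r$ there produces some $\eta_r'\in\Sigma_\gamma$ that is not $\alpha_{S\cup\{r\}}$. ``Travelling within $\Sigma_\gamma$ to the witness'' does not repair this: connectedness in this non-metrizable topology supplies no paths to travel along, and transporting $\eta_r$ to the current input assignment still leaves it a different execution from $\alpha_S$. The missing idea---which is what the paper's contradiction argument encodes when it asserts $\delta=f_{I(\eta),I'}(\eta)\in\Sigma_{\alpha_p}$---is that the single global flip map $g$ (changing only $r$'s input) preserves obedience and indistinguishability by properties 1--2 of \cref{def:independentinputassumption}, hence preserves $\dnonunif$, hence carries the connected set $\Sigma_\gamma$ into one connected component, namely the component of $g(\alpha_S)=\alpha_{S\cup\{r\}}$. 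Since $g(\eta_r)$ lies in that component \emph{and}, by your engine, at distance $0$ from $\eta_r\in\Sigma_\gamma$, that component meets and therefore equals $\Sigma_\gamma$, so $\alpha_{S\cup\{r\}}\in\Sigma_\gamma$. Without this step your induction cannot advance past any process that is heard by all obedient processes in the current execution, so the proposal as written is incomplete.
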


\begin{proof}
To prove our lemma,
we consider the finite sequence of executions $\gamma=\alpha_0,\alpha_1,\dots,\alpha_n=\gamma_w$
obtained from $\gamma$ by changing the initial values of the processes $1,\dots,n$ in $I(\gamma)$ to an arbitrary but fixed $w$, one by one (it is here where we need the arbitrary
input assignment assumption). We show by induction that $\alpha_p \in \Sigma_{\gamma}$ for every $p\in \{0,\dots,n\}$, which proves our claim since $\alpha_n=\gamma_w$ is $w$-valent. 

The induction basis $p=0$ is trivial, so suppose $\alpha_{p-1}\in \Sigma_{\gamma}$ according to the
induction hypothesis. If it happens that $I_p(\alpha_{p-1})=I_{p}(\gamma)=w$ already, nothing needs to be done and we just set $\alpha_{p}=\alpha_{p-1} \in \Sigma_{\gamma}$. Otherwise, $\alpha_{p}$ is $\alpha_{p-1}$ with the initial value $I_{p}(\alpha_{p})$ changed to $w$. Now suppose for a contradiction that $\alpha_{p} \in\Sigma_{\alpha_{p}} \neq \Sigma_{\gamma}$. 

Since $\Sigma_{\gamma}$ is not broadcastable by any process, hence also not by $p$, there is some
execution $\eta\in \Sigma_{\gamma}$ with $\eta = (C^t)_{t\geq0}$ and a process $q\neq p$ with $q \in \ob(C^t)$ and the initial value $I_p(\eta)$ not in $q$'s view $V_{q}(C^t)$ for every $t\geq 0$.
Thanks to the independent input assignment property \cref{def:independentinputassumption}, there is 
also an execution $\delta=f_{I(\eta),I'}(\eta) \in \Sigma_{\alpha_{p}}$ that
matches $\eta$, i.e., is the same as $\eta$ except that $I(\delta)=I'$ with $I'_q=I_q(\eta)$ for $p \neq q \in \Pi$ but possibly $I'_p\neq I_p(\eta)$.
It follows that $d_{q}(\eta,\delta)=0$ with $q\in\ob(\eta)\cap\ob(\delta)$ and hence  $\dnonunif(\eta,\delta)=0$. Consequently, $\delta \in \Sigma_{\gamma}$
and hence $\Sigma_{\alpha_{p}}=\Sigma_{\gamma}$, which provides the required contradiction and
completes the induction step.
\end{proof}

As a consequence of \cref{lem:broadcastableCCs}, it turns out that \emph{any} connected broadcastable set has 
a diameter strictly smaller than~$1$ in our non-uniform topology.

\begin{definition}[Diameter of a set]\label{def:diameterset}
For $A \subseteq \Comega$, depending on the distance function $d$ that induces
the appropriate topology,
%and $P'$ denoting either $P\subseteq [n]$ or $\min$, 
define $A$'s diameter as $d(A)=\sup\{d(\gamma,\delta)\mid \mbox{$\gamma, \delta\in A$}\}$.
\end{definition}

\begin{lemma}[Diameter of broadcastable connected sets {\cite[Lem.~7.6]{NSW24:JACM}}]\label{lem:broadcastablediameter}
If a connected set $A\subseteq \Sigma$ 
of admissible executions is broadcastable by some process $p$, then
$\dunif(A) \leq d_{p}(A)\leq 1/2$, as well as $\dnonunif(A) \leq 1/2$, i.e., $p$'s initial value satisfies $I_p(\gamma)=I_p(\delta)$ for all $\gamma,\delta \in A$. 
\end{lemma}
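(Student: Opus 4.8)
The plan is to prove that broadcastability by $p$ forces all executions in the connected set $A$ to agree on $p$'s initial value, and then to translate this into a diameter bound. First I would fix any two executions $\gamma,\delta\in A$ and argue that $I_p(\gamma)=I_p(\delta)$. The key observation is that $A$ is connected, and the relation ``$\gamma$ and $\delta$ have the same value of $I_p$'' partitions $A$ according to the value $I_p(\cdot)\in\V$. I would show each of these classes is open in the subspace topology on $A$: if $\gamma'$ is close enough to $\gamma$ (within distance $<1$ in $\dnonunif$, i.e. $\dnonunif(\gamma,\gamma')<1$), then there is some obedient process $q\in\ob(\gamma)\cap\ob(\gamma')$ with $C^s\sim_q C'^{s}$ for the initial segment, so $q$ has the same view; since $A$ is broadcastable by $p$, process $q$ eventually hears from $p$ in both executions, and by the heard-of property \cref{def:HO}(iii) (agreement of initial values of commonly-heard processes under indistinguishability) this pins down $I_p(\gamma')=I_p(\gamma)$. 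Since the classes are open, disjoint, and cover the connected set $A$, only one class can be nonempty, so $I_p$ is constant on $A$.

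Having established $I_p(\gamma)=I_p(\delta)$ for all $\gamma,\delta\in A$, I would next deduce the distance bound. The point is that if two executions $\gamma,\delta$ share the value $I_p$, then the broadcaster $p$ cannot distinguish them at time $0$ on the basis of its own initial value alone; but more is needed to get $d_p(\gamma,\delta)\le 1/2$, since $d_p$ depends on the full view $V_p$, not just $I_p$. The correct route is: broadcastability guarantees that in every execution of $A$, process $p$ is itself obedient (it must be heard from, which in these models entails $p\in\ob(\gamma)$) and that every obedient $q$ eventually hears $p$'s value; combined with constancy of $I_p$, the first-round views of $p$ in $\gamma$ and $\delta$ can be made to coincide, giving $d_p(\gamma,\delta)\le 2^{-1}=1/2$ via the definition \cref{eq:Pviewpseudometric} (the infimum is at least $1$ because the configurations agree at $t=0$ for $p$). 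Here I would lean on the precise bookkeeping in the system model—that $p\in\ob(\gamma)\cap\ob(\delta)$ and $V_p(C^0)=V_p(D^0)$—so that $d_p(C^0,D^0)=0$ and hence $\inf\{t\mid d_p(C^t,D^t)>0\}\ge 1$.

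From $d_p(A)\le 1/2$ the two remaining inequalities follow immediately. Since $\dunif(\gamma,\delta)=\min_{r\in\Pi}d_r(\gamma,\delta)\le d_p(\gamma,\delta)$, we get $\dunif(A)\le d_p(A)\le 1/2$. For the non-uniform metric, because $p\in\ob(\gamma)\cap\ob(\delta)$ for all $\gamma,\delta\in A$ (broadcastability forces $p$ obedient), the intersection $\ob(\gamma)\cap\ob(\delta)$ is nonempty and contains $p$, so $\dnonunif(\gamma,\delta)=\min_{r\in\ob(\gamma)\cap\ob(\delta)}d_r(\gamma,\delta)\le d_p(\gamma,\delta)\le 1/2$, giving $\dnonunif(A)\le 1/2$. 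Taking suprema over $\gamma,\delta\in A$ yields the claimed diameters.

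The main obstacle I expect is the second step—going from ``$I_p$ is constant on $A$'' to the quantitative bound $d_p(A)\le 1/2$—because $d_p$ is governed by $p$'s full view $V_p$ rather than by $I_p$ in isolation, and one must verify that broadcastability really does force $p$ to be obedient and that $p$'s $t=0$ view is determined by $I_p$ alone. Establishing openness of the $I_p$-level sets in the subspace topology is the technical heart: it requires carefully choosing the obedient witness process $q$ whose matching view across nearby executions, together with the broadcast property, transmits equality of $I_p$. Once connectedness rules out more than one nonempty level set, the rest is routine unwinding of the definitions of $\dunif$ and $\dnonunif$.
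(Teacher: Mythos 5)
Your overall strategy is the same as the paper's: show that $I_p$ is locally constant on $A$ by finding a common obedient witness $q$ whose matching view, together with \cref{def:HO}.(iii), transmits equality of $I_p$; then use connectedness to globalize (you phrase this as finitely many disjoint open level sets of $I_p(\cdot)$, the paper as a chain-of-balls separation argument --- these are equivalent); then unwind the distance definitions. However, your openness step has a genuine gap in the choice of radius. You take $\dnonunif(\gamma,\gamma')<1$, which only guarantees some $q\in\ob(\gamma)\cap\ob(\gamma')$ with matching views at time $t=0$. At $t=0$, $q$ has not yet heard from $p$ (unless $q=p$), and \cref{def:HO}.(iii) requires $p\in\ho_q(C^t)\cap\ho_q(C'^t)$ \emph{and} $C^t\sim_q C'^t$ at the \emph{same} time $t$: by the time $q$ has heard from $p$ in each execution, the two executions may already be distinguishable for $q$, and $q$ may then legitimately see two different values of $I_p$. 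The paper avoids this by using the radius $2^{-T(\gamma)}$, where $T(\gamma)<\infty$ is a time by which (by broadcastability) every process still obedient in $\gamma$ already has $I_p(\gamma)$ in its view; then the common witness $q$ has matching views at time $T(\gamma)$, at which point $p\in\ho_q$ in both executions, and \cref{def:HO}.(iii) applies. With this repair your level-set argument goes through.

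A secondary error: you assert that broadcastability forces $p\in\ob(\gamma)$, and you use this both to get $d_p(C^0,D^0)=0$ and to bound $\dnonunif$ by $d_p$. This is false in general --- the paper explicitly notes (in the application of \cref{sec:asyncwithcrashes}) that broadcasters need not be obedient; $p$ may deliver its initial value to everyone and then become disobedient. The paper instead observes that the ball argument ``translates literally'' to each of $d_p$, $\dunif$, $\dnonunif$: a ball of radius $2^{-T(\gamma)}$ in \emph{any} of these distance functions already yields a common obedient witness with matching view at time $T(\gamma)$, so the constancy of $I_p$ (and the resulting diameter bound) is obtained for each topology directly, without routing through $p$'s own obedience.
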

\begin{proof}
  Our proof below for $d_p(A)\leq 1/2$ translates literally to any $d \in \{d_p, \dunif, \dnonunif\}$;
  the statement $\dunif(A) \leq d_{p}(A)$ follows from
the definition in \cref{eq:dunif}.

Broadcastability by $p$ implies that, for any $\gamma\in A$ with $\gamma=(C^t)_{t\geq0}$, every process $q$ has $I_p(\gamma)$ in its local view 
$V_{q}(C^{T(\gamma)})$ for some $0<T(\gamma)<\infty$ or is not obedient any more. 
Abbreviating $t=T(\gamma)$, consider any $\delta\in B_{2^{-t}}(\gamma) \cap A$
with $\delta = (D^t)_{t\geq0}$.
By definition of $B_{2^{-t}}(\gamma)$, there must be some process $q \in
\ob(D^t)\cap\ob(C^t)$ with $V_{q}(D^{t})=V_{q}(C^{t})$. \cref{def:HO}.(iii)
thus guarantees $I_p(\delta)=I_p(\gamma)$. 

We show
now that this argument can be continued to reach every $\delta\in A$. For a contradiction, suppose that this is not the case and let $U(\gamma)$ be the union of the balls
recursively defined as follows: $U_0(\gamma)=\{\gamma\}$, for $m>0$,
$U_m(\gamma) = \bigcup_{\delta \in U_{m-1}(\gamma)} (B_{2^{-T(\delta)}}(\delta)
\cap A)$,  and finally $U(\gamma)=\bigcup_{m\geq 0} U_m(\gamma)$.
As a union of open balls intersected with $A$, which are all open in $A$, both
$U_m(\gamma)$ for every $m > 0$ and $U(\gamma)$ is hence open in $A$.
For every $\delta \in A\setminus U(\gamma)$, $U(\delta)$ is also open in~$A$, and so is $V(\gamma)=\bigcup_{\delta \in A\setminus U(\gamma)}U(\delta)$. However, the open sets $U(\gamma)$ and $V(\gamma)$ must satisfy $U(\gamma) \cap V(\gamma) = \emptyset$ (as
they would be the same otherwise) and $U(\gamma)\cup V(\gamma)=A$, hence $A$ cannot be connected. 
\end{proof}

\cref{lem:broadcastableCCs} in conjunction with \cref{lem:broadcastablediameter} finally implies:

\begin{corollary}[Broadcastable $\Sigma_{\gamma}$]\label{cor:broadcastablePS}
If proper stabilizing consensus with weak validity is solvable, 
then every connected component $\Sigma_{\gamma}\subseteq \Sigma_v$ must be broadcastable by some process $p$. In every execution $\gamma'\in\Sigma_{\gamma}$, the broadcaster $p$ has the same initial value $I_p(\gamma')$ (and, of course, the same decision value $v$).
\end{corollary}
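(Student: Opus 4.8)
The plan is to derive this statement as a direct composition of the two preceding results, \cref{lem:broadcastableCCs} and \cref{lem:broadcastablediameter}, with a short validity argument bridging them. First I would fix a solvable proper stabilizing consensus algorithm with weak validity, write its decision sets as $\Sigma_v = \Delta^{-1}(v)$, and pick an arbitrary connected component $\Sigma_\gamma \subseteq \Sigma_v$. The goal then splits into two parts: establishing that $\Sigma_\gamma$ is broadcastable, and deducing the constant-initial-value consequence for the broadcaster.

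For broadcastability I would argue by contradiction. Suppose $\Sigma_\gamma$ is broadcastable by no process. Then \cref{lem:broadcastableCCs} applies verbatim and yields a $w$-valent execution $\gamma_w \in \Sigma_\gamma$ for every $w \in \V$. Choosing some $w \neq v$ (possible in the only nontrivial case $|\V| \geq 2$), weak validity (V) forces $\Delta(\gamma_w) = w$, i.e.\ $\gamma_w \in \Sigma_w$. But $\gamma_w \in \Sigma_\gamma \subseteq \Sigma_v$, and the decision sets are pairwise disjoint, so $\Sigma_v \cap \Sigma_w = \emptyset$ gives the required contradiction. Hence $\Sigma_\gamma$ must be broadcastable by some process $p$.

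For the second assertion I would feed the broadcastable connected set $\Sigma_\gamma$ into \cref{lem:broadcastablediameter}. Since $\Sigma_\gamma$ is connected and broadcastable by $p$, that lemma gives $d_p(\Sigma_\gamma) \leq 1/2 < 1$, and its statement records precisely that this forces $I_p(\gamma') = I_p(\delta')$ for all $\gamma', \delta' \in \Sigma_\gamma$, i.e.\ the broadcaster has a single initial value throughout the component. The common decision value is $v$ by construction, since $\Sigma_\gamma \subseteq \Sigma_v = \Delta^{-1}(v)$.

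I expect no genuine obstacle here, as the corollary is essentially the concatenation of the two lemmas. The only step requiring care is the contradiction: one must invoke validity to convert ``contains a $w$-valent execution'' into ``intersects $\Sigma_w$'', then use disjointness of the decision sets, being careful to pick $w \neq v$ so that the argument is nonvacuous (the claim is trivial when $|\V| = 1$).
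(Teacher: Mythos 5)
Your proposal is correct and follows exactly the route the paper intends: the paper itself offers no written-out proof, merely stating that \cref{lem:broadcastableCCs} ``in conjunction with'' \cref{lem:broadcastablediameter} implies the corollary, and your argument (contradiction via the $w$-valent executions produced by \cref{lem:broadcastableCCs}, weak validity, and disjointness of the decision sets, followed by the diameter bound of \cref{lem:broadcastablediameter} for the constant initial value) is precisely the missing glue. No gaps.
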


To emphasize the key role of \cref{cor:broadcastablePS} for the equivalence of weak validity (V) 
and strong validity (SV), note that
the transition from (V) to (SV) in \cref{thm:Schar:nonunif} just requires the replacement 
of condition (2), i.e., \emph{``If execution $\gamma\in\Sigma$ is $v$-valent, then $\gamma \in \Sigma_v$''}, by 
\emph{``If execution $\gamma\in\Sigma_v$, then there is a process $p$ with initial
value $I_p(\gamma)=v$''.} This change results in a strong version
of our theorem, since this modification is transparent
for the proof of \cref{thm:Schar:nonunif}. Note also
that both versions are equivalent for $v$-valent executions.

The crucial role of \cref{cor:broadcastablePS} is that it \emph{always} allows to turn a proper weak stabilizing
consensus algorithm into a strong one, as it reveals that if weak stabilizing consensus is solvable, then
every connected component $\Sigma_\gamma \subseteq \Sigma_v$ must have at least one common broadcaster $b=b(\gamma')=b(\Sigma_\gamma)$ that has the same initial value $I_b(\gamma')=I_b(\gamma)= I_b(\Sigma_\gamma)$ (which may, however, be different
from $v$) in all executions $\gamma'\in\Sigma_\gamma$. Consequently, if decision sets 
%resp.\ a meta-procedure
exist that allow to solve consensus with weak validity according to
\cref{thm:Schar:nonunif}, one can always
reshuffle the connected components among the decision sets to form \emph{strong} 
decision sets, which use the initial value of some broadcaster
for assigning a connected component to a decision set:

\begin{definition}[Strong decision sets]\label{def:strong} Let $\Sigma$ be the
set of admissible executions of any (weak or strong) proper stabilizing consensus algorithm 
with independent arbitrary input assignments. A \emph{strong decision set} $\Sigma_v$ for $v\in \V$
satisfies
\begin{equation}
\Sigma_v = \bigcup_{p \in \Pi}  \Sigma_v^p  \qquad\mbox{with}\qquad \Sigma_v^p = \bigcup_{\substack{\gamma\in\Sigma\\ b(\Sigma_\gamma)=p \\ I_{p}(\gamma)=v}} \Sigma_\gamma 
\label{eq:Sigmavp}.
\end{equation}
\end{definition}
Note that strong decision sets need not be unique, as some connected component
$\Sigma_\gamma$ might have several broadcasters, any of which could be used for
determining its decision value $v$. The canonical choice to make it uniquely
defined would be to take the lexically smallest $p=b(\Sigma_\gamma)$ among
all broadcasters $p' \geq p$ in $\Sigma_\gamma$. 
%In the rest of our paper,
%all strong decision sets will be canonical.

Practically, all that needs to be done for a connected component $\Sigma_\gamma \subseteq \Sigma_v$
that only has a broadcaster $p$ with initial value $I_p(\gamma)=w\neq v$ in every execution $\gamma$
is to change the decision value from $v$ to $w$, which will automatically move $\Sigma_\gamma$ from
$\Sigma_v$ to $\Sigma_w$. It must be noted, though, that the actual utility of our result is limited:
Since our solution algorithm depends on the a priori knowledge of the decision sets, it does
not give any clue on how to develop a practical strong consensus algorithm from a practical
weak consensus algorithm in a given model. In fact, determining and agreeing upon a broadcaster 
in executions that are not $v$-valent is a very hard problem.

Anyway, along with the obvious fact that strong validity implies weak validity,
our findings reveal:

\begin{corollary}[Equivalence of weak and strong validity]\label{cor:weakisstrong} 
Proper stabilizing consensus with weak validity is solvable in a model if and only 
if consensus with strong validity is solvable in this model.
\end{corollary}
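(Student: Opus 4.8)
The plan is to prove the equivalence of weak and strong validity for proper stabilizing consensus. The easy direction is immediate: strong validity (SV) directly implies weak validity (V), since if all processes share the same initial value $v$, then the only input value available is $v$, so the decision value---which must be \emph{some} process's input by (SV)---is forced to be $v$. Hence any strong stabilizing consensus algorithm is automatically a weak one, giving the ``if'' direction for free.

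For the harder direction, assume proper stabilizing consensus with weak validity is solvable; I want to construct a solution with strong validity. First I would invoke \cref{cor:broadcastablePS}, which guarantees that every connected component $\Sigma_\gamma \subseteq \Sigma_v$ of the weak decision sets is broadcastable by some process $b(\Sigma_\gamma)=p$, and that this broadcaster $p$ has a \emph{fixed} initial value $I_p(\gamma')=I_p(\Sigma_\gamma)$ across all executions $\gamma'\in\Sigma_\gamma$. This is the crucial structural fact: each connected component carries a well-defined ``broadcaster input value,'' which is exactly the quantity (SV) requires the decision to match. I would then form the \emph{strong decision sets} of \cref{def:strong}, namely $\Sigma_v = \bigcup_{p\in\Pi}\Sigma_v^p$ where $\Sigma_v^p$ collects precisely those connected components whose (canonically chosen, lexically smallest) broadcaster $p$ has input value $v$. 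By construction, every execution in the resulting $\Sigma_v$ has some process---the broadcaster---whose initial value is $v$, which is exactly the strong validity condition.

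The key steps are then to verify that these reshuffled decision sets still satisfy the characterization in \cref{thm:Schar:nonunif}, i.e., that each strong $\Sigma_v$ remains semi-open and that validity in its strong form holds. The reshuffling only moves entire connected components between decision sets, so I would argue that the semi-open property is preserved: since each connected component is itself (semi-)open in the relevant sense and the strong $\Sigma_v$ is a union of connected components, \cref{thm:arbunionso} on arbitrary unions of semi-open sets applies to keep $\Sigma_v$ semi-open. The strong validity condition replaces condition (2) of \cref{thm:Schar:nonunif} by the requirement ``if $\gamma\in\Sigma_v$ then some process $p$ has $I_p(\gamma)=v$,'' which holds by the very definition of $\Sigma_v^p$; moreover, for a $v$-valent execution every process (in particular the broadcaster) has input $v$, so it lands in $\Sigma_v$, recovering weak validity as a special case and confirming the two conditions agree on $v$-valent executions.

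The main obstacle I anticipate is ensuring that the reshuffling does not destroy the disjointness and semi-openness of the decision sets simultaneously---in particular, confirming that no two connected components assigned to \emph{different} strong decision sets share a limit point (which would reintroduce a ``bad'' boundary point and break the proper/semi-open structure). Here I would lean on the diameter bound from \cref{lem:broadcastablediameter}: any connected broadcastable set has diameter at most $1/2$, so a single connected component cannot straddle the distance-$1$ gap between distinct $v$-valent executions, and the fixed broadcaster value within each component means that moving it wholesale cannot create a new shared boundary between components with genuinely different broadcaster inputs. Since there are only finitely many values in $\V$ and finitely many processes $p\in\Pi$, the union defining each $\Sigma_v^p$ and hence $\Sigma_v$ is a manageable (finite over $p$, arbitrary over components) union, so the semi-open property is stable under the construction. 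This combination of the broadcastability structure and the diameter bound is precisely what makes the reshuffling argument go through, yielding a proper strong stabilizing consensus algorithm and completing the equivalence.
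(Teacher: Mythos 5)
Your proposal follows essentially the same route as the paper: the trivial direction (SV)$\Rightarrow$(V), and for the converse the broadcastability of connected components (\cref{cor:broadcastablePS}, resting on \cref{lem:broadcastableCCs} and \cref{lem:broadcastablediameter}) together with the wholesale reshuffling of components into the strong decision sets of \cref{def:strong}, checked against the characterization in \cref{thm:Schar:nonunif}. The one caveat is your justification that semi-openness survives the reshuffle because ``each connected component is itself semi-open''---connected components of a semi-open set need not be semi-open in general---but the paper is equally terse at this point, asserting only that the modification is ``transparent'' for the proof of \cref{thm:Schar:nonunif}, so this is not a departure from its argument.
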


\section{Applications}
\label{sec:applications}

In this section, we will apply our findings to the few existing results
on stabilizing consensus. This way, we will provide a topological
explanation of why the problem is solvable/impossible in certain models.

\subsection{Stabilizing consensus in asynchronous crash-prone systems}
\label{sec:asyncwithcrashes}

We start with the simple stabilizing consensus algorithm for
crash-prone asynchronous systems proposed in \cite{AFJ06}.
Designed for any number $f < n$ of crashes, and processes that
are connected by fair-lossy point-to-point links, it works as
follows: In every step, every process $p$ broadcasts its current
decision value $O_p$. If a message containing value $v$ has been
received from any process, $p$ sets $O_p = \min\{O_p,v\}$.
Note that the processes execute their algorithm without
any synchronization, which implies that we can just take 
$\chi_p(t)=t$, i.e., global real-time, in our analysis.

The above algorithm gives raise to (strong) decision sets $\Sigma_v$, $v\in \V$,
defined by $\Sigma_v = \bigl\{\gamma \in \Sigma\mid \min_{p \in bc(\gamma)}\{I_p(\gamma)\}=v\bigr\}$,
where $bc(\gamma)=\bigl\{p\mid p \in \bigcap_{q \in \ob(\gamma)}\ho_q(\gamma)\bigr\}$ denotes 
the set of broadcasters in $\gamma$. Note that the processes in $bc(\gamma)$ 
need not necessarily be obedient.

To show that every $\Sigma_v$ is semi-open, we need to verify that every 
$\gamma \in \Sigma_v$ is either (i) an interior point ($\gamma \in \inte(\Sigma_v)$) 
or else (ii) a point in the included boundary ($\gamma \in \bdin \Sigma_v$).
Writing $\gamma=(C^t)_{t\geq0}$, $\gamma\in\Sigma_v$ implies that there is
some $p\in bc(\gamma)$ with $I_p(\gamma)=v$ and a time $t$ such that $p \in \ho_q(C^t)$ for 
every $q\in\ob(\gamma)$. Fix some $t$ where this holds for every process
$p\in bc(\gamma)$ with $I_p(\gamma)=v$ (there may be several), and assume that there is some 
$\delta=(D^t)_{t\geq 0} \in B_{2^{-t}}(\gamma) \cap \Sigma_w$, where we first assume $w > v$. By the
definition of $\dnonunif$ (\cref{eq:dnonunif}),
there must hence be some process $s \in \ob(\gamma)\cap\ob(\delta)$ that has the same view
$V_s(D^t)=V_s(C^t)$ at time $t$. This implies, however, that $p\in\ho_s(\delta)$, so $s$ also knows $v$. 
Since $s\in\ob(\delta)$,
$s$ will eventually broadcast $v$ to all obedient processes in $\delta$ (recall the fair-lossy
link assumption), which reveals that
$p \in bc(\delta)$ as well. Consequently, no such $\delta$ can exist, which results in
$B_{2^{-t}}(\gamma) \subseteq \Sigma_v$ and thus $\gamma \in \inte(\Sigma_v)$ according to (i).

However, in the argument above, we might observe $\delta \in B_{2^{-t}}(\gamma) \cap \Sigma_w$ for
some $w < v$. In this case, the existence of $\delta$ would not be contradictory as before. 
Actually, there are two possibilities here: If there is some finite $t' > t$ such that $B_{2^{-t'}}(\gamma) \cap \Sigma_w=\emptyset$,
we find $B_{2^{-t'}}(\gamma) \subseteq \Sigma_v$, which confirms $\gamma \in \inte{\Sigma_v}$.
If, however, an ``offending'' $\delta_i' \in B_{2^{-i}}(\gamma) \cap \Sigma_w$ 
can be found for every $i\geq t$, then $\gamma = \lim_{i\to\infty} 
\delta_i' \in \bd \Sigma_w$ is a limit point of $\Sigma_w$. Obviously, none of the broadcasters $p' \in bc(\delta_i')$ with
$I_{p'}(\delta_i')=w<v$ can also satisfy $p' \in bc(\gamma)$. 
The only way how a sequence of $\delta_1',\delta_2',\dots \in \Sigma_w$ can converge to $\gamma \in \Sigma_v$ is hence when
\emph{all} the obedient processes in $\delta_i'$ hear of the value $w$ broadcast by
$p'$ not before or at time $i$: Since $\delta_i' \in B_{2^{-i}}(\gamma)$,
no process that is obedient in both $\gamma$ and $\delta_i'$ can have heared from $p'$. This, in turn, 
is only possible if $p'$ is not obedient in $\gamma$, in a way that it never tells any obedient process about $w$.
But this implies that the sequence $\delta_1,\delta_2, \dots$, which is the same as $\delta_1',\delta_2',\dots$
except that also $p'$ has the initial value $I_{p'}(\delta_i)=v$, satisfies $\delta_i \in \Sigma_v$ and
converges to $\gamma$ as well. Consequently, $\gamma \in \bdin \Sigma_v$ is a limit point according to case (ii).

Therefore, all $\Sigma_v$ are indeed strong and semi-open.

\subsection{Stabilizing consensus in synchronous systems controlled by a message adversary}
\label{sec:syncMA}

We next turn our attention to synchronous dynamic networks controlled 
by a message adversary \cite{AG13}. We will provide topological
explanations of the (few) core results on stabilizing consensus
established in the past \cite{CM19:DC,SS21:SSS,FR24:arxiv}. Most
of these results are from \cite{CM19:DC} or are based on the MinMax algorithm
introduced there.

The general setting is a lock-step synchronous system of $n$ fault-free
processes $\Pi$, so $\ob(\gamma)=\Pi$ for every $\gamma=(C^t)_{t\geq 0} \in \Sigma$
and $\chi_p(C^t)=t$ for all $p \in \Pi$. The communication in every round
is controlled by a message adversary, which determines the sequence of communication 
graphs $\G=(\G^t)_{t\geq 1}$, subsequently called \emph{communication pattern}, that 
governs execution $\gamma$. A message adversary can be identified
with the set of admissible communication patterns it allows.  Since all our algorithms
are deterministic, $\G$ and the input assignment $I(\gamma)$ uniquely determines 
$\gamma$. Note that we
implicitly assume that every communication graph always contains all
self-loops $p\rightarrow p$ for $p \in \Pi$.

\subsubsection{Lossy-link model possibility}
\label{sec:LLmodel}

Due to its importance for the analysis of the DLL model \cite{FR24:arxiv}
considered in \cref{sec:DLLmodel}, we will start with stabilizing consensus in the
lossy-link model \cite{SW89,SWK09,CGP15}. The underlying system
consists of only two processes $\Pi=\{l,r\}$ (the ``left'' and the 
``right'' process), where the communication graph in every round
of the communication pattern $\G=(\G^t)_{t\geq 1}$ of an execution $\gamma=(C^t)_{t\geq 0}$ 
is taken from the set $\LL=\{l \leftarrow r, l \leftrightarrow r,  
l \rightarrow r\}$. For conciseness, we will abbreviate $\LL = \{\leftarrow,
\leftrightarrow, \rightarrow\}$, with the implicit meaning $\leftarrow \; = \; l \leftarrow r$ etc., and write $\G \in \LLomega$ in the sequel.

The LL model is the most prominent example of an \emph{oblivious} message
adversary \cite{CGP15}. It is well-known, see e.g.\ \cite{SW89,SWK09,CGP15},
that terminating consensus is impossible here. Stabilizing consensus can
be solved in the LL model, however, by using a simple MinMax algorithm:
Translating the notation from \cite{FR24:arxiv} to our setting, the
decision function of process $p$ (computed at the end of round $t\geq 1$ in $\gamma=(C^t)_{t\geq 0}$) can be expressed as
\begin{equation}
\Delta_p(C^t) = \max_{q:V_q(C^{t-1}) \in V_p(C^t)} \bigl\{ \min_{s \in \ho_q(C^{t-1})}I_s(\gamma) \bigr\}\label{def:MinMax}.
\end{equation}
To interprete this definition, note that $p$'s view $V_p(C^t)$ is the (in the case of
anonymous processes, disjoint) union of the views $V_q(C^{t-1})$ that process $p$
receives from the processes (including $q=p$) in round $t$.

It is easy to see that this $\Delta_p$ solves stabilizing consensus in every $\gamma \in \Sigma$. Indeed,
the only non-trivial case is when $I_l(\gamma)\neq I_r(\gamma)$: Assuming w.l.o.g.\ 
$m=I_l(\gamma) < I_r(\gamma)=m'$, if $r$ receives a message from $l$ in some round 
$t < \infty$ for the first time, it sets $O_r=m$ in round $t+1$. Moreover, 
$O_l=m$ is set by process $r$ in round $t$ if it does not receive a message
from $l$ in round $t$, and in round $t+1$ otherwise, irrespectively of
whether it gets a message from $r$ or not. If, on the other hand, $r$ never receives such a message, 
$\ho_r(C^t)=\{r\}$ for all $t\geq 0$. Hence, $O_r=m'$ from $t=0$ on, and since
the maximum view found in $\Delta_l(C^t)$ is the one of $q=r$, $O_l=m'$ from
round $t=1$ on as well.

As in \cref{sec:asyncwithcrashes}, this algorithm gives raise to strong decision
sets $\Sigma_v$, $v\in \V$, defined by $\Sigma_v = \bigl\{\gamma \in \Sigma\mid 
\min_{p \in bc(\gamma)}\{I_p(\gamma)\}=v\bigr\}$,
where $bc(\gamma)=\bigl\{p\mid p \in \bigcap_{q \in \Pi}\ho_q(\gamma)\bigr\}$ denotes 
the set of broadcasters in $\gamma$. 
The proof is essentially the same as in \cref{sec:asyncwithcrashes}, except that $\delta_i'\in \Sigma_w$ in the sequence 
$\delta_1',\delta_2', \dots \to \gamma \in \Sigma_v$ lets the first message conveying the smaller 
input value $w < v$ to the other process arrive at time $i$. Note that the only boundary points can be
executions $\gamma$ where either $I_l(\gamma) < I_r(\gamma)$ and $\G = \{\leftarrow\}^\omega$ or else
$I_l(\gamma) > I_r(\gamma)$ and $\G =  \{\rightarrow\}^\omega$ here, since otherwise the
broadcaster with initial value $w<v$ would be a broadcaster in $\gamma \in \Sigma_v$, which is
impossible.

Obviously, the LL model considered above is only suitable if both processes
are active from $t=0$ on. To also cover $t_r^a >0$ and/or $t_l^a > 0$ allowed
by the system model of \cref{sec:general:model}, 
$\diamond LL = \{\noarrow\}^*LL^{\omega}$, which allows an arbitrary finite
prefix of empty graphs before the LL suffix, is the appropriate model. It is
easy to see that the MinMax algorithm given above, as well as our
topological characterization, also applies unchanged for $\diamond LL$.

\medskip

In order to prepare the grounds for dealing with the DLL model in 
\cref{sec:DLLmodel}, we briefly introduce some additional facts
about the LL model for $\V=\{0,1\}$, see \cite{SZ00:SIAM,CGP15} for details.
Most importantly, it is possible to totally order the $k$-prefixes of
all admissible graph sequences such that consecutive prefixes are
indistinguishable for one of the processes in $\Pi=\{l,r\}$, provided
all executions start with the same input assignment. 
\cref{fig:prefixorder} illustrates this for $k=1$ and $k=2$, where the white resp.\ 
black nodes represent process $l$ resp.\ $r$.

\begin{figure}[ht]
\begin{center}
  \scalebox{0.8}{
\tikzfig{n2allohne}
  }
\end{center}
\caption{Prefix order for the 1-prefixes and 2-prefixes in the LL model.}
\label{fig:prefixorder}
\end{figure}

For example, the interval labeled with $\Gamma_\rightarrow$ resp.\ $\Gamma_\leftrightarrow$ 
in the top part can be viewed as representing all communication patterns that start with the graph 
$\rightarrow$ resp.\ $\leftrightarrow$ in round 1. Note carefully that any two 
consecutive $\sigma$ and $\sigma'$ are indistinguishable for the process $p$ 
in-between in executions that start from the same input assignment. Overall, this imposes the order $\rightarrow \;
< \; \leftrightarrow \; < \; \leftarrow$, which can also be used for ordering the sets 
$\Gamma_\rightarrow < \Gamma_\leftrightarrow <  \Gamma_\leftarrow$.
By construction, every execution 
$\gamma_1$ with communication pattern $\G_1 \in \Gamma_\sigma$ and any execution
$\gamma_2$ with $\G_2 \in \Gamma_{\sigma'}$ with $I(\gamma_1)=I(\gamma_2)$ satisfy
$\dnonunif(\gamma_1,\gamma_2) \leq d_p(\gamma_1,\gamma_2) < 1/2$ in the $p$-view topology.
(Clearly, this holds for both $l$ and $r$ if $\G_1, \G_2 \in \Gamma_\sigma$ as well.)

The bottom part of our illustration is obtained by applying the ordering for $k=1$ to 
every edge in the top part. For example, $\Gamma_{\rightarrow\rightarrow}$ resp.\ $\Gamma_{\rightarrow\leftrightarrow}$
represents all communication patterns that start with the 2-prefix $\rightarrow\rightarrow$ 
resp.\ $\rightarrow\leftrightarrow$ sharing the same 1-prefix $\rightarrow$. Again,  $\rightarrow\rightarrow \; < \;
\rightarrow\leftrightarrow \;< \; \dots \; < \; \leftarrow\leftarrow$, as well as the sets 
$\Gamma_{\rightarrow\rightarrow} < \Gamma_{\rightarrow\leftrightarrow} < \dots < \Gamma_{\leftarrow\leftarrow}$,
are totally ordered, and any two executions $\gamma_1$ and $\gamma_2$ from direct successors 
$\Gamma_\sigma<\Gamma_{\sigma'}$
satisfy $\dnonunif(\gamma_1,\gamma_2) \leq d_p(\gamma_1,\gamma_2) < 1/4$ for some $p \in \Pi$.

For $\V=\{0,1\}$, four instances of these communication patterns, each representing one
of the 4 different possible input assignments, are connected in a cycle, as shown in 
\cref{fig:complete}. Thanks to this arrangement, the argument that any two executions $\gamma_1$ 
and $\gamma_2$ from direct successors $\Gamma_\sigma<\Gamma_{\sigma'}$
satisfy $\dnonunif(\gamma_1,\gamma_2) \leq d_p(\gamma_1,\gamma_2) < 1/4$ for some 
$p \in \Pi$ also extends to $\gamma_1$, $\gamma_2$ that are indistinguishable for a corner node.
This is particularly relevant for the strong decision sets $\Sigma_0$ and $\Sigma_1$ induced by 
the MinMax algorithm: According to our considerations above, there are only
two boundary points here, namely, the execution $\gamma_r \in \Sigma_1$ with 
$I_l(\gamma_r)=0$, $I_r(\gamma_r)=1$ and $\G = \{\leftarrow\}^\omega$ (corresponding to the
bottom-right black node in \cref{fig:complete}), and $\gamma_l \in \Sigma_1$ with
$I_l(\gamma_l)=1$, $I_r(\gamma_l)=0$ and $\G = \{\rightarrow\}^\omega$ (the
top-right white node in \cref{fig:complete}). $\Sigma_1$ comprises all executions
corresponding to the right edge in our figure, $\Sigma_0$ is made up by the executions
corresponding to all the other edges.

\begin{figure}
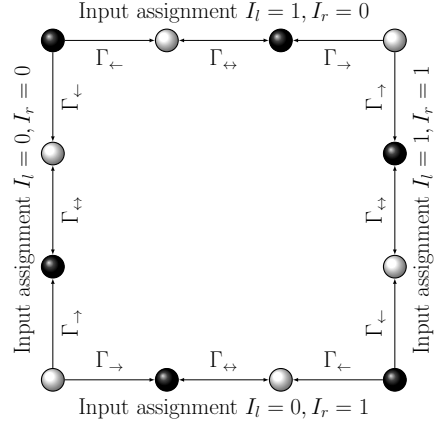

\begin{center}
  \scalebox{0.5}{
\tikzfig{n2complete}
  }
\end{center}
\caption{Complete representation of 1-prefixes in the LL model, for all input assignments.}
\label{fig:complete}
\end{figure}

\medskip

It is instructive to relate the situation to the problem of solving terminating
consensus in sub-models of LL. According to \cref{cor:consensusimpfair}, one
has to remove at least two fair or two pairs of unfair executions (see 
\cref{def:fairunfair}) from the connected space of admissible LL executions 
$\Sigma$ for this purpose. In particular, just excluding the above boundary
executions $\gamma_l$ and $\gamma_r$ from $\Sigma$ allows to solve 
terminating consensus: If e.g.\
$\{\leftarrow\}^\omega$ cannot occur, $l$ only needs to wait until the (guaranteed)
message from $r$ arrives and to decide on $r$'s input value then, whereas $r$ can
decide on its own input already initially. 

Topologically, this exclusion partitions the connected LL space $\Sigma$
(which is homeomorphic to a 1-sphere in the case of $\V=\{0,1\}$, see 
\cref{fig:complete}), into at least two connected (and hence clopen) 
components: Since the decision 
function of terminating consensus is continuous, this is mandatory for 
mapping the resulting space $\Sigma'$ to the discrete and hence disconnected
space $\V$. For stabilizing consensus, a separation into \emph{clopen} sets
is not needed: disjoint sets with included boundary points in between 
suffice here, since semi-continuous decision functions can map such sets to
a discrete space.

%This goes 
%back at least to the seminal paper \cite{SZ00:SIAM} and actually results
%from infinite chromatic subdivisions of geometric 1-simplices. As explained
%in detail in \cite{CG23:DISC}, every execution $\gamma \in \C$ starting from
%a fixed initial value assignment, i.e., its corresponding communication
%pattern $\G$, can be mapped to a point $\geo(\G)$ in the interval $[0,1] \subseteq \IR$.
%For example, $\geo(\{\rightarrow\}^\omega)=0$ and $\geo(\{\leftarrow\}^\omega)=1$. 

%Since this construction can be continued arbitrarily, it allows to assign to every 
%infinite communication pattern a point in $[0,1]$, which is just $\geo(\gamma)$. 
%It has the property that $\G_1 < \G_2$ implies $\geo(\G_1)\leq \geo(\G_2)$;
%equality can occurs only if $\G_1$ and $\G_2$ are a pair of unfair executions 
%according to \cref{def:fairunfair}. If we define the geometric
%pseudo-metric by the geometric distance function $d(\alpha,\beta)=|\geo(\alpha)-\geo(\beta)|$,
%one observes that infinite executions $\alpha, \beta \in \Gamma_{\sigma}\cup\Gamma_{\sigma'}$
%where $\sigma'$ is the direct successor of $\sigma$, with $|\sigma|=k\geq 1$, in the total 
%order of $k$-prefixes satisfy $\dnonunif(\alpha,\beta) \leq d_p(\alpha,\beta) \leq 2^{-(k+1)}$ for $p\in\Pi$,
%if and only if $d(\alpha,\beta)\leq 2\cdot 3^{-k}$. 

\subsubsection{Empty kernel impossibility}
\label{sec:emptykernel}

Among the few stabilizing consensus impossibility results known
so far is \cite[Thm.~6]{CM19:DC}, which states that any graph sequence 
with an empty kernel that is admissible for a message adversary makes
the problem unsolvable. Informally, the kernel of a communication
pattern $\G$ is the set of processes that can reach all other
processes (typically via multiple hops) infinitely often,
i.e., are broadcasters in every suffix of $\G$.

In the case of an empty kernel, there is a round $t_0$ from
which on no process reaches all other processes in $\Pi$.
Now consider the sub-model where all processes start only
at round $t_0$, which boils down to executions $\gamma$
that have a communication pattern $\G$ that starts with
a prefix of $t_0$ empty graphs and $bc(\gamma)=\emptyset$.
\cref{cor:broadcastablePS} asserts that stabilizing consensus
is impossible here.

\subsubsection{Bounded rootedness possibility}
\label{sec:boundedrootedness}

In \cite{CM19:DC}, Charron-Bost and Moran also introduced
a generalized version of the simple MinMax algorithm
stated in \cref{def:MinMax}, which they
called safe MinMax. Informally, it allows the processes
to take (i) the maximum over a suffix of (ii) the minumum
input value over a prefix, which both get longer and longer with increasing rounds.
In our setting, it can be expressed via the following decision function:
\begin{equation}
\Delta_p(C^t) = \max_{q \in \In_p(\theta_p(t)+1,t)} \bigl\{ \min_{s \in HO_q(C^{\theta_p(t)})} I_s(\gamma) \bigr\}\label{def:MinMaxGeneral}.
\end{equation}
Herein, $\theta_p(t)$ is a \emph{cut-off function}, computable by process $p$, which must
satisfy $\lim_{t\to\infty}\theta_p(t)=\lim_{t\to\infty} (t-\theta_p(t)) = \infty$.
The set $\In_p(\theta_p(t)+1,t)$ denotes the processes $q$ from which $p$ has
heared of by round $t$ when only considering messages that have been sent by $q$ in or after 
round $\theta_p(t)+1$.

It was proved that safe MinMax allows to solve stabilizing consensus
for every message adversary that satisfies bounded rootedness, which
implies that, in every execution $\gamma$, in every round $t$, there
is some broadcasting time $T \in \IN$ (possibly unknown to the processes) and a
non-empty set of broadcasters that reach every process by round
$t+T$.

Again, it is easy to show that \cref{def:MinMaxGeneral} leads to
the strong decision sets already used in \cref{sec:asyncwithcrashes}.
The proof is essentially the same, except that a more complex condition
regarding when the smaller input value $w < v$ is received by other
processes in $\delta_i'\in \Sigma_w$ occurring in the sequence 
$\delta_1',\delta_2', \dots \to \gamma \in \Sigma_v$ needs to be 
considered. More specifically, consider any round $t_i$ where $\theta_q(t_i)$ is large
enough such that \emph{every} process $q$ has already heared of at least once 
from every process it will ever hear of in $\delta_i'$, but small 
enough such that $q$ hears, by round $t_i$, about a message sent by at least one process 
$p'\in bc(\delta_i')$ (by assumption with $I_{p'}(\delta_i')=w$) after round $\theta_q(t_i)$. According
to \cite[Lem.~7]{CM19:DC}, this guarantees that $\Delta_q(t_i)=w$.
Since the broadcasting time $T$ is fixed and $\lim_{t\to\infty}\theta_q(t)=\lim_{t\to\infty} (t-\theta_q(t)) 
= \infty$, the latter can be guaranteed for any round $t\geq t_i$ as well.
For our desired sequence $\delta_1',\delta_2', \dots \to \gamma 
\in \Sigma_v$, choosing any non-decreasing sequence $t_i$ with $\lim_{i\to\infty}=\infty$
will do the job. 

For the limit sequence $\gamma$, this obviously implies that no 
process $q$ ever hears from $p'$, so $p'\not\in bc(\gamma)$ as needed.
As in the proof in \cref{sec:asyncwithcrashes}, we can hence choose a corresponding
sequence $\delta_1,\delta_2, \dots \to \gamma \in \Sigma_v$ that is identical
to $\delta_1',\delta_2', \dots$, except that $p'$ also has the initial
value $I_{p'}(\delta_i)=v > w$. Since it also converges to $\gamma$, this
again proves that $\gamma \in \bdin \Sigma_v$ is a limit point.

\subsubsection{Delayed lossy-link model impossibility}
\label{sec:DLLmodel}

The safe MinMax algorithm given in \cref{def:MinMaxGeneral} in
\cref{sec:boundedrootedness} also allows to solve stabilizing consensus
under the \emph{bounded delayed lossy-link} message adversary BDLL
defined as $\bigcup_{T\geq 0} \bigl( \bigcup_{k=0}^T \{\noarrow\}^k LL\bigr)^\omega$, which allows the classic LL model to be interleaved with
silence periods of at most $T$ rounds, for some unknown $T$, in every execution.

A natural question is whether restricting the maximum duration of
the silence periods in any execution to some fixed $T$ is mandatory
for solving stabilizing consensus. Since
the LL model guarantees a non-empty kernel in every execution, which
carries over to BDLL, this question is related to the more general
question posed in \cite{CM19:DC}: Is a non-empty kernel in every
execution, which is known to be necessary (recall \cref{sec:emptykernel})
for solving stabilizing consensus, also sufficient? Clearly,
safe MinMax would not work here, as one cannot guarantee that the
cut-off function $\theta_p(t)$ properly covers arbitrarily
finite silence periods, but there might be other algorithms.

In \cite{FR24:arxiv}, Felber and Rincon Galeana answered this
question negatively: They introduced the \emph{delayed lossy-link} message adversary
DLL defined as $\bigl(\{\noarrow\}^*LL\bigr)^\omega$, which allows arbitrary
but finite silence periods in every execution,
and showed that it is impossible to solve stabilizing consensus in this
model. We will use our topological characterization for providing an
alternative proof of this fact.

\medskip

Restricting our attention to binary stabilizing consensus,
i.e., $\V=\{0,1\}$, we assume for a contradiction that there
is a correct stabilizing consensus algorithm $\A$ for the DLL model.
Let $\Sigma_0$ and $\Sigma_1$ be the resulting decision sets,
which must be semi-open according to \cref{thm:Schar:nonunif}.
We will construct an admissible execution, which cannot be
assigned to any decision set, which provides the required
contradiction.

We start with using our stabilizing consensus algorithm $\A$ in
the LL model, which is of course a sub-model of DLL, cp.\
\cref{fig:DLL} and \cref{fig:prefixorder}. Let
$\hat{\Sigma}_0\subseteq \Sigma_0$ and $\hat{\Sigma}_1 \subseteq
\Sigma_1$ be the corresponding decision sets. From our topological
considerations in \cref{sec:LLmodel}, we know that there must be some
boundary point $\gamma$, w.l.o.g.\ $\gamma \in \bdin \hat{\Sigma}_0$,
which is also a limit point of $\hat{\Sigma}_1$. Note that $\gamma$
is hence also a limit point in the boundary of the DLL decision
sets $\Sigma_0$ and $\Sigma_1$. 

Unfortunately, however, all that we know about $\A$ is that it
produces semi-continuous decision sets. In particular, we do
not know anything about the set of boundary points $\A$ produces,
besides that $\bdin \hat{\Sigma}_v$ and $\hat{Y}=\bigcup_{v\in\V} \bdin \hat{\Sigma}_v$ are
nowhere dense by \cref{lem:nowheredenselimits}. Therefore, we cannot
just assume that there is some boundary point $\gamma$ that 
squarely separates $\hat{\Sigma}_0$ and $\hat{\Sigma}_1$,
but need a slightly more refined approach.

We start with some definitions: For an arbitray execution $\gamma$ with
graph sequence $\G$, let $\sigma=\G|_{k+1}$ be the $(k+1)$-prefix of
$\G$, for any $k\geq 0$. Recall that, for any execution $\beta$
with graph sequence $\B$, it holds that $\B|_{k+1}=\G|_{k+1}$
guarantees $\beta|_{k+1} \sim_{l,r} \gamma|_{k+1}$. 
Let $\lambda$ and $\rho$ be 
the (unfair) admissible executions based on the graph sequence $\sigma\{\rightarrow\}^\omega$ and 
$\sigma\{\leftarrow\}^\omega$, respectively. We call them \emph{LL border executions}, since, for every prefix 
size $k+1+\ell$, $\ell\geq0$, $\hat{\Sigma}_{\sigma\{\rightarrow\}^\ell}$ resp.\ 
$\hat{\Sigma}_{\sigma\{\leftarrow\}^\ell}$ are the smallest resp.\ largest element in 
the corresponding prefix-order, as illustrated in \cref{fig:prefixorder}. 
Obviously, both $\lambda, \rho \in B_{2^{-(k+1)}}(\gamma)$, where 
$B_{2^{-(k+1)}}(\gamma)$ denotes the open ball with radius $2^{-(k+1)}$ 
around $\gamma$ for the distance function $\dnonunif$. 

It follows from
the validity property (V) that the right and left corner executions $\gamma_r$
resp.\ $\gamma_l$ on the bottom edge in \cref{fig:complete}, which are based on the 
graph sequences $\G_r=(\leftarrow)^\omega$ resp.\ $\G_l=(\rightarrow)^\omega$,
satisfy $\gamma_r \in \hat{\Sigma}_1$ and $\gamma_l\in \hat{\Sigma}_0$. In every 
prefix-order, i.e., for arbitrary $(k+1)$-prefixes, $k \geq 0$, there must hence be at least
one prefix $\sigma$, of some execution $\gamma$, such that the corresponding LL border executions 
$\lambda$, $\rho$ satisfy $\lambda\in \hat{\Sigma}_0$ and $\rho \in \hat{\Sigma}_1$.
After all, somewhere, the decision value must flip from 0 to 1 when going
forward in the prefix-order. Again, this must also hold true in the DLL model, i.e.,
$\lambda\in \Sigma_0$ and $\rho \in \Sigma_1$, see \cref{fig:DLL} for an illustration
for $k=1$.

\begin{figure}[ht]
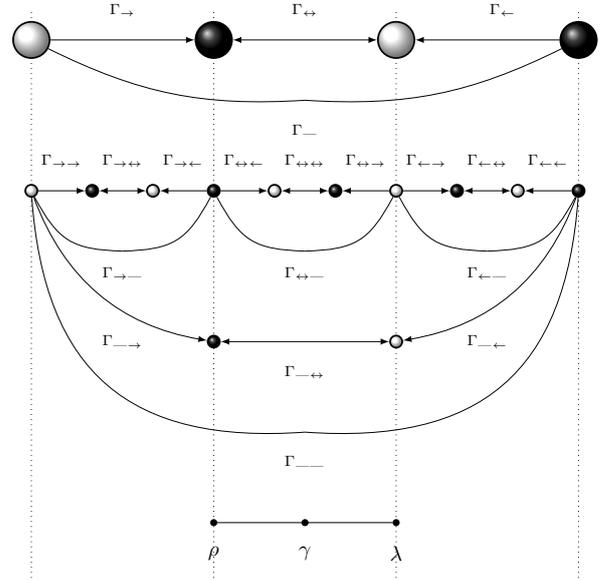

\begin{center}
\scalebox{0.8}{
\tikzfig{n2allohneDLL}
}
\end{center}
\caption{Representation of all 1-prefixes and 2-prefixes in the DLL model, for a fixed input assignment.}
\label{fig:DLL}
\end{figure}

For any $m\geq0$, we now define the admissible parametrized DLL execution $\lambda(m)$, which is
based on the graph sequence $\sigma\{\noarrow\}^m\{\rightarrow\}^\omega$, and $\rho(m)$, which
is based on $\sigma\{\noarrow\}^m\{\leftarrow\}^\omega$. Besides $\lambda(0)=\lambda$ and
$\rho(0)=\rho$, they satisfy 
\begin{align}
   d_l(\lambda(m),\lambda)&=0\label{eq:dlambda},\\
   d_r(\rho(m),\rho)&=0\label{eq:drho}.
\end{align}
Moreover, since $\lambda(m)$ and $\rho(m)$ start from the same prefix $\sigma$,
  \begin{equation}
    \lambda(m)|_{k+m} \sim_{l,r} \rho(m)|_{k+m}.
%\quad\Rightarrow\quad \lambda(m)|_{k+1} \sim_{l,r} \rho(m)|_{k+1}\mbox{ for $m\geq 1$}
    \label{eq:lamrho}
     \end{equation}
%Combining \cref{eq:dlambda}, \cref{eq:drho} and \cref{eq:lamrho},
%we obtain for $m\geq 1$
%\begin{align}
% \lambda|_{k+1} &\sim_l \lambda(m)|_{k+1} \sim_l \rho(m)|_{k+1} \sim_l \rho|_{k+1}\nonumber\\
% \lambda|_{k+1} &\sim_r \lambda(m)|_{k+1} \sim_r \rho(m)|_{k+1} \sim_r \rho|_{k+1}\nonumber.
%    \end{align}
%It follows that the respective output values after round $k+1$ satisfy 
  %$O_l\bigl(\lambda(m)|_{k+1}\bigr)=O_l\bigl(\rho(m)|_{k+1}\bigr)=O_l\bigl(\lambda|_{k+1}\bigr)$ and
  %$O_r\bigl(\lambda(m)|_{k+1}\bigr)=O_r\bigl(\rho(m)|_{k+1}\bigr)=O_r\bigl(\rho|_{k+1}\bigr)$.
  Due to \cref{eq:lamrho}, we find that, for any $m\geq 0$, the respective
  output values after round $k+m$ satisfy
\begin{align}
  O_l\bigl(\lambda(m)|_{k+m}\bigr)&=O_l\bigl(\rho(m)|_{k+m}\bigr)\label{eq:lamrhosamel},\\
    O_r\bigl(\lambda(m)|_{k+m}\bigr)&=O_r\bigl(\rho(m)|_{k+m}\bigr)\label{eq:lamrhosamer}.
\end{align}
However, since $\lambda\in \Sigma_0$ and $d_l(\lambda,\lambda(m))=0$ by \cref{eq:dlambda}, 
there is a round $k_l\geq k$ after which $O_l=0$ in both $\lambda$ and $\lambda(m)$. Analogously, since
$\rho\in\Sigma_1$, there is a round $k_r\geq k$ after which $O_r=1$ in both $\rho$ and $\rho(m)$. For  $m \geq \max\{k_l,k_r\}-k$,
\cref{eq:lamrhosamel} resp.\ \cref{eq:lamrhosamer} ensures
that $O_l=0$ also in $\rho(m)$ resp.\ $O_r=1$ also in $\lambda(m)$.
Consequently, the Stabilizing Agreement property (SA) is violated both in
$\lambda(m)$ and $\rho(m)$ in round $k+m$, so the executions
cannot have stabilized by round $k+m$. In the notation of \cite{FR24:arxiv},
any such execution has a \emph{conflicting} prefix. Since both $k\geq 0$
and $m \geq \max\{k_l,k_r\}-k$ can be chosen arbitrarily large, there
must be at least one non-stabilizing execution $\gamma$ that cannot be
uniquely assigned to either $\Sigma_0$ or $\Sigma_1$; the stipulated
stabilizing consensus algorithm $\A$ hence cannot be correct.

\subsubsection{One-message lossy link possibility}
\label{sec:onemessage}

As our final example for demonstrating the explanatory power of our 
topological approach, we consider some apparently minor strengthening 
of the DLL model, which nevertheless makes stabilizing consensus
trivially solvable. We call it the \emph{one-message lossy link} model, 
where the set of allowed graph sequences is the union of
$\bigl\{\{\noarrow\}^*\rightarrow\{\noarrow\}^\omega\bigr\}$ and
$\bigl\{\{\noarrow\}^*\leftarrow\{\noarrow\}^\omega\bigr\}$. Obviously, in
any admissible execution, exactly one message (either $\rightarrow$ or
$\leftarrow$) is successfully received.

Despite the infinite suffix $\{\noarrow\}^\omega$, it is trivial to 
solve stabilizing consensus in this model: Initially,
$l$ and $r$ set $O_l=I_l$ and $O_r=I_r$ and keep that value, unless
on of them, $x \in \{l,r\}$, receives a message containing the
other's current output value $O$. If so, $x$ sets $O_x=O$ and keeps
that value. By contrast, it will turn out that terminating consensus 
cannot be solved in this model.

Our topological characterization easily reveals what happens here. It
suffices to characterize the decision sets $\Sigma_0$ and $\Sigma_1$
imposed by this algorithm, in the restricted setting where 
$I_l=0$ and $I_r=1$ are the same in all admissible
executions; therefore, we can consider admissible executions and graph
sequences as being equivalent. Generalizing our considerations to proper
stabilizing consensus algorithms is straightforward. 

For $i \geq 0$, let $\eta=\{\noarrow\}^\omega$, 
$\alpha_i = \{\noarrow\}^i\rightarrow\{\noarrow\}^\omega$, and
$\beta_i=\{\noarrow\}^i\leftarrow\{\noarrow\}^\omega$. Then, $\Sigma_0=\{\alpha_i\mid i \geq 1\}$ and $\Sigma_1=\{\beta_i\mid i \geq 1\}$. Moreover,
for any $i\geq 1, k\geq 1$, we find $d_l(\alpha_i,\alpha_k)=0$ and
$d_l(\alpha_i,\eta)=0$ (since $l$ never gets any message in any
of these executions), as well as $d_r(\beta_i,\beta_k)=0$ and
$d_r(\beta_i,\eta)=0$. On the other hand, $d_r(\alpha_i,\alpha_{i+k})=2^{-(i+1)}$
and $d_r(\alpha_i,\eta)=2^{-(i+1)}$, as well as $d_r(\beta_i,\beta_k)=2^{-(i+1)}$ and $d_l(\beta_i,\eta)=2^{-(i+1)}$. Since the triangle inequality reveals
$d_l(\alpha_i,\beta_k) \leq
d_l(\alpha_i,\eta) + d_l(\eta,\beta_k) = d_l(\eta,\beta_k)=2^{-(k+1)}$ and
$d_r(\alpha_i,\beta_k) \leq
d_r(\alpha_i,\eta) + d_r(\eta,\beta_k) = d_r(\eta,\alpha_k)=2^{-(k+1)}$,
it turns out that any $\alpha_i$ is a limit point of the sequence
$\beta_1,\beta_2, \dots$ in the $l$-view topology. In addition,
trivially, any $\alpha_i$ is also a limit point of $\alpha_1,\alpha_2, \dots$ in the $l$-view topology.
Analogously,  any $\beta_i$ is a limit point of both the sequence
$\alpha_1,\alpha_2, \dots$ and $\beta_1,\beta_2,\dots$ in the $r$-view
topology. Therefore, \cref{cor:consensusseparation} implies that terminating
consensus is impossible in this model.

As a consequence, in the non-uniform topology, the diameter
(see \cref{def:diameterset}) of both $\Sigma_0$ and $\Sigma_1$ is 0.
Moreover, $\bdin \Sigma_0 = \Sigma_0$ and $\bdin \Sigma_1 = \Sigma_1$,
whereas $\bdtwo \bdin \Sigma_0 = \Sigma_1$ and $\bdtwo \bdin \Sigma_1 =
\Sigma_0$ since both $\Sigma_0$ and $\Sigma_1$ are dense in $\Sigma$,
i.e., $\cl{\Sigma}_0 =  \cl{\Sigma}_1 = \Sigma$. It follows that 
the conditions for \cref{thm:Schar:nonunif} are satisfied, such
that stabilizing consensus is indeed solvable in this model. Note that
case (c) in the proof of \cref{thm:Schar:nonunif}, i.e., \cref{eq:SCcase3}, 
cannot occur here at all, since $\bdtwo \bdin \Sigma_0 \cap \bdtwo \bdin \Sigma_1 = \emptyset$.

\section{Conclusions}
\label{sec:conclusions}

We provided a complete characterization of the solvability/impossibility of 
deterministic stabilizing consensus in any computing model with benign
process and communication faults using point-set topology. Using the 
topologies for infinite executions introduced in \cite{NSW24:JACM} 
for terminating consensus, we proved that semi-open
decision sets and semi-continuous decision functions as introduced 
in \cite{Lev63} are the appropriate means for this characterization. 
We also showed that multi-valued stabilizing consensus with weak and strong validity 
are equivalent, like for terminating consensus. We demonstrated the power of
our topological characterization by easily applying it to (variants of) all the 
known possibility and impossibility results for stabilizing consensus known 
so far.

\backmatter

\bmhead{Acknowledgements}

This work has been supported by the Austrian Science Fund (FWF) under project ByzDEL ((10.55776/P33600) and DMAC ((10.55776/P32431).

%\begin{appendices}

%\end{appendices}

%%===========================================================================================%%
%% If you are submitting to one of the Nature Portfolio journals, using the eJP submission   %%
%% system, please include the references within the manuscript file itself. You may do this  %%
%% by copying the reference list from your .bbl file, paste it into the main manuscript .tex %%
%% file, and delete the associated \verb+\bibliography+ commands.                            %%
%%===========================================================================================%%

\bibliography{localbib,lit}% common bib file
%% if required, the content of .bbl file can be included here once bbl is generated
%%\input sn-article.bbl

\end{document}